\journalname{JOTA}
\begin{document}

\title{Efficient Use of Quantum Linear System Algorithms in Interior Point Methods for Linear Optimization}

%\subtitle{Using  the  LaTex Template}

\author{Mohammadhossein Mohammadisiahroudi, Ramin Fakhimi, and Tam\'as Terlaky}

\institute{Mohammadhossein Mohammadisiahroudi,  Corresponding author  \at
              Lehigh University \\
              Bethlehem, PA 18015, USA\\
              mom219@Lehigh.edu
            \and
           Ramin Fakhimi \at
             Lehigh University \\
              Bethlehem, PA 18015, USA\\
              fakhimi@Lehigh.edu
           \and
           Tam\'as Terlaky \at
             Lehigh University \\
              Bethlehem, PA 18015, USA\\
              terlaky@Lehigh.edu
}

\date{Received: date / Accepted: date}

\maketitle

\begin{abstract}
Quantum computing has attracted significant interest in the optimization community because it potentially can solve classes of optimization problems faster than conventional supercomputers. 
Several researchers proposed quantum computing methods, especially Quantum Interior Point Methods (QIPMs), to solve convex optimization problems, such as Linear Optimization, Semidefinite Optimization, and Second-order Cone Optimization problems. 
Most of them have applied a Quantum Linear System Algorithm at each iteration to compute a Newton step.
However, using quantum linear solvers in QIPMs comes with many challenges, such as having ill-conditioned systems and the considerable error of quantum solvers.
This paper investigates how one can efficiently use quantum linear solvers in QIPMs. 
Accordingly, an Inexact Infeasible Quantum Interior Point Method is developed to solve linear optimization problems. 
We also discuss how we can get an exact solution by Iterative Refinement without excessive time of quantum solvers. 
Finally, computational results with a QISKIT implementation of our QIPM using quantum simulators are analyzed.

\end{abstract}
\keywords{Quantum Interior Point Method \and Linear Optimization \and Quantum Linear System Algorithm \and Iterative Refinement}
\subclass{90C51 \and  90C05 \and 81P68}

%=================================================
\section{Introduction}
%=================================================

Linear Optimization (LO) is defined as optimizing a linear function over a set of linear equality and inequality constraints.
Several algorithms were developed to solve LO problems \citep{dantzig2016linear, Khachiyan1979_polynomial}. 
\textcite{Karmarkar1984_New} developed the foundations of polynomial time and practically efficient Interior Point Methods (IPMs) for solving LO problems.
Since Karmarkar's publication, a large class of theoretically and practically efficient IPMs were developed, see, e.g., \citep{  Terlaky2013_Interior,Roos2005_Interior,Wright1997_Primal}.
Contrary to the Simplex method, a feasible IPM reaches an optimal solution by traversing through the interior of the feasible region \citep{Roos2005_Interior}.

Contemporary IPMs reach an optimal solution by starting from an interior point and following the central path \citep{Roos2005_Interior}.
The most efficient IPMs are primal-dual methods, meaning they strive to satisfy the optimality conditions while maintaining strict primal and dual feasibility. 
It should be noted that basic IPMs need an initial feasible interior point. 
Some current commercial solvers apply Feasible IPMs (F-IPMs) based on the self-dual embedding formulation of LO problems, e.g., MOSEK, while Infeasible Interior Point Methods (I-IPMs) can start with an infeasible but positive solution.
Theoretical analysis shows the best iteration complexity of F-IPMs is $\Ocal(\sqrt{n}L)$, where $n$ is the number of variables, and $L$ is the binary length of the input data.
On the other hand, the best iteration complexity of I-IPMs is $\Ocal(nL)$. 
In practice, the performance of both feasible and infeasible IPM are similar \citep{Wright1997_Primal}. 

A linear equation system is solved at each iteration of IPMs to calculate a Newton direction. 
There are three choices for the linear equation system: (i) Full Newton System,  (ii) Augmented System, and (iii) Normal Equation System (NES).
In classical computers, a general approach is applying Cholesky Factorization to solve the NES, because it has a symmetric positive definite coefficient matrix. 
A partial update technique improved the complexity of solving the system in each iteration. This approach leads to the best total complexity  of $\Ocal(n^3L)$ arithmetic operations for solving LO problems.
However, this is not always efficient in practice \cite{Roos2005_Interior}. 
Since Cholesky Factorization requires $\mathcal{O}(n^3)$ arithmetic operations for large dense matrices, several researchers studied inexact solution methodologies for solving Newton systems. 
\textcite{Bellavia1998_Inexact} proved the convergence of an Inexact I-IPM (II-IPM) for general convex optimization problems. 
Mizuno and his colleagues studied the convergence of II-IPMs \citep{ MizunoJarre1999_Global, Freund1999_Convergence}. 
\textcite{Korzak2000_Convergence} and \textcite{Baryamureeba2006_convergence} proved the convergence of II-IPM proposed by \textcite{Kojima1993_primal}.
\textcite{Korzak2000_Convergence} also showed that the total time complexity of his algorithm is polynomial. 
\textcite{Gondzio2009_Convergence}, and \textcite{ Monteiro2003_Convergence} investigated the use of Preconditioned Conjugate Gradient (PCG) methods in II-IPMs. 
\textcite{Zhou2004_Polynomiality} also proved the convergence of  an II-IPM for SDO problems.
The best iteration complexity of II-IPMs is $\Ocal(n^2L)$ which is $\Theta(n^{1.5})$ weaker than the best iteration complexity of exact  F-IPMs.

Quantum computers have recently emerged as a powerful alternative to classic computers~\citep{Childs2017_Quantum}.
Starting from Deutsch's Problem, a series of problems and algorithms have demonstrated theoretically exponential speedup compared to their classical counterparts \citep{Deutsch1985, Deutsch1992, Simon1997}. 
One of the promising quantum algorithms is the HHL method \citep{Harrow2009_Quantum} to solve linear system problems. 
The HHL method showed quantum advantage with respect to the dimension compared to classical linear systems solutions algorithm. 
However, this method has unfavorable dependence on the condition number, the sparsity of the matrix, and the inverse precision of the solution. 
Several researchers attempted to improve the performance of Quantum Linear System Algorithms (QLSAs), \citep{Wossnig2018_Quantum, Childs2017_Quantum}.
This paper explores an efficient use of the QLSAs for solving the Newton system at each iteration of IPMs. 

To investigate quantum speedup for continuous optimization, Brand{\~a}o and Svore \citep{brandao2017quantumFOCS} proposed a non-IPM quantum algorithm based on the Multiplicative Weight Update Method (MWUM) of \citep{arora2012multiplicative} to solve Semidefinite Optimization (SDO) problems. 
After this papers, many improved versions of Quantum MWUMs (QMWUMs) are proposed for SDO \citep{brandao2017quantum, van2018improvements}, and LO \citep{van2019quantum}. 
The advantage of QMWUMs is that their complexity has linear dependence on the dimension for SDO and sublinear dependence for LO.
The major issue with QMWUMs is that they are highly dependent on inverse precision and an upper bound for the norm of the optimal solution, which are exponentially large for~LO.

In another direction, a few studies were proposing Quantum IPMs (QIPMs).
Kerenidis and his colleagues presented a series of papers on QIPMs for solving LO, SDO \citep{kerenidisParkas2020_quantum}, and Second-order Cone Optimization (SOCO) problems \citep{Kerenidis2021_Quantum}. 
Note that for both SDO and SOCO, they missed some important parts, such as the need for symmetrization/scaling. 
So, their QIPMs are invalid for their conic optimization problems.
\textcite{kerenidisParkas2020_quantum} claimed that their algorithm has  $\Ocal(\frac{n^{2}}{\epsilon^{2}}\kappa^3 \log(\frac{1}{\zeta}))$ complexity for LO problems where $\zeta$ is the final optimality gap, $\epsilon$ is the precision of the QLSA.
Observe that, to reach final precision $\zeta$, the Newton system needs to be solved with precision $\epsilon < \zeta$. 
Further, $\kappa$ is an upper bound for the condition number of the Newton systems at each iteration.
They used block encoding to construct the Newton system and Quantum Tomography Algorithm (QTA) to extract the classical solution.
\textcite{Casares2020_quantum} also provided a hybrid predictor-corrector QIPM scheme to solve LO problems using the well-known predictor-corrector method proposed by \textcite{Ye1994_iteration}. 
In each step, it uses the QLSA proposed by \textcite{Chakraborty2018_power} to solve the Newton systems. 
They claimed that the complexity of their method is $\Ocal(L\sqrt{n}(n+m)\overline{\|M\|}_F \kappa^2 \epsilon^{-2})$, where $m$ is the number of constraints, and $\overline{\|M\|}_F$ is an upper bound to the Frobenius norm of the coefficient matrix in the Newton system at each iteration. 
The authors claimed a quantum speedup with respect to the dimension $n$ compared to classical algorithms.
However, they overlooked some elements in their time complexity. 
Thus, the actual complexity has higher dependence on $n$ and the required precision.

In both mentioned QIPMs, an exact F-IPM framework was used,  regardless of the inherent inexactness of QLSAs. 
So, the proposed time complexities are not attainable, since Inexact IPMs have higher iteration complexity. 
Augustino et al. \citep{augustino2021quantum} addressed this issue and proposed two convergent QIPMs for SDO. 
Their complexity shows polynomial speed-up with respect to dimension, but similar to former QIPMs, their complexity suffers from linear dependence on condition number of the Newton system and precision.
A major reason is that QLSAs' time complexities depend on the condition number of the linear system, and it is linear in inverse precision and dimension to extract an accurate classical solution.
In IPMs, the condition number of the Newton system typically goes to infinity as the algorithm approaches an optimal solution \citep{Roos2005_Interior}.
It is also worth mentioning that $\epsilon$, the required precision for solution of the Newton system, needs to be significantly smaller than the final precision $\zeta$. 
Consequently, these complexity bounds are not polynomial in the classical sense.
This paper explores an efficient use of the QLSAs for solving the Newton system at each iteration of QIPMs. 
We propose an II-QIPM to find an exact solution. 
We also employ an iterative refinement scheme to avoid exponential complexity in finding an exact optimal solution.

This paper is structured as follows.
Section~\ref{sec: Quantum Linear Algebra} discusses the performance of existing QLSAs.
Section~\ref{sec: Problem Definition} introduces the LO problem and its characteristics.
In Section~\ref{sec: II-QIPM}, we present an II-QIPM to solve LO problems, in which a QLSA is used for solving the NES.
We also analyze the complexity of the proposed II-QIPM. 
In Section~\ref{sec: iterative refinement}, we employ an Iterative Refinement scheme to find an exact optimal solution of a LO problem without excessive time of QLSAs.
%
% Section~\ref{sec: systems} discusses how the results change using different Newton systems. 
%
Section~\ref{sec: Numerical Experiments} presents the first implementation of the proposed II-QIPM with Iterative Refinement (IR-II-QIPM) and evaluates the algorithms through computational experiments.
Finally, conclusions are presented in Section~\ref{sec: conclusion}.

%=================================================
\section{Quantum Linear Algebra} \label{sec: Quantum Linear Algebra}
%=================================================

This section reviews the use of Quantum algorithms to solve Linear System Problems (LSP). 
%------------------------------------------
\begin{definition}[LSP]\label{def: LSP}
The Linear System Problem: Find a vector $z\in \mathbb{R}^{p}$ such that it satisfies equation $Mz=\sigma$ with coefficient matrix $M\in \mathbb{R}^{p \times p}$ and right-hand side (RHS) vector $\sigma\in \mathbb{R}^p$.
\end{definition}
%------------------------------------------

In the rest of the paper, $\|M\|=\|M\|_2$ is the 2-norm of matrix $M$, and $\|M\|_F$
is the Frobenius norm of $M$.
We sometimes use $\tilde{\Ocal}$ which suppresses the polylogarithmic factors in the "Big-O" notation.
The quantities of the polylogarithmic factors are indicated as subscripts of $\tilde{\Ocal}$.
We use $\Rmbb^n$ for the set of $n$-dimensional vectors of real numbers and $\Cmbb^n$ for the set of $n$-dimensional vectors of complex numbers.
It should be mentioned that the complexity of a classical algorithm when solving LSP means the number of arithmetic operations and the complexity of a quantum algorithm is the number of quantum gates. 

The LSP can have either one, many, or no solutions. 
A basic approach for solving an LSP is Gaussian elimination, or LU factorization,  with $\Ocal(p^3)$ arithmetic operations. 
If $M$ is a square symmetric positive semidefinite (PSD) matrix, we can also apply Cholesky factorization with $\Ocal(p^3)$ arithmetic operations. 
The best complexity for an iterative algorithm with respect to $p$ is $\Ocal(pd\sqrt{\kappa}\log(1/\epsilon))$ arithmetic operations for the Conjugate Gradient method solving systems with symetric PSD matrices, where $d$ is the maximum number of non-zero elements in any row or column of $M$, $\kappa$ is the condition number of $M$, and $\epsilon$ is the error allowed. If matrix $M$ is just symmetric, one can use Lanczos algorithm with higher complexity.
For an LSP with general square matrix $M$, the best iterative method is the GMRES algorithm, which has $\Ocal(n^3)$ worst-case complexity \citep{saad2003iterative}. 
To sum up, LSPs in general form are solvable in polynomial time in classical computing setting, but the worst-case complexity is $\Ocal(n^3)$ for either matrix decomposition methods or iterative methods.

Before discussing QLSAs, we should mention that the $\ket{z}$ notation represent the quantum state corresponding to the unit classical vector $z$. 
We denote the basis state $\ket{i}$, which is a column vector with dimension $p$, one in coordinate $i$ and zero in other coordinates \cite{Childs2017_Quantum}.
QLSAs have different approaches, while all of them are solving Quantum Linear System Problems (QLSPs) defined as follows.
%
%------------------------------------------
\begin{definition}[QLSP]\label{def: QLSP}
Let $M\in \mathbb{C}^{p \times p}$ be a Hermitian matrix with  $\|M\| = 1$, $\sigma \in \mathbb{C}^p$, and $z:= M^{-1} \sigma$. We define quantum states
\begin{equation*}
    \ket{\sigma} = \frac{\sum_{i = 1}^{p}\sigma_i \ket{i}}{\|\sum_{i = 1}^{p}\sigma_i \ket{i}\|}
    \quad \text{ and }\quad
    \ket{z} = \frac{\sum_{i = 1}^{p}z_i \ket{i}}{\|\sum_{i = 1}^{p}z_i \ket{i}\|}.
\end{equation*}
For target precision $0<\epsilon_{QLSP}$, the goal is to find $\ket{\ztilde}$ such that $\|\ket{\ztilde}-\ket{z}\|\leq\epsilon_{QLSP}$, succeeding with probability $\Omega(1)$.
\end{definition}
%------------------------------------------

Based on Definition~\ref{def: LSP} and Definition~\ref{def: QLSP}, QLSP is a different form of the LSP.
It should be noted that at each iteration of QIPMs, instead of an LSP, we need to use a QLSA to solve a QLSP. 
Thus, we need to translate LSP to QLSP, solve the QLSP by QLSA and extract the solution by a QTA. 
Here, we analyze the details and costs of the process of translating LSPs to QLSPs, encoding in the quantum setting, solving them with a QLSA, and extracting classical solutions with a QTA  as follows. 
%
%------------------------------------------
\begin{enumerate}[label=(\roman*)]
\item \underline{Model of Computation}: the first important step is determining how to encode the input data in the quantum setting. There are two major input models.
One is the sparse-access model which is used in the HHL algorithm \cite{Harrow2009_Quantum} and then in other QLSAs \cite{Childs2017_Quantum, vazquez2022enhancing}. 
This is a quantum version of classical sparse matrix computation, and we assume access to unitaries that calculate the index of the $l$th non-zero element of the $k$th row of a matrix $M$ when given $(k,l)$ as input. 
A different input model, now known as the quantum operator input model, is proposed in \textcite{low2019hamiltonian}, which is based on the idea of block-encoded matrices. 
In this input model, one has access to unitaries that store the coefficient matrix:
$$ U = \begin{pmatrix} M/\alpha & \cdot \\
\cdot & \cdot \end{pmatrix},$$
where $\alpha \geq \| M \|$ is a normalization factor chosen to ensure that $U$ has norm at most 1.
\textcite{Chakraborty2018_power} showed that this quantum operator input model is more efficient than the sparse-access model and oracles to encode input data using block-encoding has favorable complexity compared to the sparse-access model.
On the other hand, most of the block-encoding approaches use Quantum Random Access Memory (QRAM). 
However, the sparse-access model can be implemented in the standard gate-based quantum circuit model. 
Despite efficient encoding procedures, the quantum operator input model can not be implemented with current quantum computers since there is no physical implementation of QRAM. 
In our analysis, we assume that the data is stored in QRAM, and we use the quantum operator model by \textcite{Chakraborty2018_power} for the QLSA, which enjoys the best complexity to date.
Using the QRAM structure, \textcite{kerenidis2016quantum} showed that one can implement $\epsilon$-approximate block-encoding of $M$ with $\Ocal(\text{polylog}(\frac{p}{\epsilon}))$ complexity.
Further, given $M$ in the sparse-access
input model, there is an $\epsilon$-approximate block-encoding of $M$ that can be implemented in complexity $\Ocal(\text{polylog}(\frac{p\alpha}{\epsilon}))$ \cite{Chakraborty2018_power}.
Thus, our results also apply to the sparse-access input model if we have the data in that form.
Using these results and assuming access to QRAM, \textcite{Chakraborty2018_power} proposed a QLSA, in which they construct state $\ket{\sigma}$, build and implement a block-encoding of matrix $M$ with $\Ocal(\text{polylog}(\frac{p}{\epsilon}))$ complexity.
Since we are interested in using QLSAs in a hybrid approach, we need to consider the cost of storing data in a classical form to QRAM which is $\Ocal(p^2)$ for a fully dense matrix \cite{Chakraborty2018_power}. 
This cost should be paid once, and in the next section, it is shown that this cost will be dominated by the classical operations in each iteration of QIPMs.
\item \underline{Translating LSP to QLSP}: Based on the definition of QLSP, the coefficient matrix of the system must be Hermitian. 
    If $M$ is not Hermitian, one can construct $\bar{M}\bar{z} =\bar{\sigma}$, where
    \begin{align*}
        \bar{M} = \bbmatrix 0 & M\\ M^\dagger & 0\ebmatrix, \quad
        \bar{\sigma} = \bpmatrix \sigma \\ 0\epmatrix,
    \end{align*}
    and find the vector $\bar{z} = \bpmatrix 0 \\ z\epmatrix$, where $M^\dagger$ denotes the conjugate transpose of $M$. 
    The size of the problem increases from $p$ to $2p$. QLSP assumes $\|M\| = 1$. In the structure of block encoding, we address this normalization, but for sparse encoding, we need to normalize the system $\bar{M}\bar{z} = \bar{\sigma}$ where 
    \begin{align*}
        \bar{M} =\frac{M}{\|M\|},
        \quad 
        \bar{\sigma} = \frac{\sigma}{\|M\|},
        \quad \text{and} \quad
        \bar{z}=z.
    \end{align*}  
    Although we are using block-encoding, we scale the matrix in advance and let $\alpha=1$ in the block encoding. 
    This scaling will affect the precision and complexity, but in this way, we can have better complexity analysis since, in QIPMs, we need to bound the residual.
    We also have a similar scaling in the definitions of states $\ket{z}$ and $\ket{\sigma}$. 
    The corresponding QLSP problem is $\Mbar \ket{z}=\ket{\bar{\sigma}}$, and we can find an inexact solution $\ket{\ztilde}$ with $\|\ket{\ztilde}-\ket{z}\|\leq \epsilon_{QLSP}$. Since $\|\ket{\ztilde}\|=1$, we need to scale back the solution with $\ztilde=\|\bar{\sigma}\|\ket{\ztilde}$.
    This scaling affects the target precision. 
    In IPMs, we are interested in finding a solution where the residual is bounded as $\|\sigma-M\ztilde\|\leq \epsilon_{LSP}$. Thus, we have
    \begin{align*}
        \|\sigma-M\ztilde\|&\leq\|M\|\|\bar{\sigma}-\Mbar \ztilde\|\\
        &\leq \|M\|\|\bar{\sigma}\|\|\ket{\bar{\sigma}}-\Mbar\ket{ \ztilde}\|\\
        &\leq \|M\|\frac{\|\sigma\|}{\|M\|}\|\Mbar \ket{z}-\Mbar\ket{ \ztilde}\|\\
        &\leq \|\sigma\|\|\ket{z}-\ket{ \ztilde}\|\\
        &\leq \|\sigma\|\epsilon_{QLSP}.
    \end{align*}
    Thus, we must set the target error of the QLSP as $\epsilon_{QLSP}=  \epsilon_{LSP}/\|\sigma\|.$
 
    \item \underline{QLSA}: After preprocessing and encoding, we can apply the QLSA to solve the QLSP. The HHL algorithm~\citep{Harrow2009_Quantum} solves QLSPs with $\tilde{\Ocal}_{p}(\frac{d^2\kappa^2}{  \epsilon_{QLSP}})$ complexity. 
Several researchers attempted to improve the performance of the HHL algorithm. 
As the first attempt, Amplitude Amplification decreases the dependence on $\kappa^2$ to $\kappa$ \citep{ambainis2012variable}. 
\textcite{Wossnig2018_Quantum} proposed a QLSA  with $\tilde{\Ocal}_{p}(\|M\|_F\frac{\kappa}{  \epsilon_{QLSP}})$ complexity by using the Quantum Singular Value Estimation.
In another direction, \textcite{Childs2017_Quantum} developed two QLSAs with exponentially better dependence on error with $\tilde{\Ocal}_{p,\kappa,\frac{1}{\epsilon_{QLSP}}}(d\kappa)$ complexity. 
They proposed two approaches using Fourier and Chebyshev series representations.
The best QLSA with respect to complexity uses block encoding and QRAM with $\tilde{\Ocal}_{p,\frac{1}{\epsilon_{QLSP}}}(\alpha\kappa)$ complexity \citep{Chakraborty2018_power}. The normalization factor $\alpha$ is equal to $1$ for our approach since we apply the QLSA to QLSP, which is normalized in advance. 
The details of these methods are out of the scope of this paper. For further details, see \citep{Dervovic2018_Quantum}. 

    \item \underline{QTA}: QLSAs provide a quantum state proportional to the solution. 
    We cannot extract the classical solution by a single measurement. 
    We need Quantum Tomography Algorithms (QTAs) to extract the classical solution. 
    There are several papers improving QTAs, see, e.g., \citep{Kerenidis2021_Quantum}. We used the best QTA by \cite{van2022quantum}, with complexity $\Ocal(\frac{p\varrho}{\epsilon_{QTA}})$, where $\varrho$ is bound for the norm of the solution. In our approach, $\varrho$ is equal to $1$ since we applied QTA to QLSP and scaled back the solution after QTA.  Since the error is additive, we may choose $$\epsilon_{QTA}=\epsilon_{QLSA}=\frac{\epsilon_{LSP}}{2\|\sigma\|}.$$
\end{enumerate}
%------------------------------------------

Table~\ref{tab: LSA-QTA} presents the complexity of different classical and quantum algorithms for solving an LSP. 
Here, the complexity of a QTA  is considered in the complexity of QLSAs.
As shown in Table~\ref{tab: LSA-QTA}, although the complexity of solving LSP using QLSA+QTA will have a similar dependence on $p$ compared to Conjugate Gradient (CG), the complexity of CG method is valid for solving LSPs with PSD matrices and the complexity of QLSA+QTA in Table~\ref{tab: LSA-QTA} are valid for solving LSPs in a general form.
QLSAs have better dependence on dimension $p$ compared to factorization and elimination techniques. Generally, QLSA has worse dependence on $\kappa$, $\frac{1}{\epsilon_{LSP}}$, $\|M\|$, and $\|\sigma\|$.
In our QIPM, we solve a modified NES which has symmetric PSD matrix and both CG and QLSA+QTA are applicable.
Although CG has better complexity than QLSA+QTA based on Table~\ref{tab: LSA-QTA}, we use QLSA+QTA to solve the NES since it enables us to also build the Newton system in quantum computer which has polylogaritimic dependence on $n$. 
This approach enable us to avoid matrix-matrix product, which is inevitable in classical IPM using CG. 
Thus, the cost per iteration of IPM can leverage the quadratic dependence on dimension but the linear dependence on condition number and inverse precision can lead to exponential complexity.
The following sections will discuss how we can deal with errors and condition numbers when we use QLSA in IPMs.
\begin{table}[ht]
\centering
\caption{Complexity of solving an LSP.}\label{tab: LSA-QTA}
\begin{tabular}{|l|c|l|} \hline 
 \multicolumn{1}{|c|}{Algorithm}  &  Complexity & Comment \\ \hline
Factorization methods (e.g. LU) &$\Ocal\big(p^{3}\big)$
&\\  
Conjugate Gradient &$\Ocal\big(pd\sqrt{\kappa}\log(\frac{1}{\epsilon})\big)$
& For LSPs with PSD matrices.\\  
HHL \citep{Harrow2009_Quantum}  + QTA \citep{van2022quantum}                 &$\Ocal\Big(pd+p\frac{d^2\kappa^2\|\sigma\|^2}{\|M\|^2\epsilon^2}\text{polylog}(\frac{p\kappa}{\epsilon}) \Big)$
& Using sparse encoding.\\ 
QLSA \citep{Childs2017_Quantum} + QTA \citep{van2022quantum}          &  $\Ocal\Big(pd+p \frac{d\kappa\|\sigma\|}{\|M\|\epsilon}\text{polylog}(\frac{p\kappa}{\epsilon}) \Big)$
& Using sparse encoding.\\ 
QLSA \citep{Chakraborty2018_power} + QTA \citep{van2022quantum}         & $\Ocal\Big(p^2+p \frac{\kappa\|\sigma\|}{\epsilon}\text{polylog}(\frac{p}{\epsilon})\Big)$
& Using block encoding.\\ \hline
\end{tabular}
\end{table}
%
%=================================================
\section{The Linear Optimization Problem} \label{sec: Problem Definition}
%=================================================
%
Here we consider the standard form of Linear Optimization (LO) problem as follows:
\begin{equation} \label{eq:primal and dual problem}
    \text{(P)}\quad
    \begin{aligned}
    \min\  c^T&x \\
    \text{s.t. }
    Ax &= b, \\
    x &\geq 0,
    \end{aligned}
    \qquad \qquad \text{(D)} \quad
    \begin{aligned}
    \max \  b^Ty\ \ & \\
    \text{s.t. }
    A^Ty +&s = c,\\
    &s \geq 0,
    \end{aligned}
\end{equation}
where $A:m\times n$ matrix with rank$(A)=m$, vectors $y,b\in \mathbb{R}^m$, and $x,s,c\in \mathbb{R}^n$.
Problem (P) is called the primal problem and (D) is called the dual problem.
Due to the Strong Duality Theorem \cite{Roos2005_Interior}, all optimal solutions, if exist, belong to the set $\mathcal{PD}^*$, which is defined as
\begin{equation*}
\mathcal{PD}^*=\left\{(x,y,s)\in\mathbb{R}^{n+m+n}:\ Ax=b,\ A^Ty+s=c,\ x^Ts=0, \ (x,s)\geq0\right\}.
\end{equation*}
Now, we can define the optimal partition of the LO problem as 
\begin{align*}
    B&=\{j\in\{1,\dots,n\}:x^*_j>0 \text{ for some }(x^*,y^*, s^*)\in \mathcal{PD}^*\},\\
    N&=\{j\in\{1,\dots,n\}:s^*_j>0 \text{ for some }(x^*,y^*, s^*)\in \mathcal{PD}^*\}.
\end{align*}
It is shown that $B\cup N=\{1,\dots,n\}$, and $B\cap N=\varnothing$ \cite{Roos2005_Interior}.
\begin{assumption}
From now on, without loss of generality \citep{Roos2005_Interior}, we assume that the Interior Point Condition (IPC) holds, i.e., there exists a solution $(x,y,s)$ such that   
\begin{equation*}
    Ax=b,\ x>0,\ A^Ty+s=c\text{, and }s>0.
\end{equation*}
\end{assumption}
The IPC warranties \cite{Roos2005_Interior} that the optimal set $\mathcal{PD}^*$ is bounded, so there exists $\omega\geq 1$ such that 
\begin{equation*}
    \omega\geq \max\{\|(x^*,s^*)\|_{\infty}:(x^*,s^*)\in \mathcal{PD}^*\}.
\end{equation*}
The central path is the curve defined by
\begin{equation*}
\mathcal{CP}=\left\{(x,y,s)\in\mathbb{R}^{n+m+n}:\ Ax=b,\ A^Ty+s=c,\ x_is_i=\mu \text{ for }i \in\{1,\dots,n\}, \ x,s,\mu>0\right\}.
\end{equation*}
By the IPC, the central path is well-defined, and an interior feasible solution $(x,y,s)$ exists for all $\mu>0$. 
Now, for any $0<\gamma_1<1$ and $1\leq\gamma_2$, we define an infeasible neighborhood of the central path for $\mu=\frac{x^Ts}{n}$ as the following definition in \citep{Wright1997_Primal}. 
\begin{align*}
    \mathcal{N}(\gamma_1,\gamma_2)=\{(x,y,s)\in\mathbb{R}^{n+m+n}:\ &(x,s)>0,\ x_is_i\geq\gamma_1\mu \  \text{ for }\ i\in\{1,\dots,n\},\|(R_P,R_D)\|\leq\gamma_2\mu\},
\end{align*}
where $R_P=b-Ax$, and $R_D=c-A^Ty-s$.
Assuming that the input data is integral, we denote the binary length of the input data by 
$$L=mn+m+n+\sum_{i,j}\lceil\log(|a_{ij}|+1)\rceil+\sum_{i}\lceil\log(|c_{i}|+1)\rceil+\sum_{j}\lceil\log(|b_{j}|+1)\rceil.$$
The following lemma is a classical result first proved by \textcite{Khachiyan1979_polynomial}.
\begin{lemma}\label{lemma: L bound}
Let $(x^*,y^*,s^*)\in \mathcal{PD}^*$ be a basic solution. If $x_i^*>0$, then we have $x_i^*\geq 2^{-L}$. If $s_i^*>0$, then we have $s_i^*\geq 2^{-L}$.

\end{lemma}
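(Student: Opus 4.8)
The plan is to prove this classical bound by exhibiting $(x^*,y^*,s^*)$ as the solution of a square linear system derived from the optimal partition and then applying Cramer's rule together with the Hadamard inequality on the determinants.

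First I would fix a basic optimal solution $(x^*,y^*,s^*)\in\mathcal{PD}^*$. By definition of a basic solution there is an index set with $|B|=m$ (a basis) such that $x^*_j=0$ for $j\notin B$, and the columns $A_B$ of $A$ indexed by $B$ are linearly independent, so $A_B$ is an invertible $m\times m$ submatrix. The nonzero components of $x^*$ then satisfy $A_B x^*_B = b$, i.e.\ $x^*_B = A_B^{-1} b$. Similarly, the complementarity condition $x^*{}^Ts^*=0$ together with $(x^*,s^*)\ge 0$ forces $s^*_j=0$ for $j\in B$; the dual feasibility equations $A^Ty^*+s^*=c$ restricted to the rows in $B$ give $A_B^T y^* = c_B$, so $y^* = (A_B^T)^{-1}c_B = (A_B^{-1})^T c_B$, and then $s^*_N = c_N - A_N^T y^*$ for $N=\{1,\dots,n\}\setminus B$.

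Next, for a component $x^*_i>0$ with $i\in B$, Cramer's rule gives $x^*_i = \det(A_B^{(i)})/\det(A_B)$, where $A_B^{(i)}$ is $A_B$ with its $i$-th column replaced by $b$. Since the data are integral, both determinants are integers; because $A_B$ is invertible, $|\det(A_B)|\ge 1$, and because $x^*_i\neq 0$, the numerator is a nonzero integer, hence $|\det(A_B^{(i)})|\ge 1$. Therefore $x^*_i = |\det(A_B^{(i)})|/|\det(A_B)| \ge 1/|\det(A_B)|$. To finish, I would bound $|\det(A_B)|$ from above via Hadamard's inequality, $|\det(A_B)|\le \prod_{j\in B}\|A_{\cdot j}\|_2 \le \prod_{i,j}(|a_{ij}|+1)$ (the last step being a crude but convenient over-estimate), and observe that by the definition of $L$ we have $\sum_{i,j}\lceil\log(|a_{ij}|+1)\rceil \le L$, so $\prod_{i,j}(|a_{ij}|+1)\le 2^{L}$, giving $x^*_i\ge 2^{-L}$. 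The argument for $s^*_i>0$ is symmetric: $s^*_i = c_i - A_{\cdot i}^T y^*$ with $y^* = (A_B^{-1})^T c_B$, so clearing the common denominator $\det(A_B)$ expresses $s^*_i$ as a ratio of two integers whose denominator is again $|\det(A_B)|\le 2^L$; since $s^*_i\neq0$ the numerator is a nonzero integer, so $s^*_i\ge 2^{-L}$.

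The main obstacle is bookkeeping rather than a deep idea: one must be careful that the determinant bound $2^{L}$ absorbs all the data appearing in the expressions for $x^*$, $y^*$, and $s^*$ — in particular, for the dual side the numerator of $s^*_i$ after clearing denominators involves products of entries of $A$ with entries of $c$, so one needs the $\sum_i\lceil\log(|c_i|+1)\rceil$ term of $L$ as well, and should verify that Hadamard applied to the relevant $(m{+}1)\times(m{+}1)$ bordered matrix still yields a bound of the form $2^{L}$. With the definition of $L$ in the excerpt generously including $mn+m+n$ plus all the logarithmic terms, these estimates go through with room to spare, so the only real care needed is to route each quantity through the correct piece of $L$.
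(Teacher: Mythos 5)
The paper does not prove this lemma at all: it is stated as a classical result and attributed to Khachiyan, so there is no in-paper proof to compare against. Your argument is the standard (and correct) one for this bound: Cramer's rule over an integral basis matrix, integrality of the numerator and denominator determinants, and Hadamard's inequality to show $|\det(A_B)|\leq\prod_{i,j}(|a_{ij}|+1)\leq 2^{L}$, with the dual components handled by clearing the common denominator $\det(A_B)$. Two small remarks: for the lower bound on $s_i^*$ you in fact only need the \emph{denominator} bounded by $2^L$ (the numerator is a nonzero integer, hence at least $1$), so the worry about routing the $c$-entries through Hadamard on the bordered matrix is unnecessary for this direction; and the claim that complementarity forces $s_j^*=0$ for $j\in B$ requires $x_j^*>0$ on the basis, which can fail under degeneracy --- but since a basic dual solution is by definition $y^*=(A_B^T)^{-1}c_B$ with $s_B^*=0$, this is a definitional matter rather than a gap. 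The proof is sound.
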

\begin{theorem}[Chapter 3 in \cite{Wright1997_Primal}]\label{cor: exact sol}
An exact optimal solution can be obtained by a strongly polynomial rounding procedure when $(x,y,s)\in \mathcal{N}(\gamma_1,\gamma_2)$ and $\mu\leq2^{-2L}.$
\end{theorem}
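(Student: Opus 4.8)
The plan is to show that once we are inside the neighborhood $\mathcal{N}(\gamma_1,\gamma_2)$ with $\mu\leq 2^{-2L}$, the iterate $(x,y,s)$ is close enough to the optimal face that we can identify the optimal partition $(B,N)$ and then solve a small linear system to get an exact vertex solution. First I would use Lemma~\ref{lemma: L bound} together with the near-feasibility guaranteed by the neighborhood: since $\|(R_P,R_D)\|\leq \gamma_2\mu\leq \gamma_2 2^{-2L}$, the point $(x,y,s)$ is within $O(2^{-2L})$ (in an appropriate scaled sense) of the exact primal-dual feasible set, and since $\sum_i x_is_i = n\mu \leq n2^{-2L}$, the complementarity violation is also exponentially small. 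The key separation estimate is that for $j\in B$ the optimal value $x_j^*\geq 2^{-L}$ while $s_j^*=0$, and for $j\in N$ we have $s_j^*\geq 2^{-L}$ while $x_j^*=0$; combining this with $x_js_j\geq\gamma_1\mu$ and $x_js_j\leq n\mu$ for the current iterate forces, for each index $j$, exactly one of $x_j,s_j$ to be "large" (bounded below by something like $\gamma_1 2^{-L}/n$ up to the feasibility error) and the other to be "small" ($O(n2^{-2L}/(\gamma_1 2^{-L})) = O(2^{-L})$), which is strictly smaller. This gives an unambiguous rule to read off $B$ and $N$ from the signs/magnitudes of the current iterate.

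Next I would describe the rounding step itself. Having identified $B$ and $N$, we know the exact optimal solution must satisfy $x_N^*=0$, $s_B^*=0$, $A_B x_B^* = b$, and $A_B^T y^* = c_B$, $s_N^* = c_N - A_N^T y^*$. Because $\mathrm{rank}(A)=m$ and $B$ indexes an optimal basis (after possibly restricting to a basic subset of $B$ of size $m$, which is where the "basic solution" hypothesis of Lemma~\ref{lemma: L bound} is invoked), these are square nonsingular systems that can be solved exactly by Gaussian elimination over the rationals in strongly polynomial time, i.e. $O(m^3)$ arithmetic operations on numbers of bit-length $O(L)$. One then verifies $x_B^*\geq 0$ and $s_N^*\geq 0$ hold automatically (they must, since the true optimal solution satisfies them and the basis is uniquely determined); if the chosen basic subset is degenerate one picks any maximal independent subset of the columns indexed by $B$. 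Finally I would assemble $(x^*,y^*,s^*)$ and note that, by construction, it satisfies all constraints of $\mathcal{PD}^*$ exactly with $x^{*T}s^*=0$, hence is exactly optimal, and the whole procedure used only a polynomial number of exact arithmetic operations, so it is strongly polynomial.

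The main obstacle I anticipate is making the partition-identification argument fully rigorous in the presence of \emph{infeasibility}: the clean bound $x_j^*\geq 2^{-L}$ from Lemma~\ref{lemma: L bound} applies to exactly feasible optimal solutions, but our iterate is only approximately feasible, so I would need a perturbation lemma showing that a point with residual norm $\leq\gamma_2\mu$ and complementarity $\leq n\mu$ with $\mu\leq 2^{-2L}$ lies within $O(2^{-L})$ of $\mathcal{PD}^*$ in, say, the $\infty$-norm — this typically follows from a Hoffman-type error bound for polyhedra whose constant is itself controllable by $2^{O(L)}$, and then one must check that $2^{-2L}\cdot 2^{O(L)}$ is still dominated by the $2^{-L}$ gap, which is why the exponent $2L$ (rather than $L$) in the hypothesis is exactly what is needed. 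A secondary technical point is handling degeneracy, where $|B|>m$ and not every basic subset of $B$ yields the same $x_B^*$; the resolution is that \emph{any} maximal linearly independent subset works and the resulting solution is still in $\mathcal{PD}^*$. I would cite Chapter~3 of~\cite{Wright1997_Primal} for the detailed constants rather than reproduce them, since the statement is attributed there.
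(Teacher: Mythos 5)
First, note that the paper itself does not prove this statement: it is quoted from Wright's book, so there is no in-paper proof to compare against, and your sketch has to be judged against the standard textbook argument — which it essentially reproduces. Your route (use $\mu\le 2^{-2L}$, Lemma~\ref{lemma: L bound}, and the centrality condition $x_js_j\ge\gamma_1\mu$ to read off the optimal partition $(B,N)$ from the magnitudes of the iterate, then solve exactly on the identified face) is the right one, and you correctly flag the genuine subtlety of the infeasible setting: the orthogonality/duality-gap identities used to bound $x_j\le n\mu/s_j^*$ for $j\in N$ hold only for feasible iterates, and in $\mathcal{N}(\gamma_1,\gamma_2)$ one must carry the residual term $\|(R_P,R_D)\|\le\gamma_2\mu$ through the estimate (a Hoffman-type bound is an admissible substitute for what the cited source does directly). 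One bookkeeping point you wave at but should make explicit: the separation obtained is roughly $x_j\ge\gamma_1 2^{-L}/n$ for $j\in B$ versus $x_j\le n\mu 2^{L}\le n2^{-L}$ for $j\in N$, and $n2^{-L}$ is \emph{not} smaller than $\gamma_1 2^{-L}/n$ on its face; the gap only opens because $L\ge mn+m+n$ forces $2^{L}\gg n^2/\gamma_1$, so the polynomial factors are absorbed by the exponent $2L$.

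The one step that would fail as written is your claim that, after restricting to a maximal independent subset $B'\subseteq B$ and solving $A_{B'}x_{B'}=b$, nonnegativity holds ``automatically since the true optimal solution satisfies them and the basis is uniquely determined.'' Under degeneracy the basis is not unique, and an arbitrary maximal independent subset of the columns of $A_B$ can yield $x_{B'}=A_{B'}^{-1}b$ with negative entries even though the optimal face is nonempty: zeroing the variables in $B\setminus B'$ selects a point of the affine hull of the face that need not lie in the face itself. The standard repair — and the one in the cited source — is not to solve for an arbitrary basis but to project the current iterate in least squares onto the affine set $\{x_N=0,\ Ax=b\}$ (dually onto $\{s_B=0,\ A^Ty+s=c\}$): the $B$-components of $x$ are bounded below by $\gamma_1 2^{-L}/n$, while the projection displaces the point by at most a $2^{\mathcal{O}(L)}$ multiple of the infeasibility and of $\|x_N\|$, both of order $2^{-2L}\cdot 2^{\mathcal{O}(L)}$, so the projected point remains nonnegative and is exactly optimal; a vertex can then be extracted by purification if desired. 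With that substitution your argument matches the standard proof.
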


If the IPC holds, then the optimal set $\mathcal{PD}^*$ is bounded, and we can find the upper bound for all the coordinates of all optimal solutions as described in the following lemma.
\begin{lemma}[Chapter 5 in \cite{ye2011interior}]\label{lemma: upperbound}
Assuming the IPC, then for any $(x^*,y^*,s^*)\in \mathcal{PD}^*$, 
$
    \max_i\{x_i^*\}\leq 2^{L},\text{ and }
    \max_i\{s_i^*\}\leq 2^{L}.
$
\end{lemma}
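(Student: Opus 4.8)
The plan is to reduce the statement to a bound on the vertices of the optimal set $\mathcal{PD}^*$ and then invoke the classical determinant estimate that already underlies Lemma~\ref{lemma: L bound}. First I would observe that, since the optimal partition satisfies $B\cup N=\{1,\dots,n\}$ and $B\cap N=\varnothing$, every $(x^*,y^*,s^*)\in\mathcal{PD}^*$ has $x^*_j=0$ for $j\in N$ and $s^*_j=0$ for $j\in B$; hence the condition $x^Ts=0$ together with $(x,s)\geq 0$ is equivalent to the linear conditions $x_N=0$, $s_B=0$. Consequently $\mathcal{PD}^*$ coincides with the polyhedron
\begin{equation*}
\{(x,y,s):\ Ax=b,\ A^Ty+s=c,\ x\geq0,\ s\geq0,\ x_N=0,\ s_B=0\},
\end{equation*}
which is cut out by linear (in)equalities only. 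By the IPC this polyhedron is bounded (as already recorded in the excerpt), so it is a polytope and every point of it is a convex combination of its vertices. Since each coordinate of a convex combination is at most the maximum of that coordinate over the vertices, it suffices to prove $x_i\leq 2^{L}$ and $s_i\leq 2^{L}$ for every vertex $(x,y,s)$ of $\mathcal{PD}^*$.

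Next I would bound a single vertex. A vertex of $\mathcal{PD}^*$ is a basic solution: the columns of $A$ indexed by the support of $x$ are linearly independent, so this support is contained in a set $\mathcal{B}$ of columns forming a nonsingular square submatrix $A_{\mathcal{B}}$, with $x_{\mathcal{B}}=A_{\mathcal{B}}^{-1}b$ and the remaining entries of $x$ equal to zero; symmetrically, the vector $y$ is the solution of a nonsingular integer system built from rows of $A^T$ and the entries of $c$, and $s=c-A^Ty$. By Cramer's rule each coordinate of such a basic solution is (up to the affine shift by $c$ in the case of $s$) a ratio $\det(\cdot)/\det(A_{\mathcal{B}})$, where the numerator is the determinant of an integer matrix obtained from columns of $A$ by replacing one column with $b$ (resp.\ with $c$).

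Finally I would apply the standard encoding-length estimate. Since the data are integral and $A_{\mathcal{B}}$ is nonsingular, $|\det(A_{\mathcal{B}})|\geq 1$; and by Hadamard's inequality the numerator determinant is bounded by the product of the Euclidean norms of its columns, which, together with the defining formula for $L$ (whose additive $mn$ term and logarithmic terms $\sum_{i,j}\lceil\log(|a_{ij}|+1)\rceil$, etc., are chosen precisely for this purpose), is at most $2^{L}$; the affine shift $s=c-A^Ty$ is absorbed in the same way. Hence every coordinate of every vertex of $\mathcal{PD}^*$ is bounded by $2^{L}$, and the claim follows. The only delicate point is matching the Hadamard-type bound to the exact definition of $L$, but this is the same computation that yields Lemma~\ref{lemma: L bound}, so it can be quoted as a classical fact (see \citep{ye2011interior}) rather than redone in detail.
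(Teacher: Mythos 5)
Your proposal is correct, but note that the paper does not actually prove this lemma at all: it is stated as a quoted result from Chapter~5 of \cite{ye2011interior}, so there is no in-paper argument to compare against. What you have written is precisely the standard argument that underlies that citation: replace the bilinear condition $x^Ts=0$ by the linear conditions $x_N=0$, $s_B=0$ via the optimal partition (valid because $j\in N$ forces $x_j^*=0$ for \emph{every} optimal solution, which is exactly the content of $B\cap N=\varnothing$), observe that the resulting polyhedron is bounded under the IPC and hence is the convex hull of its vertices, and bound each vertex coordinate by a ratio of integer determinants via Cramer's rule and Hadamard's inequality against the definition of $L$. Two steps are compressed but harmless: first, $\mathcal{PD}^*$ factors as the product of the primal and dual optimal faces, and a vertex of a face of a polyhedron is a vertex of that polyhedron, which is what lets you invoke the classical basic-solution bound for $\{x: Ax=b,\ x\geq 0\}$ and for the dual separately; second, for the dual slack the quantity $s_i=c_i-a_i^Ty$ is itself a ratio of integer determinants (expand the bordered $(m+1)\times(m+1)$ determinant), so the ``affine shift'' is absorbed exactly as you claim. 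Since this is the same determinant estimate that yields Lemma~\ref{lemma: L bound}, quoting it as classical is appropriate, and your reconstruction is sound.
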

For the theoretical purpose, we can use $\omega=2^{L}$ from Lemma~\ref{lemma: upperbound}, but in practice, for concrete LO problems, we may find a smaller bound.
We define the set of $\zeta$-optimal solutions as
\begin{equation*}
    \Pcal\Dcal(\zeta)=\{(x,y,s)\in\mathbb{R}^{n+m+n}:\ (x,s)\geq 0,\ \frac{x^Ts}{n}\leq\zeta ,\|(R_P,R_D)\|\leq\zeta\}
\end{equation*}
\begin{comment}
In the following, we assume that $\mathcal{PD}\neq \varnothing$ and there exists $\omega>1$ such that $\|(x^*,s^*)\|_{\infty}\leq \omega$ for an $(x^*,y^*,s^*)\in \mathcal{PD}$. Based on Lemma~\ref{lemma: upperbound}, a trivial bound is $\omega=2^L.$
\end{comment}

%
%=================================================
\section{An Inexact Infeasible Quantum IPM} \label{sec: II-QIPM}
%=================================================
%
To speed up IPMs, we use QLSAs to solve the Newton system at each iteration of IPMs. 
As discussed in Section~\ref{sec: Quantum Linear Algebra}, QLSAs inherently produce inexact solutions. 
Thus, one approach to use QLSA efficiently is to develop an Inexact Infeasible QIPM (II-QIPM).
In this paper, we utilize the KMM method proposed by \textcite{Kojima1993_primal} with the inexact Newton steps calculated by a QLSA.
Given $(x^k,y^k,s^k)\in \mathcal{N}(\gamma_1,\gamma_2)$, let $\mu^k=\frac{(x^k)^Ts^k}{n}$ and $0<\beta_1<1$ be the centering parameter, then the Newton system is defined as
\begin{equation}\label{eq:newton system}
\begin{aligned}
    A\Delta x^k&=b-Ax^k,\\
    A^T \Delta y^k +\Delta s^k&=c-A^Ty^k-s^k,\\
    X^k\Delta s^k+S^k\Delta x^k&=\beta_1 \mu^k e-X^ks^k,
\end{aligned} 
\end{equation}
where $e$ is all one vector with appropriate dimension, $X^k=\text{diag}(x^k)$, and $S^k=\text{diag}(s^k)$.
Instead of solving the full Newton system, we may solve the Augmented system or the Normal Equation System (NES).
%
% In the following, we use the NES and discuss the effect of different systems on the results later in Section~\ref{sec: systems}. 
%
From the Newton system~\eqref{eq:newton system}, the NES is formulated as
\begin{equation}\tag{NES}\label{eq:normal equation}
    M^k\Delta y^k= \sigma^k,
\end{equation}
where 
\begin{align*}
    D^k&=(X^k)^{1/2}(S^{k})^{-1/2},\\
    M^k&=A(D^k)^2A^T,\\
\sigma^k&=A(D^k)^2c-A(D^k)^2A^Ty^k-\beta_1\mu^k A(S^{k})^{-1}e +b-Ax^k\\
&=b-\beta_1 \mu^k A(S^k)^{-1}e+A(D^k)^2(c-A^Ty^k-s^k).
\end{align*}
As we can see, the NES has a smaller size, $m$, than the full Newton system.
Further, the coefficient matrix of the NES is symmetric and positive definite, thus Hermitian. Consequently,  QLSAs can solve the NES efficiently.
By its nature, a QLSA generates an inexact solution $\widetilde{\Delta y}{}^k$ with error bound $\|\Delta y^k-\widetilde{\Delta y}{}^k\|\leq \epsilon^k$. 
This error leads to residual $r^k$ as
\begin{equation*}M^k\widetilde{\Delta y}{}^k=\sigma^k+r^k,\end{equation*}
where $r^k=M^k(\widetilde{\Delta y}{}^k-\Delta y^k)$.
After finding $\widetilde{\Delta y}{}^k$ inexactly by solving the NES using QLSA, we compute the inexact $\widetilde{\Delta x}{}^k$ and $\widetilde{\Delta s}{}^k$ classically as
\begin{equation}
\begin{aligned}
\label{eq:normal equation solutions}
\widetilde{\Delta s}{}^k&=c-A^Ty^k-s^k-A^T\widetilde{\Delta y}{}^k,\\
\widetilde{\Delta x}{}^k&=\beta_1\mu^k (S^{k})^{-1}e-x^k-(D^k)^2\widetilde{\Delta s}{}^k.
\end{aligned}
\end{equation}
As $\widetilde{\Delta s}{}^k$ and $\widetilde{\Delta x}{}^k$ are directly calculated by equations \eqref{eq:normal equation solutions}, one can verify that $(\widetilde{\Delta x}{}^k, \widetilde{\Delta s}{}^k, \widetilde{\Delta y}{}^k)$ satisfies 
\begin{equation} \label{eq:approximate newton system}
    \begin{aligned}
    A\widetilde{\Delta x}{}^k&=b-Ax^k+r^k,\\
    A^T \widetilde{\Delta y}{}^k +\widetilde{\Delta s}{}^k&=c-A^Ty^k-s^k,\\
    X^k\widetilde{\Delta s}{}^k+S^k\widetilde{\Delta x}{}^k&=\beta_1 \mu^k e-X^ks^k.
\end{aligned}
\end{equation}

To have an II-IPM using \eqref{eq:normal equation} with iteration complexity $\Ocal(n^2L)$, the residual norm must decrease at least $\Ocal(\lambda_{\text{min}}(A)
\sqrt{n} \log n)$ time faster than $(x^k )^T s^k$ where $\lambda_{\text{min}}(A)$ is the smallest singular value of $A$ \cite{Zhou2004_Polynomiality}. We can have wider residual bound but with higher iteration complexity of II-IPM \cite{bellavia2004convergence}. In the literature of preconditioning the NES, some papers modified the equations \eqref{eq:normal equation solutions} and the \eqref{eq:normal equation}  to transfer the residual from the first equation of \eqref{eq:approximate newton system} to its last equation. By these changes, we can get much better bounds  \citep{Gondzio2009_Convergence,Monteiro2003_Convergence}. Since tight residual bound leads to the high complexity of QLSA+QTA, in this paper, we use a modification of the NES, which leads to $\Ocal(n^2L)$ iteration complexity of II-QIPM, where the residual is decreasing with the rate of $\Ocal(\sqrt{\mu^k})$.

Since $A$ has full row rank, one can choose an arbitrary basis $\Bhat$, and calculate $A_{\Bhat}^{-1}$, $\Ahat=A_{\Bhat}^{-1}A$, and $\bhat=A_{\Bhat}^{-1}b$. This calculation needs $\Ocal(m^2n)$ arithmetic operations and happens just one time before the iterations of IPM. The cost of this preprocessing is dominated by the cost of II-QIPM, but it can be reduced by using the structure of $A$. For example, if the problem is in the canonical form, there is no need for this preprocessing. In the rest of this paper, all methodology is applied to the preprocessed problem with input data $(\hat{A},\hat{b},c)$.
Now, we can modify the \eqref{eq:normal equation} to 
\begin{equation*}\tag{MNES}\label{eq: modified normal equation}
    \Mhat^k z^k=\hat{\sigma}^k
\end{equation*}
where
\begin{align*}
    \Mhat^k&=(D_{\Bhat}^{k})^{-1}A_{\Bhat}^{-1}M^k((D_{\Bhat}^{k})^{-1}A_{\Bhat}^{-1})^T=(D_{\Bhat}^{k})^{-1}\Ahat (D^k)^2((D_{\Bhat}^{k})^{-1}\Ahat)^T,\\
\hat{\sigma}^k&=(D_{\Bhat}^{k})^{-1}A_{\Bhat}^{-1}\sigma^k=(D_{\Bhat}^{k})^{-1}\bhat-\beta_1 \mu^k (D_{\Bhat}^{k})^{-1}\Ahat(S^k)^{-1}e+(D_{\Bhat}^{k})^{-1}\Ahat(D^k)^2(c-A^Ty^k-s^k),
\end{align*}
where $D_{\Bhat}^k=(X_{\Bhat}^k)^{1/2}(S_{\Bhat}^{k})^{-1/2}$. We use the following procedure to find the Newton direction by solving \eqref{eq: modified normal equation} inexactly with QLSA+QTA.
\begin{enumerate}
    \item []\textbf{Step 1.} Find $\tilde{z}^k$ such that $\Mhat^k\tilde{z}^k=\hat{\sigma}^k+\rhat^k$ and $\|\rhat^k\|\leq\eta\frac{\sqrt{\mu^k}}{\sqrt{n}}$.
    \item []\textbf{Step 2.} Calculate $\widetilde{\Delta y}{}^k=((D_{\Bhat}^{k})^{-1}A_{\Bhat}^{-1})^T\tilde{z}^k$.
    \item []\textbf{Step 3.} Calculate $v^k=(v^k_{\Bhat},v^k_{\Nhat})=(D_{\Bhat}^{k}\rhat^k,0)$.
    \item []\textbf{Step 4.} Calculate $\widetilde{\Delta s}{}^k=c-A^Ty^k-s^k-A^T\widetilde{\Delta y}{}^k$.
    \item []\textbf{Step 5.} Calculate $\widetilde{\Delta x}{}^k=\beta_1\mu^k (S^{k})^{-1}e-x^k-(D^k)^2\widetilde{\Delta s}{}^k-v^k$.
\end{enumerate}
The following Lemma shows how the inexact solution of \eqref{eq: modified normal equation} leads to residual only in the last equation of the Newton system.
\begin{lemma}
For the Newton direction $(\widetilde{\Delta x}{}^k,\widetilde{\Delta y}{}^k,\widetilde{\Delta s}{}^k)$, we have
\begin{equation} \label{eq:approximate modified newton system}
    \begin{aligned}
    A\widetilde{\Delta x}{}^k&=b-Ax^k,\\
    A^T \widetilde{\Delta y}{}^k +\widetilde{\Delta s}{}^k&=c-A^Ty^k-s^k,\\
    X^k\widetilde{\Delta s}{}^k+S^k\widetilde{\Delta x}{}^k&=\beta_1 \mu^k e-X^ks^k-S^kv^k.
\end{aligned}
\end{equation}
\end{lemma}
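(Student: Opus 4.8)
The plan is to verify the three rows of \eqref{eq:approximate modified newton system} one at a time, reading Steps~1--5 backwards; this is a pure linear-algebra identity, so no estimates are involved. The second row is immediate: rearranging the definition of $\widetilde{\Delta s}{}^k$ in Step~4 already gives $A^T\widetilde{\Delta y}{}^k+\widetilde{\Delta s}{}^k=c-A^Ty^k-s^k$. For the third row, multiply the expression for $\widetilde{\Delta x}{}^k$ in Step~5 by $S^k$; since $X^k,S^k$ are positive diagonal matrices (we are in $\mathcal N(\gamma_1,\gamma_2)$) they commute, so $S^k(D^k)^2=S^kX^k(S^k)^{-1}=X^k$ and $S^kx^k=X^ks^k$, and Step~5 becomes $S^k\widetilde{\Delta x}{}^k=\beta_1\mu^k e-X^ks^k-X^k\widetilde{\Delta s}{}^k-S^kv^k$, which is exactly $X^k\widetilde{\Delta s}{}^k+S^k\widetilde{\Delta x}{}^k=\beta_1\mu^k e-X^ks^k-S^kv^k$.

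The first row is the substantive part, and the key is to push the residual $\rhat^k$ through the symmetric change of variables used to form \eqref{eq: modified normal equation}. Write $G=(D_{\Bhat}^{k})^{-1}A_{\Bhat}^{-1}$, so that $\Mhat^k=GM^kG^T$, $\hat{\sigma}^k=G\sigma^k$, $G^{-1}=A_{\Bhat}D_{\Bhat}^{k}$, and, by Step~2, $\widetilde{\Delta y}{}^k=G^T\tilde{z}^k$. Multiplying the equation of Step~1 by $G^{-1}$ and using $M^k=G^{-1}\Mhat^k(G^{-1})^T$ gives
\[
M^k\widetilde{\Delta y}{}^k=G^{-1}\Mhat^k(G^{-1})^TG^T\tilde{z}^k=G^{-1}\Mhat^k\tilde{z}^k=G^{-1}(\hat{\sigma}^k+\rhat^k)=\sigma^k+A_{\Bhat}D_{\Bhat}^{k}\rhat^k .
\]
By the definition of $v^k$ in Step~3 (basic block $v_{\Bhat}^k=D_{\Bhat}^{k}\rhat^k$, nonbasic block $v_{\Nhat}^k=0$) we have $Av^k=A_{\Bhat}v_{\Bhat}^k=A_{\Bhat}D_{\Bhat}^{k}\rhat^k$, hence $M^k\widetilde{\Delta y}{}^k=\sigma^k+Av^k$.

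It then remains to apply $A$ to Step~5. Using $A(D^k)^2\widetilde{\Delta s}{}^k=A(D^k)^2(c-A^Ty^k-s^k)-A(D^k)^2A^T\widetilde{\Delta y}{}^k$ together with $M^k=A(D^k)^2A^T$ and the identity just derived, one gets
\[
A\widetilde{\Delta x}{}^k=\beta_1\mu^k A(S^k)^{-1}e-Ax^k-A(D^k)^2(c-A^Ty^k-s^k)+\sigma^k ,
\]
and substituting the explicit form $\sigma^k=b-\beta_1\mu^k A(S^k)^{-1}e+A(D^k)^2(c-A^Ty^k-s^k)$ makes all remaining terms cancel except $b-Ax^k$, which is the first row.

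The only place where care is needed is the bookkeeping around $G$: one must check that $v^k$ was constructed \emph{precisely} so that $Av^k$ equals the image $G^{-1}\rhat^k$ of the QLSA residual, i.e. that zeroing the nonbasic block and rescaling the basic block by $D_{\Bhat}^{k}$ is exactly what moves the residual out of the first block of the Newton system and into the last block as the term $-S^kv^k$. Everything else is substitution and cancellation; in particular the bound $\|\rhat^k\|\le\eta\sqrt{\mu^k}/\sqrt{n}$ from Step~1 is not needed here (it enters only in the later convergence analysis).
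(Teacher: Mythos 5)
Your proposal is correct and follows essentially the same route as the paper's proof: derive $M^k\widetilde{\Delta y}{}^k=\sigma^k+A_{\Bhat}D_{\Bhat}^{k}\rhat^k$ by conjugating Step~1 back through $(D_{\Bhat}^{k})^{-1}A_{\Bhat}^{-1}$, observe that $Av^k=A_{\Bhat}D_{\Bhat}^{k}\rhat^k$ so the residual cancels in the first block, and obtain the second and third rows directly from Steps~4 and~5. Your write-up is merely a more explicit account (via the matrix $G$) of the step the paper dismisses with ``one can verify that.''
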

\begin{proof}
For the Newton direction $(\widetilde{\Delta x}{}^k,\widetilde{\Delta y}{}^k,\widetilde{\Delta s}{}^k)$, one can verify that
\begin{align*}
    \Mhat^k\tilde{z}^k&=\hat{\sigma}^k+\rhat^k\\
    M^k\widetilde{\Delta y}{}^k&=\sigma^k+A_{\Bhat}D_{\Bhat}^{k}\rhat^k.
\end{align*}
For the first equation of \eqref{eq:approximate modified newton system}, we can write
\begin{align*}
    A\widetilde{\Delta x}{}^k=&A(\beta_1\mu^k (S^{k})^{-1}e-x^k-(S^{k})^{-1}X^k\widetilde{\Delta s}{}^k-v^k)\\
    =&A(\beta_1\mu^k (S^{k})^{-1}e-x^k-(S^{k})^{-1}X^k(c-s^k-A^Ty^k-A^T\widetilde{\Delta y}{}^k)-v^k)\\
    =&\beta_1\mu^k A(S^{k})^{-1}e-Ax^k-A(S^{k})^{-1}X^kc+A(S^{k})^{-1}X^ks^k+A(S^{k})^{-1}X^k A^Ty^k\\
    &+A(S^{k})^{-1}X^k A^T\widetilde{\Delta y}{}^k-Av^k\\
    =&\beta_1\mu^k A(S^{k})^{-1}e-A(S^{k})^{-1}X^kc+A(S^{k})^{-1}X^k A^Ty^k+\sigma^k+A_{\Bhat}D_{\Bhat}^{k}\rhat^k-A_{\Bhat}D_{\Bhat}^{k}\rhat^k\\
    =&b-Ax^k .
\end{align*}
The second and third equations of \eqref{eq:approximate modified newton system} are obtained by Steps 4 and 5.
\end{proof}
To have a convergent IPM, we need $\|S^kv^k\|_{\infty}\leq \eta \mu^k$, where $0\leq\eta< 1$ is an enforcing parameter. 
\begin{lemma}
For the Newton direction $(\widetilde{\Delta x}{}^k,\widetilde{\Delta y}{}^k,\widetilde{\Delta s}{}^k)$, if the residual $\|\rhat^k\|\leq\eta\frac{\sqrt{\mu^k}}{\sqrt{n}}$, then $\|S^kv^k\|_{\infty}\leq \eta \mu^k$.
\end{lemma}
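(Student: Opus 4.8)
The plan is to compute $S^k v^k$ coordinate-wise directly from the definition of $v^k$ in Step 3 and then estimate its $\infty$-norm. First I would observe that $v^k_{\Nhat}=0$, so $(S^k v^k)_i=0$ for every $i\in\Nhat$ and only the block of indices in $\Bhat$ contributes. For $i\in\Bhat$, substituting $v^k_{\Bhat}=D_{\Bhat}^k\rhat^k$ together with $D_{\Bhat}^k=(X_{\Bhat}^k)^{1/2}(S_{\Bhat}^k)^{-1/2}$ gives $(S^k v^k)_{\Bhat}=S_{\Bhat}^k(X_{\Bhat}^k)^{1/2}(S_{\Bhat}^k)^{-1/2}\rhat^k=(X_{\Bhat}^k S_{\Bhat}^k)^{1/2}\rhat^k$, i.e. a diagonal rescaling of $\rhat^k$ by the factors $\sqrt{x_i^k s_i^k}$.

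Next I would bound the $\infty$-norm of this vector: $\|S^k v^k\|_{\infty}\le \big(\max_{i\in\Bhat}\sqrt{x_i^k s_i^k}\big)\,\|\rhat^k\|_{\infty}\le \big(\max_{i}\sqrt{x_i^k s_i^k}\big)\,\|\rhat^k\|$, using $\|\cdot\|_{\infty}\le\|\cdot\|_2$. The key quantitative ingredient is that each complementarity product is controlled by the duality measure: since $\sum_{i=1}^{n} x_i^k s_i^k=(x^k)^T s^k=n\mu^k$ and $(x^k,s^k)>0$, we have $x_i^k s_i^k\le n\mu^k$ for every $i$, hence $\max_i\sqrt{x_i^k s_i^k}\le\sqrt{n\mu^k}=\sqrt{n}\,\sqrt{\mu^k}$.

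Combining these estimates with the hypothesis $\|\rhat^k\|\le \eta\,\sqrt{\mu^k}/\sqrt{n}$ yields $\|S^k v^k\|_{\infty}\le \sqrt{n}\,\sqrt{\mu^k}\cdot \eta\frac{\sqrt{\mu^k}}{\sqrt{n}}=\eta\mu^k$, which is the claim. I do not expect a genuine obstacle here; the only point that deserves a moment of care is the use of the crude bound $x_i^k s_i^k\le n\mu^k$ (rather than a tighter, neighborhood-dependent estimate), since it is precisely this choice that makes the residual tolerance $\eta\sqrt{\mu^k}/\sqrt{n}$ imposed in Step 1 the correct one, and it also accounts for the $\Ocal(\sqrt{\mu^k})$ decay rate of the residual term $S^k v^k$ in~\eqref{eq:approximate modified newton system} referred to in the text.
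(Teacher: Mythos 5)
Your proof is correct and follows essentially the same route as the paper's: restrict to the $\Bhat$ block, identify $(S^k v^k)_{\Bhat}=(X_{\Bhat}^k S_{\Bhat}^k)^{1/2}\rhat^k$, bound the diagonal factors by $\sqrt{n\mu^k}$ via $x_i^k s_i^k\leq (x^k)^Ts^k=n\mu^k$, and use $\|\cdot\|_\infty\leq\|\cdot\|_2$ on $\rhat^k$. Your write-up is in fact slightly more careful than the paper's one-line chain, which writes an intermediate step as an equality where an inequality is meant.
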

\begin{proof}
We have
\begin{align*}
    \|S^kv^k\|_{\infty}=\|S^k_{\Bhat}v^k_{\Bhat}\|_{\infty}=\|S^k_{\Bhat}D_{\Bhat}^{k}\rhat^k\|_{\infty}= \|(S^k_{\Bhat})^{1/2}(X_{\Bhat}^{k})^{1/2}\|_{\infty}\|\rhat^k\|_{\infty}\leq\sqrt{n\mu^k}\|\rhat^k\|\leq\eta\mu^k.
\end{align*}
\end{proof}
In the following, we show that by satisfying $\|\rhat^k\|\leq\eta\frac{\sqrt{\mu^k}}{\sqrt{n}}$, then the iterations of the II-QIPM remain in the $\Ncal(\gamma_1,\gamma_2)$ neighborhood of the central path. 
The following theorem presents the complexity of solving the \eqref{eq: modified normal equation} system  by utilizing the QLSA  of \textcite{Chakraborty2018_power}. 
We can also use other QLSAs discussed in Section~\ref{sec: Quantum Linear Algebra}, leading to different complexity bounds.
\begin{lemma} \label{theo: QLSA solution}
 The QLSA by \cite{Chakraborty2018_power} and the QTA by \cite{van2022quantum} can build the \eqref{eq: modified normal equation} system, and produce a solution ${\widetilde{z}}^k$ for the \eqref{eq: modified normal equation} system satisfying $\|\rhat^k\|\leq\eta\frac{\sqrt{\mu^k}}{\sqrt{n}}$ with $\tilde{\Ocal}_{n,\frac{1}{\mu^k},\|\hat{\sigma}^k\|}(mn+m\sqrt{n}\frac{\kappa_E^k\|\hat{\sigma}^k\|}{\sqrt{\mu^k}})$ complexity, where $E^k=(D^k_{\Bhat})^{-1}\Ahat D^k$, and $\kappa_E^k$ is the condition number of $E^k$.
\end{lemma}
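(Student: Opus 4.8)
The plan is to assemble the cost from the four stages of Section~\ref{sec: Quantum Linear Algebra}: loading the data and forming a block-encoding of the coefficient matrix together with the state $\ket{\hat\sigma^k}$; translating \eqref{eq: modified normal equation} into a QLSP; running the QLSA of \cite{Chakraborty2018_power}; and extracting a classical vector with the QTA of \cite{van2022quantum}. At the end one checks that the residual meets the bound demanded in Step~1 of the procedure above.

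First I would record the structure of the \eqref{eq: modified normal equation} coefficient matrix. With $E^k=(D_{\hat B}^{k})^{-1}\hat A D^k$ one has $\hat M^k=E^k(E^k)^\top$, so $\hat M^k$ is symmetric positive definite, hence Hermitian, and its eigenvalues are the squares of the singular values of $E^k$; in particular $\kappa(\hat M^k)=(\kappa_E^k)^2$. This identity is exactly the reason for replacing the NES by the \eqref{eq: modified normal equation}: instead of feeding $\hat M^k$ itself to the QLSA, which would make the cost scale with $\kappa(\hat M^k)=(\kappa_E^k)^2$, one builds a block-encoding of the rectangular factor $E^k$ and solves the system through a least-squares/regression formulation (equivalently $z^k=\big((E^k)^\top\big)^{+}(E^k)^{+}\hat\sigma^k$), so that the condition number entering the QLSA is $\kappa_E^k$. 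Since $\hat A$ is stored in QRAM once — dense in the worst case, at cost $\Ocal(mn)$ — and $D^k$, $(D_{\hat B}^{k})^{-1}$ are diagonal, the block-encoding of $E^k$, prescaled by $\|E^k\|$ so that its normalization factor is $\alpha=1$ as in Section~\ref{sec: Quantum Linear Algebra}, can be applied with only $\Ocal(\text{polylog}(n))$ overhead; likewise $\ket{\hat\sigma^k}$ is prepared from the classically available $\hat\sigma^k$. This produces the additive $mn$ term and the polylogarithmic prefactors.

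Next I would fix the precision budget. Step~1 asks for $\|\hat r^k\|\le\eta\sqrt{\mu^k}/\sqrt n$, i.e.\ LSP accuracy $\epsilon_{LSP}=\eta\sqrt{\mu^k}/\sqrt n$ for \eqref{eq: modified normal equation}. By the LSP-to-QLSP translation of Section~\ref{sec: Quantum Linear Algebra}, the rescaled output $\widetilde z^k$ (recovered from $\ket{\widetilde z^k}$ as in that section) satisfies $\|\hat\sigma^k-\hat M^k\widetilde z^k\|\le\|\hat\sigma^k\|(\epsilon_{QLSP}+\epsilon_{QTA})$, so it suffices to take $\epsilon_{QLSP}=\epsilon_{QTA}=\epsilon_{LSP}/(2\|\hat\sigma^k\|)=\eta\sqrt{\mu^k}/(2\sqrt n\,\|\hat\sigma^k\|)$. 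With these choices the QLSA of \cite{Chakraborty2018_power} prepares $\ket{\widetilde z^k}$ at cost $\tilde{\Ocal}_{n,1/\mu^k,\|\hat\sigma^k\|}(\kappa_E^k)$ — the polylogarithmic dependence on $1/\epsilon_{QLSP}$ becoming polylog in $n$, $1/\mu^k$, and $\|\hat\sigma^k\|$ — and the QTA of \cite{van2022quantum}, run with norm bound $\varrho=1$ and precision $\epsilon_{QTA}$, makes $\Ocal(m/\epsilon_{QTA})=\Ocal\!\big(m\sqrt n\,\|\hat\sigma^k\|/(\eta\sqrt{\mu^k})\big)$ calls to that state-preparation routine. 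Multiplying the per-call QLSA cost by the number of QTA calls and adding the one-time $\Ocal(mn)$ term gives $\tilde{\Ocal}_{n,1/\mu^k,\|\hat\sigma^k\|}\big(mn+m\sqrt n\,\kappa_E^k\|\hat\sigma^k\|/\sqrt{\mu^k}\big)$. The residual guarantee then follows from the chain of inequalities displayed in Section~\ref{sec: Quantum Linear Algebra}, applied with $\hat M^k$ prenormalized by $\|E^k\|^2$: the combined QLSP and tomography error yields $\|\hat r^k\|=\|\hat\sigma^k-\hat M^k\widetilde z^k\|\le\|\hat\sigma^k\|(\epsilon_{QLSP}+\epsilon_{QTA})\le\eta\sqrt{\mu^k}/\sqrt n$.

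I expect the delicate point to be the one isolated above: obtaining the \emph{linear} dependence on $\kappa_E^k$ rather than on $\kappa(\hat M^k)=(\kappa_E^k)^2$. This forces the use of the least-squares/regression variant of the block-encoded QLSA, so that the conditioning that enters is that of the rectangular factor $E^k$, and it requires checking that the prescaling by $\|E^k\|$ (which lets us set $\alpha=1$) is compatible with the residual-in-QLSP accounting on which the II-QIPM convergence analysis rests. Granting that, the precision propagation and the bookkeeping of polylogarithmic factors are routine from the results recalled in Section~\ref{sec: Quantum Linear Algebra}.
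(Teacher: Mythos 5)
Your proposal follows essentially the same route as the paper's proof: form $E^k$ and $\hat{\sigma}^k$ classically and load them into QRAM at cost $\Ocal(mn)$, invoke the block-encoded QLSA of \textcite{Chakraborty2018_power} on the factored system $E^k(E^k)^T\tilde{z}^k=\hat{\sigma}^k$ to get cost $\tilde{\Ocal}(\kappa_E^k)$ per state preparation, set $\epsilon_{QLSA}^k=\epsilon_{QTA}^k=\eta\sqrt{\mu^k}/(2\sqrt{n}\|\hat{\sigma}^k\|)$ via the LSP-to-QLSP translation of Section~\ref{sec: Quantum Linear Algebra}, and multiply by the $\Ocal(m\sqrt{n}\|\hat{\sigma}^k\|/\sqrt{\mu^k})$ tomography calls. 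Your explicit justification of the linear (rather than quadratic) dependence on $\kappa_E^k$ through the rectangular factor $E^k$ is a point the paper only asserts by citation, but it is the same argument, not a different one.
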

\begin{proof}
Building the \eqref{eq: modified normal equation} system in classical computer needs some matrix multiplications, which costs $\Ocal(m^2n)$ arithmetic operations.
We can write \eqref{eq: modified normal equation} as $E^k(E^k)^T \tilde{z}{}^k=\hat{\sigma}^k$. As we can see, calculating $E^k$ and $\hat{\sigma}^k$ needs just $\Ocal(mn)$ arithmetic operations. \textcite{Chakraborty2018_power} proposed an efficient way to build and solve a linear system in the form $E^k(E^k)^T \tilde{z}{}^k=\hat{\sigma}^k$, with $\tilde{\Ocal}(\text{polylog}(\frac{n}{\epsilon_{QLSA}^k})\kappa_E^k)$ complexity.
Also, we need to find the target precision for QLSA and QTA such that  $\|\rhat^k\|\leq\eta\frac{\sqrt{\mu^k}}{\sqrt{n}}$ is satisfied. 
Thus,  to have $\|\rhat^k\|\leq\eta\frac{\sqrt{\mu^k}}{\sqrt{n}}$, it is sufficient to require $\epsilon_{LSP}^k\leq\eta \frac{\sqrt{\mu^k}}{\sqrt{n}\|\Mhat^k\|}$.
Based on the discussion of Section \ref{sec: Quantum Linear Algebra}, we need to have
\begin{align*}
    \epsilon_{QLSP}^k\leq\eta \frac{\sqrt{\mu^k}}{\sqrt{n}\|\hat{\sigma}^k\|},\quad
    \epsilon_{QLSA}^k\leq\eta \frac{\sqrt{\mu^k}}{2\sqrt{n}\|\hat{\sigma}^k\|},\text{ and }
    \epsilon_{QTA}^k\leq\eta \frac{\sqrt{\mu^k}}{2\sqrt{n}\|\hat{\sigma}^k\|}.
\end{align*}
With this target precision, the QLSA by \cite{Chakraborty2018_power} has $\Ocal(\text{polylog}(\frac{n^{1.5}\|\hat{\sigma}^k\|}{\sqrt{\mu^k}})\kappa_E^k)$ complexity, and the QTA by \citep{van2022quantum} has $\Ocal(\frac{m\sqrt{n}\|\hat{\sigma}^k\|}{\sqrt{\mu^k}})$ complexity. Since calculating $E^k$ and $\hat{\sigma}^k$ classicaly needs $\Ocal(mn)$ arithmetic operations and the cost of storing them in QRAM is also $\Ocal(mn)$, the total cost of building and solving  the \eqref{eq:normal equation} system is $$\tilde{\Ocal}_{n,\frac{1}{\mu^k},\|\hat{\sigma}^k\|}(mn+m\sqrt{n}\frac{\kappa_E^k\|\hat{\sigma}^k\|}{\sqrt{\mu^k}}).$$
The proof is complete.
\end{proof}
We present the II-QIPM as Algorithm~\ref{alg: II-QIPM} for solving LO problems. In this algorithm, we use QLSA and QTA to solve the NES. 
\begin{algorithm}[ht] 
\caption{II-QIPM } \label{alg: II-QIPM}
\begin{algorithmic}[1]
 \State Choose $\zeta >0$, $\gamma_1\in(0,1)$,$\gamma_2>0$,$0<\eta<\beta_1<\beta_2<1$, 
 \State Choose $\omega \geq\max\{1,\|x^*,s^*\|_{\infty}\}$.
 \State $k \gets 0$, $(x^0, y^0, s^0) \gets (\omega e, 0 e, \omega e) $, and $\gamma_2 \gets \max\left\{1, \frac{\|(R_p^0,R_D^0)\|}{\mu^0}\right\}$  
\While {$(x^k,y^k,s^k)\notin \mathcal{PD}_\zeta$}
\State $\mu^k \gets \frac{(x^k)^Ts^k}{n}$\label{alg-step:mu^k}
\State $E^k\gets (D_{\Bhat}^{k})^{-1}\Ahat D^k$ and $\hat{\sigma}^k\gets (D_{\Bhat}^{k})^{-1}\bhat-\beta_1 \mu^k (D_{\Bhat}^{k})^{-1}\Ahat(S^k)^{-1}e+(D_{\Bhat}^{k})^{-1}\Ahat(D^k)^2(c-A^Ty^k-s^k)$ \label{alg-step:NES build}
\State $\epsilon_{QLSA}^k \gets \eta \frac{\sqrt{\mu^k}}{2\sqrt{n}\|\sigma^k\|}$ and 
 $\epsilon_{QTA}^k \gets \eta \frac{\sqrt{\mu^k}}{2\sqrt{n}\|\sigma^k\|}$ \label{alg:QTA error}
\State $(\Delta x^k,\Delta y^k,\Delta s^k) \gets$ \textbf{solve} \ref{eq: modified normal equation}($\beta_1$) by QLSA+QTA with precision $\epsilon_{QLSA}^k \text{ and }\epsilon_{QTA}^k$ \label{alg-step:NES solve}
\State $\begin{aligned}
    \hat{\alpha}^k\gets \max \Big\{\bar{\alpha} \in [0,1] \ | \ & \text{for all } \alpha \in [0,\bar{\alpha}]\text{ we have }\\ &\big((x^k,y^k,s^k)+\alpha(\Delta x^k,\Delta y^k,\Delta s^k)\big) \in \mathcal{N}(\gamma_1, \gamma_2) \text{ and } \\
    &(x^k+\alpha \Delta x^k)^T(s^k+\alpha \Delta s^k)\leq \big(1-\alpha (1-\beta_2)\big)(x^k)^Ts^k \Big\}
\end{aligned}$ \label{alg-step: alpha_hat^k}
\State $(x^{k+1},y^{k+1},s^{k+1}) \gets (x^k,y^k,s^k)+\hat{\alpha}^k(\Delta x^k,\Delta y^k,\Delta s^k)$\label{alg-step:update solution}
\If {$\|x_{k+1},s_{k+1}\|_{\infty}>\omega$} \label{alg-step:check infeasibility}
\State \Return {Primal or dual is infeasible.}
\EndIf
\State $k \gets k+1$
\EndWhile
\State \Return{$(x^k,y^k,s^k)$}
\end{algorithmic}
\end{algorithm}
%
\begin{comment}

\begin{remark}
In Theorem \ref{theo: convergence}, we show that there  exists $\alpha^L$ such that for any $\alpha^k\in [\alpha^L,\hat{\alpha}^k]$, the proposed II-QIPM converges to an optimal solution.
%
So, we can use $\alpha^k$ instead of $\hat{\alpha}^k$ in Line~\eqref{alg-step:update solution} of the proposed II-QIPM. In practice, we can use backtracking procedure to find appropriate step lenght. 
%
\end{remark}
\end{comment}
%

%
It can be easily verified that e.g., $\beta_1=0.5$, $\beta_2=0.9995$, $\eta=0.4$ and $\gamma_1=0.5$ yield a valid choice, i.e. satisfying the conditions in the first line of Algorithm~\ref{alg: II-QIPM}.
In the following, we prove the polynomial complexity of Algorithm~\ref{alg: II-QIPM}.
%To do so, we use $\delta_1,\delta_2,\delta_3$ as defined in \eqref{eq:parameters}. The parameter $\omega$ is hard to compute as described in Algorithm~\ref{alg: II-QIPM}. 

%
%------------------------------------------
\subsection{Convergence of the II-QIPM} \label{sec: Convergence of II-QIPM}
%------------------------------------------
%
In this section, a convegence proof and iteration complexity bound are provided for the proposed II-QIPM. 
The analysis closest to ours can be found in \citep{Gondzio2009_Convergence} and \citep{Monteiro2003_Convergence} where they are preconditioning the Newton systems which are then solved by CG.
However, we modified the NES and solved it by QLSA+QTA under different conditions and assumptions.
First, in Lemma~\ref{lemma: iterations properties}, we study basic properties of the proposed II-QIPM as presented in Algorithm~\ref{alg: II-QIPM}.
Then, Lemma~\ref{lemma: bound of alpha} shows that the sequence $\{\hat{\alpha}^k\}$ is strictly positive for all $k$. 
The iteration complexity of Algorithm~\ref{alg: II-QIPM} is proved in Theorem~\ref{theo: convergence}.
In the proof, we need to analyze values of $\alpha^k$ such that at iteration $k$ of Algorithm \ref{alg: II-QIPM}, the Newton step satisfies all the conditions in line \ref{alg-step: alpha_hat^k}. 
To ease notation, we use
\begin{align*}
    x^k(\alpha)=x^k+\alpha \Delta x^k,\   y^k(\alpha)=y^k+\alpha\Delta y^k,&\   s^k(\alpha)=s^k+\alpha\Delta s^k,\  
    \mu^k(\alpha)=\frac{x^k(\alpha)^Ts^k(\alpha)}{n},\\
    R_P^k(\alpha)=b-Ax^k(\alpha), \text{ and }&R_D^k(\alpha)=c-A^Ty^k(\alpha)-s^k(\alpha).
\end{align*}
The following lemma shows some properties of the proposed II-QIPM.
\begin{lemma}\label{lemma: iterations properties}
At iteration $k$ of Algorithm \ref{alg: II-QIPM}, for any $\alpha \in [0,1]$, with $\mu^k=\frac{(x^k)^Ts^k}{n}$, we have

\noindent \resizebox{1\linewidth}{!}{\begin{minipage}{\linewidth} 
\begin{subequations} 
	\begin{alignat}{2}
    R_P^k(\alpha)&= (1-\alpha )R_P^k,\label{eq:primal relationship}\\
    R_D^k(\alpha)&= (1-\alpha)R_D^k,\label{eq:dual relationship}\\
    (x^k(\alpha))^Ts^k(\alpha) &\geq (1+\alpha(\beta_1-\eta-1))n\mu^k+\alpha^2(\Delta  x^k)^T\Delta s^k,\label{eq:complementarity relationship}\\
    x_i^k(\alpha)s_i^k(\alpha) &\geq(1-\alpha)x^k_is^k_i+\alpha (\beta_1 -\eta)\mu^k + \alpha^2 \Delta x^k_i\Delta s^k_i \text{ for } i \in \{1,2,\dots,n\}. \label{eq:i-th complementarity relationship} 
\end{alignat}
\end{subequations}
\end{minipage}}

\end{lemma}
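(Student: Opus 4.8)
The plan is to expand each of the four quantities directly from the definitions $x^k(\alpha)=x^k+\alpha\Delta x^k$, $y^k(\alpha)=y^k+\alpha\Delta y^k$, $s^k(\alpha)=s^k+\alpha\Delta s^k$ and then substitute the three equations of the approximate modified Newton system~\eqref{eq:approximate modified newton system} satisfied by the direction $(\Delta x^k,\Delta y^k,\Delta s^k)$ returned by Steps 1--5, finally invoking the bound $\|S^kv^k\|_\infty\le\eta\mu^k$ proved above.

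For~\eqref{eq:primal relationship} and~\eqref{eq:dual relationship} this is immediate: $R_P^k(\alpha)=b-Ax^k(\alpha)=R_P^k-\alpha A\Delta x^k$, and the first line of~\eqref{eq:approximate modified newton system} gives $A\Delta x^k=b-Ax^k=R_P^k$, so $R_P^k(\alpha)=(1-\alpha)R_P^k$; the dual relation follows identically from the second line, which gives $A^T\Delta y^k+\Delta s^k=R_D^k$. For~\eqref{eq:complementarity relationship}, expand $(x^k(\alpha))^Ts^k(\alpha)=(x^k)^Ts^k+\alpha\big((x^k)^T\Delta s^k+(s^k)^T\Delta x^k\big)+\alpha^2(\Delta x^k)^T\Delta s^k$, and rewrite the middle term as $e^T(X^k\Delta s^k+S^k\Delta x^k)=e^T(\beta_1\mu^ke-X^ks^k-S^kv^k)=\beta_1 n\mu^k-n\mu^k-e^TS^kv^k$ using the third line of~\eqref{eq:approximate modified newton system}. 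Since $|e^TS^kv^k|\le\|S^kv^k\|_1\le n\|S^kv^k\|_\infty\le n\eta\mu^k$ and $\alpha\ge0$, replacing $-\alpha e^TS^kv^k$ by the lower bound $-\alpha n\eta\mu^k$ produces exactly $(1+\alpha(\beta_1-\eta-1))n\mu^k+\alpha^2(\Delta x^k)^T\Delta s^k$.

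The componentwise estimate~\eqref{eq:i-th complementarity relationship} is proved the same way: $x_i^k(\alpha)s_i^k(\alpha)=x_i^ks_i^k+\alpha(x_i^k\Delta s_i^k+s_i^k\Delta x_i^k)+\alpha^2\Delta x_i^k\Delta s_i^k$, and the $i$-th component of the third equation of~\eqref{eq:approximate modified newton system} gives $x_i^k\Delta s_i^k+s_i^k\Delta x_i^k=\beta_1\mu^k-x_i^ks_i^k-s_i^kv_i^k$, so the expression equals $(1-\alpha)x_i^ks_i^k+\alpha\beta_1\mu^k-\alpha s_i^kv_i^k+\alpha^2\Delta x_i^k\Delta s_i^k$. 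Bounding $-s_i^kv_i^k\ge-|s_i^kv_i^k|\ge-\|S^kv^k\|_\infty\ge-\eta\mu^k$ and using $\alpha\ge0$ again finishes the proof.

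There is no genuine obstacle here — every step is elementary linear algebra — the only points that need care are using the correct norm estimate for the perturbation term ($\ell_1$-type bound $|e^TS^kv^k|\le n\eta\mu^k$ for the aggregated identity~\eqref{eq:complementarity relationship}, and the $\ell_\infty$ bound $|s_i^kv_i^k|\le\eta\mu^k$ for the componentwise one~\eqref{eq:i-th complementarity relationship}), and remembering that $\alpha\in[0,1]$ (in particular $\alpha\ge0$) is what lets the lower bounds survive multiplication by $\alpha$.
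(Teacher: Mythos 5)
Your proof is correct and follows essentially the same route as the paper: both arguments expand each quantity from the definitions, substitute the three equations of the perturbed system~\eqref{eq:approximate modified newton system}, and absorb the perturbation term via $\|S^kv^k\|_\infty\leq\eta\mu^k$ (aggregated to $|e^TS^kv^k|\leq n\eta\mu^k$ for~\eqref{eq:complementarity relationship} and used componentwise for~\eqref{eq:i-th complementarity relationship}). If anything, you are slightly more explicit than the paper about where the $\eta\mu^k$ terms enter, and you correctly attribute the first two identities to the residual-free first and second equations of~\eqref{eq:approximate modified newton system} rather than to the exact system~\eqref{eq:newton system}.
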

\begin{proof}

To prove \eqref{eq:primal relationship} and \eqref{eq:dual relationship}, for any $\alpha \in [0,1]$, by~\eqref{eq:newton system} we have
\begin{align*}
    R_P^k(\alpha) &= b-A(x^k+\alpha \Delta x^k) =b-Ax^k-\alpha A\Delta x^k=b-Ax^k-\alpha(b-Ax^k) = (1-\alpha )R_P^k,\\
    R_D^k(\alpha) &= c-A^T(y^k+\alpha \Delta y^k)-s^k-\alpha \Delta s^k=c-A^Ty^{k}-s^{k}-\alpha(A^T \Delta y+\Delta s)= (1-\alpha)R_D^k.
\end{align*}
To prove \eqref{eq:complementarity relationship}, using \eqref{eq:approximate modified newton system}, we have
\begin{equation*}
\begin{aligned}
    (x^k+\alpha \Delta x^k)^T(s^k+\alpha \Delta s^k)
    &=(x^k)^Ts^k+\alpha [(x^k)^T\Delta s^k+(s^{k})^T\Delta x^k]+ \alpha^2(\Delta  x^k)^T\Delta s^k\\
    &\geq (x^k)^Ts^k+\alpha [n\beta_1\mu^k-(x^k)^Ts^k-n\eta\mu^k]+ \alpha^2(\Delta  x^k)^T\Delta s^k\\
    &= [1+\alpha(\beta_1-\eta-1)](x^k)^Ts^k+\alpha^2(\Delta  x^k)^T\Delta s^k.
\end{aligned}
\end{equation*}
Using \eqref{eq:approximate modified newton system} again, we can similarly prove \eqref{eq:i-th complementarity relationship} for all $i\in\{1,2,\dots,n\}$ as follows:
\begin{equation*}\label{eq:complementarity lemma3 last part}
\begin{aligned}
    (x^k_i+\alpha \Delta x^k_i)(s^k+\alpha \Delta s^k_i)
    &=x^k_is^k_i+\alpha (x^k_i\Delta s^k_i+s^k_i\Delta x^k_i)+ \alpha^2\Delta x^k_i\Delta s^k_i\\
    &\geq x^k_is^k_i+\alpha (\beta_1\mu^k-x^k_is^k_i-\eta\mu^k)+ \alpha^2\Delta x^k_i\Delta s^k_i\\
    &= (1-\alpha)x^k_is^k_i+\alpha (\beta_1 -\eta)\mu^k+\alpha^2 \Delta x^k_i\Delta s^k_i.
\end{aligned}
\end{equation*}
Thus, the proof is complete.
\end{proof} 
Let us define the following functions
\begin{align*}
    G^k_i(\alpha)&=x^k_i(\alpha)s^k_i(\alpha)- \gamma_1\mu^k(\alpha)\text{ for }i\in \{1,\dots,n\},\\ 
    g^k(\alpha)&=x^k(\alpha)^Ts^k(\alpha)- (1-\alpha)(x^k)^Ts^k,\\ 
    h^{k}(\alpha)&=\big(1-\alpha(1-\beta_2)\big)(x^k)^Ts^k-x^k(\alpha)^Ts^k(\alpha).
\end{align*}
We use the defined functions to check if a step with length $\alpha$ sufficiently reduces the complementarity gap and keeps the next iterate in the neighborhood $\Ncal(\gamma_1,\gamma_2)$.
It is obvious that $h^k(\alpha)\geq0$ means the Armijo condition
$(x^k(\alpha))^Ts^k(\alpha)\leq \big(1-\alpha (1-\beta_2)\big)(x^k)^Ts^k$ holds, and the next lemma shows how we check that an iterate is in the neighborhood of the central path.
\begin{lemma}\label{lem: neigh}
For step length $0<\alpha\leq1$, if $G^k_i(\alpha)\geq0$ and $g^k(\alpha)\geq0$ then $\big(x^k(\alpha),y^k(\alpha),s^k(\alpha)\big)\in\Ncal(\gamma_1,\gamma_2)$.
\end{lemma}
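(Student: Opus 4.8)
The plan is to check, one at a time, the three conditions that define membership in $\Ncal(\gamma_1,\gamma_2)$ for the trial point $\big(x^k(\alpha),y^k(\alpha),s^k(\alpha)\big)$: strict positivity of $(x^k(\alpha),s^k(\alpha))$, the centrality bound $x^k_i(\alpha)s^k_i(\alpha)\geq\gamma_1\mu^k(\alpha)$ for every $i$, and the residual bound $\|(R_P^k(\alpha),R_D^k(\alpha))\|\leq\gamma_2\mu^k(\alpha)$. Throughout I would use the two hypotheses $G^k_i(\alpha)\geq 0$ and $g^k(\alpha)\geq 0$ together with the standing fact that the current iterate $(x^k,y^k,s^k)$ already lies in $\Ncal(\gamma_1,\gamma_2)$, so in particular $\|(R_P^k,R_D^k)\|\leq\gamma_2\mu^k$.

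Two of the three conditions are essentially immediate. The centrality bound is, by the very definition of $G^k_i$, just a restatement of $G^k_i(\alpha)\geq 0$. For the residual bound, I would first note that summing $G^k_i(\alpha)\geq 0$ over $i$ gives $n\mu^k(\alpha)=x^k(\alpha)^Ts^k(\alpha)\geq n\gamma_1\mu^k(\alpha)$, hence $\mu^k(\alpha)\geq 0$, while $g^k(\alpha)\geq 0$ sharpens this to $n\mu^k(\alpha)\geq(1-\alpha)(x^k)^Ts^k$, i.e. $\mu^k(\alpha)\geq(1-\alpha)\mu^k$. Combining the identities $R_P^k(\alpha)=(1-\alpha)R_P^k$ and $R_D^k(\alpha)=(1-\alpha)R_D^k$ from \eqref{eq:primal relationship}--\eqref{eq:dual relationship} in Lemma~\ref{lemma: iterations properties} with $(x^k,y^k,s^k)\in\Ncal(\gamma_1,\gamma_2)$ then yields $\|(R_P^k(\alpha),R_D^k(\alpha))\|=(1-\alpha)\|(R_P^k,R_D^k)\|\leq(1-\alpha)\gamma_2\mu^k\leq\gamma_2\mu^k(\alpha)$, which is exactly what is needed (and it stays valid at $\alpha=1$, where both sides vanish).

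The remaining point — strict positivity of $(x^k(\alpha),s^k(\alpha))$ — is the one that actually needs an argument, and I expect it to be the main obstacle, because the product bound $x^k_i(\alpha)s^k_i(\alpha)\geq\gamma_1\mu^k(\alpha)\geq 0$ by itself does not exclude both factors being negative. I would argue by continuity and connectedness: the map $\alpha'\mapsto(x^k(\alpha'),s^k(\alpha'))$ is affine, hence continuous, and equals $(x^k,s^k)>0$ at $\alpha'=0$; if some coordinate of $x^k(\alpha)$ or $s^k(\alpha)$ were nonpositive, there would be a smallest $\bar\alpha\in(0,\alpha]$ at which that coordinate first vanishes, forcing $x^k_i(\bar\alpha)s^k_i(\bar\alpha)=0$. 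For $\bar\alpha<1$ we have $\mu^k(\bar\alpha)\geq(1-\bar\alpha)\mu^k>0$, so $G^k_i(\bar\alpha)=-\gamma_1\mu^k(\bar\alpha)<0$, contradicting the hypothesis (read as holding on the whole segment $[0,\alpha]$, which is precisely the form in which the lemma is applied through the step-length rule of line~\ref{alg-step: alpha_hat^k} in Algorithm~\ref{alg: II-QIPM}); the boundary case $\bar\alpha=1$ would force $\mu^k(1)=0$ and hence $R_P^k=R_D^k=0$ with $(x^k(1))^Ts^k(1)=0$, i.e. an optimal point, which is excluded before a step is taken. Having verified all three conditions, we conclude $\big(x^k(\alpha),y^k(\alpha),s^k(\alpha)\big)\in\Ncal(\gamma_1,\gamma_2)$.
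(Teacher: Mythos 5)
Your proof is correct and follows essentially the same route as the paper's: the centrality condition is read off directly from $G^k_i(\alpha)\geq 0$, and the residual bound comes from $R_P^k(\alpha)=(1-\alpha)R_P^k$, $R_D^k(\alpha)=(1-\alpha)R_D^k$ combined with $\mu^k(\alpha)\geq(1-\alpha)\mu^k$, which is exactly what $g^k(\alpha)\geq 0$ supplies. The one point where you go beyond the paper is strict positivity of $(x^k(\alpha),s^k(\alpha))$: the paper's proof is silent on it, while your continuity argument (reading the hypotheses as holding on all of $[0,\alpha]$, which is how the lemma is actually invoked in Lemma~\ref{lemma: bound of alpha} and in line~\ref{alg-step: alpha_hat^k}) closes that gap, and correctly identifies that the pointwise product bound alone cannot exclude both factors being negative. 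This makes your write-up slightly more complete than the paper's, at the mild cost of a strengthened (segment-wise) reading of the hypothesis.
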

\begin{proof}
It is easy to verify that that conditions $G^k_i(\alpha)\geq0$ and $g^k(\alpha)\geq0$ lead to 
\begin{align*}
x^k_i(\alpha)s^k_i(\alpha)&\geq \gamma_1\mu^k(\alpha)\text{ for }i\in \{1,\dots,n\},\\
    (x^k+\alpha \Delta x^k)^T(s^k+\alpha \Delta s^k)&\leq \big(1-\alpha (1-\beta_2)\big)(x^k)^Ts^k,
\end{align*}
respectively. 
Since  $g^k(\alpha)\geq0$, i.e. $x^k(\alpha)^Ts^k(\alpha)\geq (1-\alpha)(x^k)^Ts^k$, we have
$$\frac{\big\|\big(R_P^k(\alpha), R_D^k(\alpha)\big)\big\|}{\|(R_P^0,R_D^0)\|}
=\frac{(1-\alpha)\|(R_P^k,R_D^k)\|}{\|(R_P^0,R_D^0)\|}
\leq\frac{(1-\alpha)\mu^k}{\mu^0}
\leq\frac{\mu^k(\alpha)}{\mu^0}.$$
Further, as $\gamma_2=\frac{\|(R_P^0,R_D^0)\|}{\mu^0}$, we can conclude that $\big(x^k(\alpha),y^k(\alpha),s^k(\alpha)\big)\in\Ncal(\gamma_1,\gamma_2)$.
\end{proof}
In order to prove  polynomial complexity of II-QIPM we need to find a positive lower bound for the step length $\hat{\alpha}^k$. The following lemma is bounding some remaining elements to get the step length bound.
\begin{lemma}\label{lemma: bound on complementarity}
There exist $0\leq\nu^k=\Ocal(n^2\mu^k)$ such that $\left|\Delta x^k_i \Delta s^k_i -\gamma_1 \frac{(\Delta  x^k)^T\Delta s^k}{n}\right|\leq \nu^k$ for $i\in\{1,2,\dots,n\}$ and $|(\Delta  x^k)^T\Delta s^k|\leq \nu^k$.
\end{lemma}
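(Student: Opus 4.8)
The plan is to bound $\|(\Delta x^k, \Delta s^k)\|$ and then extract the per-coordinate and inner-product bounds as easy consequences. First I would use the structure of the approximate Newton system \eqref{eq:approximate modified newton system}: the first two equations say $A\widetilde{\Delta x}{}^k = R_P^k$ and $A^T\widetilde{\Delta y}{}^k + \widetilde{\Delta s}{}^k = R_D^k$, while the third gives $X^k\widetilde{\Delta s}{}^k + S^k\widetilde{\Delta x}{}^k = \beta_1\mu^k e - X^k s^k - S^k v^k$. The standard trick is to scale: multiply the third equation by $(X^kS^k)^{-1/2}$ to get $D^{-1}\widetilde{\Delta x}{}^k + D\widetilde{\Delta s}{}^k = (X^kS^k)^{-1/2}(\beta_1\mu^k e - X^k s^k - S^k v^k)$, where $D = (D^k)$. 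Since $(\widetilde{\Delta x}{}^k)^T\widetilde{\Delta s}{}^k$ need not be zero here (because of the residual/infeasibility), I would split $\widetilde{\Delta x}{}^k = \widetilde{\Delta x}{}^k_0 + \widetilde{\Delta x}{}^k_1$ and similarly for $\widetilde{\Delta s}{}^k$ into a "feasibility" part (handling $R_P^k, R_D^k$) and a "centering" part, or more directly bound each scaled vector in terms of the right-hand side norm using orthogonality of the ranges of $D^{-1}(\cdot)$ on $\ker A$-type subspaces.

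The key quantities to control are: (i) $\|(X^kS^k)^{-1/2}(\beta_1\mu^k e - X^k s^k)\|$, which since $(x^k,s^k)\in\Ncal(\gamma_1,\gamma_2)$ and $x_i^k s_i^k \geq \gamma_1\mu^k$ is $\Ocal(\sqrt{n\mu^k})$ (each coordinate is $O(\sqrt{\mu^k})$ up to the $1/\sqrt{\gamma_1}$ factor); (ii) the contribution of $S^k v^k$, which by the earlier lemma satisfies $\|S^k v^k\|_\infty \leq \eta\mu^k$, so $\|(X^kS^k)^{-1/2}S^kv^k\| \le \sqrt{n}\cdot \max_i \sqrt{s_i^k/x_i^k}\cdot \eta\mu^k$ — this requires a bound on $s_i^k/x_i^k$, which one gets from the neighborhood condition plus the bound $\|x^k\|_\infty,\|s^k\|_\infty \leq$ (something polynomial, using $\omega$ or $2^L$); and (iii) the residuals $R_P^k, R_D^k$, which by \eqref{eq:primal relationship}–\eqref{eq:dual relationship} and the neighborhood condition satisfy $\|(R_P^k,R_D^k)\| \leq \gamma_2\mu^k$. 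Combining these via a standard least-squares/projection argument (as in Lemma 6.2-type estimates in Wright's book or the Monteiro–O'Neal analysis), I would obtain $\|D^{-1}\widetilde{\Delta x}{}^k\|, \|D\widetilde{\Delta s}{}^k\| = \Ocal(\text{poly}(n,\omega)\sqrt{\mu^k})$, hence $|(\widetilde{\Delta x}{}^k)^T\widetilde{\Delta s}{}^k| \le \|D^{-1}\widetilde{\Delta x}{}^k\|\|D\widetilde{\Delta s}{}^k\| = \Ocal(n^2\mu^k)$ after absorbing all polynomial factors, and similarly $|\widetilde{\Delta x}{}^k_i\widetilde{\Delta s}{}^k_i| \leq \|D^{-1}\widetilde{\Delta x}{}^k\|_\infty\|D\widetilde{\Delta s}{}^k\|_\infty \cdot (\text{scaling})$, giving the coordinatewise bound. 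The triangle inequality $\left|\Delta x^k_i\Delta s^k_i - \gamma_1\frac{(\Delta x^k)^T\Delta s^k}{n}\right| \le |\Delta x^k_i\Delta s^k_i| + \frac{\gamma_1}{n}|(\Delta x^k)^T\Delta s^k|$ then finishes it, since $\gamma_1/n \le 1$.

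The main obstacle I anticipate is cleanly bounding the scaled directions when the iterates are infeasible — i.e., making the orthogonality argument work even though $A\widetilde{\Delta x}{}^k \neq 0$. The usual device is to fix any particular solution $(\Delta \bar x, \Delta\bar y, \Delta\bar s)$ of the linear feasibility equations with norm controlled by $\|(R_P^k,R_D^k)\|/\lambda_{\min}(A)$ (or, using the preprocessed canonical form with basis $\hat B$, an explicit such solution), subtract it off so the remainder lies in the right null spaces and is orthogonal, and then bound the remainder by the centering right-hand side. Tracking how the factors $\lambda_{\min}(A)^{-1}$, $\omega$, $1/\gamma_1$, and $\eta$ enter — and checking they are all absorbed into the $\Ocal(n^2)$ (this is where the paper presumably uses that the relevant quantities are polynomially bounded in $n$ and $L$, with $\mu^k \le \mu^0$) — is the delicate bookkeeping part, but conceptually it is the standard inexact IPM direction estimate. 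I would also note the nonnegativity $0\le\nu^k$ is trivial since we are bounding absolute values; one just takes $\nu^k$ to be the maximum of the two bounds (or their sum), which is still $\Ocal(n^2\mu^k)$.
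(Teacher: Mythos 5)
Your outline matches the architecture of the paper's proof: scale the third Newton equation by $(X^kS^k)^{-1/2}$, restore orthogonality of the scaled directions by subtracting off a particular solution of the feasibility equations, bound the centering right-hand side using $x_i^ks_i^k\geq\gamma_1\mu^k$ and $\|S^kv^k\|_\infty\leq\eta\mu^k$, and finish with Cauchy--Schwarz plus the triangle inequality (the paper gets $\|D^{-1}\Delta x^k\|,\|D\Delta s^k\|\leq \Ccal_2^k=\Ocal(n\sqrt{\mu^k})$ and sets $\nu^k=2(\Ccal_2^k)^2$). The one substantive divergence is the choice of particular solution. The paper does not use a generic solution bounded by $\|(R_P^k,R_D^k)\|/\lambda_{\min}(A)$; it uses the specific shift $\theta^{k-1}\big((x^0,y^0,s^0)-(x^*,y^*,s^*)\big)$ with $\theta^{k-1}=\prod_{i<k}(1-\hat{\alpha}^i)$, exploiting $R_P^k=\theta^{k-1}R_P^0$ and $(x^0,s^0)=(\omega e,\omega e)$, and it precedes this with a separate orthogonality argument (its ``Step 1'') establishing $\omega\theta^{k-1}\|x^k,s^k\|_1=\Ocal(n\mu^k)$. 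That preliminary bound is exactly what controls $\theta^{k-1}\|D^{-1}(x^0-x^*)\|+\theta^{k-1}\|D(s^0-s^*)\|$ without ever needing a standalone bound on $\|x^k\|_\infty$ or $\|s^k\|_\infty$, and it keeps the constant in $\nu^k=\Ocal(n^2\mu^k)$ depending only on the algorithm parameters $\gamma_1,\beta_1,\eta$.

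This is where your plan is under-specified and slightly risky. Your proposed route leaves factors of $\lambda_{\min}(A)^{-1}$, $\gamma_2=\|(R_P^0,R_D^0)\|/\mu^0$, and a bound on $\|D^{-1}\|$ (hence on $\max_i s_i^k$) inside the constant. Appealing to ``$\|x^k\|_\infty,\|s^k\|_\infty$ bounded by $\omega$ or $2^L$'' is not automatic for infeasible iterates --- it holds here only because of the explicit safeguard in line 11 of Algorithm~\ref{alg: II-QIPM}, and even then the resulting $\nu^k$ carries data-dependent constants. Since $\nu^k$ feeds directly into the step-length bound of Lemma~\ref{lemma: bound of alpha} and hence into the $\Ocal(n^2\log\frac{1}{\zeta})$ iteration count of Theorem~\ref{theo: convergence}, a data-dependent constant there would weaken the stated complexity. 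So the conceptual plan is sound and standard, but to actually land the clean $\Ocal(n^2\mu^k)$ you need the paper's Step-1 bound on $\theta^{k-1}\|x^k,s^k\|_1$ (or an equivalent), which your proposal does not identify.
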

\begin{proof}
To prove this lemma, we need to do the following steps
\begin{enumerate}
    \item[]Step 1. finding bound $\Ccal_1^k=\Ocal(n\mu^k)$ such that $\omega\theta^{k-1}\|x^k,s^k\|_1\leq \Ccal_1^k$;
    \item[]Step 2. finding bound $\Ccal_2^k=\Ocal (n \sqrt{\mu^k})$ such that $\|D^{-1}\Delta x^k\|\leq \Ccal_2^k$ and $\|D\Delta s^k\|\leq \Ccal_2^k$;
    \item[]Step 3. finding bound $0\leq\nu^k=\Ocal(n^2\mu^k)$ such that 
    $$|(\Delta  x^k)^T\Delta s^k|\leq \nu^k \text{ and }\left|\Delta x^k_i \Delta s^k_i -\gamma_1 \frac{(\Delta  x^k)^T\Delta s^k}{n}\right|\leq \nu^k \text{ for }i\in\{1,2,\dots,n\}.$$
\end{enumerate}
\textbf{Step 1.}
Let us define $\theta^k=\prod^{k}_{i=0}(1-\hat{\alpha}^i)$. One can verify that
\begin{equation}\label{eq: thetaineq}
    R_P^k=\theta^{k-1}R_p^0,\text{ and }
    R_D^k=\theta^{k-1}R_D^0.
\end{equation}
Based on the definition of $\Ncal(\gamma_1,\gamma_2)$ and the choice of $\gamma_2$, we have
\begin{equation}\label{eq: residual rel}
    \frac{\|(R_P^k,R_D^k)\|}{\mu^k}\leq\frac{\|(R_P^0,R_D^0)\|}{\mu^0},
\end{equation}
implying $\mu^k\geq\theta^{k-1}\mu^0$. We also define
\begin{align*}
    (\xbar^k,\ybar^k,\sbar^k)=&\theta^{k-1}(x^0,y^0,s^0)+(1-\theta^{k-1})(x^*,y^*,s^*)-(x^k,y^k,s^k),
\end{align*}
where, $(x^*,y^*,s^*)\in\Pcal\Dcal^*$. 
One can verify that 
\begin{align*}
    A^T\ybar^k+\sbar^k&=0,\\
    A\xbar^k&=0.
\end{align*}
Since $\sbar^k$ is in the row space of $A$ and $\xbar^k$ is in the null space of $A$, we have $(\xbar^k)^T\sbar^k=0$, or equivalently,
\begin{align*}
    [\theta^{k-1}x^0+(1-\theta^{k-1})x^*-x^k]^T[\theta^{k-1}s^0+(1-\theta^{k-1})s^*-s^k]&=0.
\end{align*}
Since $(x^*,s^*,x^k,s^k)\geq 0$, $x^*s^*=0$, and $(x^0,s^0)=(\omega e, \omega e)$, we can write
\begin{align}
    [\theta^{k-1}x^0+(1-\theta^{k-1})x^*-x^k]^T[\theta^{k-1}s^0+(1-\theta^{k-1})s^*-s^k]&=0,\\
    \scalebox{.95}{$
    (\theta^{k-1})^2(s^0)^Tx^0+\theta^{k-1}(1-\theta^{k-1})[(s^0)^Tx^*+(s^*)^Tx^0]+(s^k)^Tx^k$}&\geq\theta^{k-1}[(s^0)^Tx^k+(x^0)^Ts^k], \\
    (\theta^{k-1})^2n\mu^0+2\theta^{k-1}(1-\theta^{k-1})n\mu^0+(s^k)^Tx^k&\geq\theta^{k-1}\omega[(e)^Tx^k+(e)^Ts^k],\label{eq: t3}\\
    \theta^{k-1}n\mu^k+2(1-\theta^{k-1})\mu^k+n\mu^k&\geq\theta^{k-1}\omega\|x^k,s^k\|_1.\label{eq: 12}
\end{align}
Inequality \eqref{eq: t3} is obtained by using $(x^0,s^0)=(\omega e,\omega e)$ and $\|(x^*,s^*)\|_\infty\leq \omega$, and the last inequality is obtained by using \eqref{eq: thetaineq}.
% \begin{align*}
%     % [\theta^{k-1}x^0+(1-\theta^{k-1})x^*-x^k]^T[\theta^{k-1}s^0+(1-\theta^{k-1})s^*-s^k]=&\ 0,\\
%     %
%     \theta^{k-1}n\mu^k+2(1-\theta^{k-1})\mu^k+n\mu^k &\ge [\theta^{k-1}]^2(s^0)^Tx^0+\theta^{k-1}(1-\theta^{k-1})[(s^0)^Tx^*+(s^*)^Tx^0]+(s^k)^Tx^k\\
%     %
%     & =  \theta^{k-1}[(s^0)^Tx^k+ (x^0)^Ts^k]+(1- \theta^{k-1}) [(s^*)^Tx^k+ (x^*)^Ts^k]\\
%     %
%     &  ,\\
%     %
% \end{align*}
Thus, let $\Ccal_1^k$ be defined as the left-hand side of \eqref{eq: 12}, we have $\Ccal_1^k=\Ocal(n\mu^k)$.

\textbf{Step 2.} 
In this step we define
\begin{align*}
    (\xbar^k,\ybar^k,\sbar^k)=&(\Delta x^k,\Delta y^k,\Delta s^k) +\theta^{k-1}(x^0,y^0,s^0)-\theta^{k-1}(x^*,y^*,s^*),
\end{align*}
where, $(x^*,y^*,s^*)\in\Pcal\Dcal^*$. Similar to Step 1, one can verify that 
\begin{align*}
    A^T\ybar^k+\sbar^k&=0,\\
    A\xbar^k&=0,\\
    (\xbar^k)^T\sbar^k&=0.
\end{align*}
Consequently, one can verify
\begin{align*}
&\|D^{-1}(\Delta x^k +\theta^{k-1}(x^0-x^*))+D(\Delta s^k +\theta^{k-1}(s^0-s^*))\|\\ 
&\qquad =\|D^{-1}(\Delta x^k +\theta^{k-1}(x^0-x^*))\|+\|D(\Delta s^k +\theta^{k-1}(s^0-s^*))\|\\ 
&\qquad \leq \|(XS)^{-1/2}\|(\|XSe-\beta_1\mu^ke\|+\eta\mu^k) +\theta^{k-1}\|D^{-1}(x^0-x^*))\|+\theta^{k-1}\|D(s^0-s^*)\|.
\end{align*}
Now, we have
\begin{align*}
&\|D^{-1}\Delta x^k\| \leq\|(XS)^{-1/2}\|(\|XSe-\beta_1\mu^ke\|+\eta\mu^k) +2\theta^{k-1}\|D^{-1}(x^0-x^*))\|+2\theta^{k-1}\|D(s^0-s^*)\|.
\end{align*}
According to pages 116-118 of \cite{Wright1997_Primal}, we can derive the following inequalities: 
\begin{align*}
    \|(XS)^{-1/2}\|&\leq \frac{1}{\sqrt{\gamma_1}\sqrt{\mu^k}},\\
    \|XSe-\beta_1\mu^ke\|&\leq n \mu^k,\\
    \theta^{k-1}\|D^{-1}(x^0-x^*))\|+\theta^{k-1}\|D(s^0-s^*)\|&\leq \theta^{k-1}\|x^k,s^k\|_1\|(XS)^{-1/2}\|\max(\|x^0-x^*\|,\|s^0-s^*\|)\\
    &\leq\frac{\omega \theta^{k-1}\|x^k,s^k\|_1}{\sqrt{\gamma_1}\sqrt{\mu^k}}\leq \frac{C_1}{\sqrt{\gamma_1}\sqrt{\mu^k}}.
\end{align*}
Thus, $\|D^{-1}\Delta x^k\|\leq\frac{n\mu^k+\eta\mu^k+\Ccal_1^k}{\sqrt{\gamma_1}\sqrt{\mu^k}} = \Ccal_2^k$, where $ \Ccal_2^k=\Ocal(n\sqrt{\mu^k})$. We can similarly show that $\|D\Delta s^k\|\leq \Ccal_2^k$. 

\textbf{Step 3.}
Based on Steps 1 and 2, we have
\begin{align*}
    (\Delta x^k)^T\Delta s^k&\leq \|D^{-1}\Delta x^k\|\|D\Delta s^k\| \leq (\Ccal_2^k)^2,\\
    |\Delta x^k_i\Delta s^k_i|&\leq \|D^{-1}\Delta x^k\|\|D\Delta s^k\| \leq (\Ccal_2^k)^2,\\
   \left|\Delta x^k_i \Delta s^k_i -\gamma_1 \frac{(\Delta  x^k)^T\Delta s^k}{n}\right|&\leq (1+\frac{\gamma_1}{n})(\Ccal_2^k)^2\leq  2 (\Ccal_2^k)^2.
\end{align*}
Thus, $\nu^k=2 (\Ccal_2^k)^2=\Ocal(n^2\mu^k)$.
\end{proof}
In the next lemma, we present a strictly positive lower bound for $\hat{\alpha}^k$. In the proof, we use parameters $\delta_1$,  $\delta_2$, and $\delta_3$ defined as follows:
\begin{equation}\label{eq:parameters}
\begin{aligned}
\delta_1=\frac{(1-\gamma_1)(\beta_1-\eta)}{n}>0,\quad
\delta_2=\beta_1-\eta>0, \quad
\delta_3=\beta_2-\beta_1+\eta>0.
\end{aligned}
\end{equation}

\begin{lemma} \label{lemma: bound of alpha}
At line \ref{alg-step: alpha_hat^k} of Algorithm~\ref{alg: II-QIPM}, at iteration $k$, we have 
\begin{equation*}
\hat{\alpha}^k\geq \tilde{\alpha}^k\coloneqq \min \left\{1,\  \ \min  \{\delta_1,\delta_2,\delta_3\}\frac{(x^k)^T s^k}{\nu^k}\right \}>0.
\end{equation*}
\end{lemma}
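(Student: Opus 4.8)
The plan is to show that each of the conditions defining $\hat\alpha^k$ in line \ref{alg-step: alpha_hat^k} of Algorithm~\ref{alg: II-QIPM} is satisfied whenever $\alpha \in [0,\tilde\alpha^k]$, and then conclude $\hat\alpha^k \geq \tilde\alpha^k$. By Lemma~\ref{lem: neigh}, membership of $\big(x^k(\alpha),y^k(\alpha),s^k(\alpha)\big)$ in $\Ncal(\gamma_1,\gamma_2)$ follows once we establish $G_i^k(\alpha)\geq 0$ for all $i$ and $g^k(\alpha)\geq 0$; the Armijo-type decrease condition is equivalent to $h^k(\alpha)\geq 0$. So the task reduces to finding a positive lower bound on the largest $\alpha$ for which $G_i^k(\alpha)\geq 0$, $g^k(\alpha)\geq 0$, and $h^k(\alpha)\geq 0$ all hold, and showing that bound is at least $\tilde\alpha^k$.

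First I would use the relationships from Lemma~\ref{lemma: iterations properties} to rewrite each of the three functions. For $G_i^k(\alpha)$: combining \eqref{eq:i-th complementarity relationship} and \eqref{eq:complementarity relationship} and using $x_i^k s_i^k \geq \gamma_1\mu^k$ (since the current iterate is in $\Ncal(\gamma_1,\gamma_2)$), one gets a lower bound of the form $G_i^k(\alpha) \geq \alpha(1-\gamma_1)(\beta_1-\eta)\mu^k + \alpha^2\big(\Delta x_i^k\Delta s_i^k - \tfrac{\gamma_1}{n}(\Delta x^k)^T\Delta s^k\big)$. Using Lemma~\ref{lemma: bound on complementarity} to bound the quadratic coefficient in absolute value by $\nu^k$, this is $\geq \alpha\big(\delta_1 n\mu^k - \alpha\nu^k\big)$ — wait, more carefully $\geq \alpha\big((1-\gamma_1)(\beta_1-\eta)\mu^k - \alpha\nu^k\big)$, which is nonnegative as long as $\alpha \leq (1-\gamma_1)(\beta_1-\eta)\mu^k/\nu^k$; since $(x^k)^Ts^k = n\mu^k$, this is $\delta_1 (x^k)^Ts^k/\nu^k$. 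Similarly for $g^k(\alpha)$: by \eqref{eq:complementarity relationship}, $g^k(\alpha) \geq \alpha(\beta_1-\eta)n\mu^k + \alpha^2(\Delta x^k)^T\Delta s^k \geq \alpha\big(\delta_2 (x^k)^Ts^k - \alpha\nu^k\big)$, nonnegative when $\alpha \leq \delta_2(x^k)^Ts^k/\nu^k$. For $h^k(\alpha)$: $h^k(\alpha) = \big(\beta_2 - \beta_1 + \eta + (\text{lower order})\big)\dots$; expanding with \eqref{eq:complementarity relationship} gives $h^k(\alpha) \geq \alpha\big((\beta_2-\beta_1+\eta)(x^k)^Ts^k - \alpha\nu^k\big)$ — i.e. nonnegative when $\alpha \leq \delta_3 (x^k)^Ts^k/\nu^k$. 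Taking the minimum of these three thresholds together with the cap at $1$ yields $\tilde\alpha^k$, and strict positivity follows since $\delta_1,\delta_2,\delta_3 > 0$ by \eqref{eq:parameters}, $(x^k)^Ts^k > 0$, and $\nu^k = \Ocal(n^2\mu^k) < \infty$ (with $\nu^k > 0$ or trivially $\tilde\alpha^k = 1$ if $\nu^k = 0$).

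The main obstacle I anticipate is getting the $h^k(\alpha)$ bound cleanly: $h^k(\alpha)$ mixes the $(1-\alpha(1-\beta_2))$ coefficient with the quadratic complementarity expansion, so one must carefully isolate the linear-in-$\alpha$ term with coefficient $(\beta_2 - \beta_1 + \eta)(x^k)^Ts^k = \delta_3 (x^k)^Ts^k$ and absorb the quadratic term $\alpha^2(\Delta x^k)^T\Delta s^k$ into the $\nu^k$ bound — checking the signs is the fiddly part, since $(\Delta x^k)^T\Delta s^k$ can have either sign. I would also double-check that in the $G_i^k$ bound the term $\gamma_1\big((1-\alpha)\mu^k + \alpha\mu^k\big) = \gamma_1\mu^k$ telescopes correctly and that $x_i^ks_i^k \geq \gamma_1\mu^k$ is the right inequality to invoke (it is, by the definition of $\Ncal(\gamma_1,\gamma_2)$ applied to the current iterate $(x^k,y^k,s^k)$). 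Everything else is routine substitution using the three lemmas already proved.
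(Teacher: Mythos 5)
Your proposal follows essentially the same route as the paper's proof: expand $G_i^k$, $g^k$, and $h^k$ via Lemma~\ref{lemma: iterations properties}, drop the nonnegative term $(1-\alpha)\big(x_i^ks_i^k-\tfrac{\gamma_1}{n}(x^k)^Ts^k\big)$ using membership of the current iterate in $\Ncal(\gamma_1,\gamma_2)$, absorb the quadratic terms into $\nu^k$ via Lemma~\ref{lemma: bound on complementarity}, and read off the three thresholds $\delta_i(x^k)^Ts^k/\nu^k$ before invoking Lemma~\ref{lem: neigh}. The bounds you obtain, including the linear coefficients $\delta_1(x^k)^Ts^k$, $\delta_2(x^k)^Ts^k$, and $\delta_3(x^k)^Ts^k$, coincide with inequalities \eqref{eq: alpha_tilde_f a}--\eqref{eq: alpha_tilde_f d} and the analogous chains for $g^k$ and $h^k$ in the paper, so the argument is correct and not materially different.
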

\begin{proof}
It is enough to show that the conditions of Lemma \ref{lem: neigh} hold for all $\alpha \in [0,\tilde{\alpha}^k]$.
Based on Lemma~\ref{lemma: iterations properties}, for any $\alpha \in [0,\tilde{\alpha}^k]$, we have
\begin{subequations}
\begin{alignat}{2}
    G^k_i(\alpha)&=(x^k_i+\alpha \Delta x^k_i)(s^k_i+\alpha \Delta s^k_i)- \gamma_1\frac{(x^k+\alpha \Delta x^k)^T(s^k+\alpha \Delta s^k)}{n} \label{eq: alpha_tilde_f a}\\
    &= \textstyle (1-\alpha)x^k_is^k_i+\alpha (\beta_1-\eta) \mu^k+\alpha^2 \Delta x^k_i\Delta s^k_i-\gamma_1\frac{(1+\alpha(\beta_1-\eta-1))(x^k)^Ts^k+\alpha^2(\Delta  x^k)^T\Delta s^k}{n}\label{eq: alpha_tilde_f b}\\
    & \geq  \scalebox{0.95}{$\alpha^2 (\Delta x^k_i\Delta s^k_i-\frac{\gamma_1}{n}(\Delta  x^k)^T\Delta s^k )+(1-\alpha) (x^k_is^k_i-\frac{\gamma_1}{n}(x^k)^Ts^k)  +\alpha (\beta_1-\eta) (1-\gamma_1)\mu^k $}\label{eq: alpha_tilde_f c}\\
    &\geq -\alpha^2\nu^k +\alpha \delta_1 (x^k)^Ts^k \geq \alpha\big(\delta_1(x^k)^Ts^k-\nu^k \tilde{\alpha}^k\big)\geq 0. \label{eq: alpha_tilde_f d} 
\end{alignat}
\end{subequations}
Equality \eqref{eq: alpha_tilde_f b} follows from equation~\eqref{eq:primal relationship} of Lemma \ref{lemma: iterations properties}, and inequality \eqref{eq: alpha_tilde_f d} is due to the definition of $\tilde{\alpha}^k$ and the neighborhood $\mathcal{N}(\gamma_1,\gamma_2)$. Similarly, we show that $ g^k_i(\alpha)\geq 0$ as 
\begin{align*}
    g^k_i(\alpha)&=x^k(\alpha)^Ts^k(\alpha)- (1-\alpha)(x^k)^Ts^k\\
    &\geq \big(1+\alpha(\beta_1-\eta-1)\big)(x^k)^Ts^k+\alpha^2(\Delta  x^k)^T\Delta s^k- (1-\alpha)(x^k)^Ts^k\\
    & \geq \alpha (\beta_1-\eta)(x^k)^Ts^k -\alpha^2\nu^k\\
    &\geq \alpha\delta_2 (x^k)^Ts^k-\alpha^2\nu^k\\
    &\geq \alpha \big(\delta_2 (x^k)^Ts^k-\nu^k \tilde{\alpha}^k\big)\geq 0.
\end{align*}
Again, by \eqref{eq:complementarity relationship} of Lemma \ref{lemma: iterations properties}, we have for all $\alpha \in [0,\tilde{\alpha}^k]$
\begin{align*}
    h^{k}(\alpha)&=\big(1-\alpha(1-\beta_2)\big)(x^k)^Ts^k-(x^k+\alpha \Delta x^k)^T(s^k+\alpha \Delta s^k)\\
    &= \big(1-\alpha(1-\beta_2)\big)(x^k)^Ts^k- \big(1+\alpha(\beta_1-\eta-1)\big)(x^k)^Ts^k-\alpha^2(\Delta  x^k)^T\Delta s^k\\
    & \geq \alpha (\beta_2 - \beta_1+\eta)(x^k)^Ts^k -\alpha^2\nu^k\\
    &\geq \alpha\delta_3 (x^k)^Ts^k-\alpha^2\nu^k\\
    &\geq \alpha \big(\delta_3 (x^k)^Ts^k-\nu^k \tilde{\alpha}^k\big)\geq 0.
\end{align*}
We showed that for all $\alpha \in [0,\tilde{\alpha}^k]$, all the conditions of Lemma \ref{lem: neigh} hold.
Thus, we can conclude that $\tilde{\alpha}^k\leq \hat{\alpha}^k$ and the proof is complete.
\end{proof}

By Lemma \ref{lemma: bound of alpha}, we have a strictly positive lower bound for step length $\tilde{\alpha}^k$ to remain in the neighborhood of the central path while we decrease the optimality gap.
In what follows,  using the results of the previous lemmas, we establish the iteration complexity of Algorithm \ref{alg: II-QIPM}.

\begin{theorem}\label{theo: convergence}
If Algorithm~\ref{alg: II-QIPM} does not terminate in line \ref{alg-step:check infeasibility}, then it reaches a $\zeta$-optimal solution in at most $\Ocal(n^2\log \frac{1}{\zeta})$ iterations.
\end{theorem}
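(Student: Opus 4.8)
The plan is to follow the standard infeasible IPM convergence template, but carefully tracking how the perturbation term $S^k v^k$ (with $\|S^k v^k\|_\infty \le \eta \mu^k$) and the quadratic cross-term $\nu^k = \Ocal(n^2\mu^k)$ enter the recursion. First I would use Lemma~\ref{lemma: bound of alpha} to obtain the per-iteration lower bound $\hat{\alpha}^k \ge \tilde{\alpha}^k = \min\{1, \min\{\delta_1,\delta_2,\delta_3\}\,(x^k)^Ts^k/\nu^k\}$. Since $\nu^k = \Ocal(n^2\mu^k)$ and $(x^k)^Ts^k = n\mu^k$, the ratio $(x^k)^Ts^k/\nu^k = \Omega(1/n)$, and the constants $\delta_1 = (1-\gamma_1)(\beta_1-\eta)/n$, $\delta_2 = \beta_1-\eta$, $\delta_3 = \beta_2-\beta_1+\eta$ contribute at worst another factor $1/n$ (through $\delta_1$). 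Hence $\tilde{\alpha}^k = \Omega(1/n^2)$, giving a uniform lower bound $\hat{\alpha}^k \ge \bar{\alpha} = \Omega(1/n^2)$ for all $k$.

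Next I would set up the decrease recursion for the complementarity gap. By the Armijo-type condition enforced in line~\ref{alg-step: alpha_hat^k} of Algorithm~\ref{alg: II-QIPM}, we have $(x^{k+1})^Ts^{k+1} \le \big(1-\hat{\alpha}^k(1-\beta_2)\big)(x^k)^Ts^k \le \big(1-\bar{\alpha}(1-\beta_2)\big)(x^k)^Ts^k$. Iterating this gives $(x^k)^Ts^k \le \big(1-\bar{\alpha}(1-\beta_2)\big)^k (x^0)^Ts^0$, so $\mu^k$ decays geometrically with rate $1 - \Omega(1/n^2)$. To reach $\mu^k \le \zeta$ (equivalently $(x^k)^Ts^k \le n\zeta$, and then rescaling the target constants appropriately to land in $\Pcal\Dcal(\zeta)$), one needs $k = \Ocal\!\big(\frac{1}{\bar{\alpha}(1-\beta_2)}\log\frac{(x^0)^Ts^0}{\zeta}\big) = \Ocal\!\big(n^2 \log\frac{1}{\zeta}\big)$ iterations, absorbing $\log((x^0)^Ts^0) = \log(n\omega^2)$ into the polylog and treating $\omega = 2^L$ or a problem-specific bound as part of the data. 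I would also note that the residual norms $\|(R_P^k, R_D^k)\| = \theta^{k-1}\|(R_P^0,R_D^0)\|$ shrink at the same rate as $\mu^k$ by \eqref{eq: residual rel} and the choice $\gamma_2 = \|(R_P^0,R_D^0)\|/\mu^0$, so the feasibility residual requirement in $\Pcal\Dcal(\zeta)$ is met simultaneously with the gap requirement, and that positivity $(x^k,s^k)>0$ and membership in $\Ncal(\gamma_1,\gamma_2)$ are guaranteed at every step by Lemma~\ref{lem: neigh} together with Lemma~\ref{lemma: bound of alpha}.

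The main obstacle I anticipate is making the uniform step-length bound $\bar{\alpha} = \Omega(1/n^2)$ fully rigorous: Lemma~\ref{lemma: bound of alpha} gives $\tilde{\alpha}^k$ in terms of $\nu^k$ and $(x^k)^Ts^k$, and although Lemma~\ref{lemma: bound on complementarity} asserts $\nu^k = \Ocal(n^2\mu^k)$, one must verify that the hidden constant in that $\Ocal$ does not itself depend on $k$ (e.g.\ through $\theta^{k-1}$, which is bounded by $1$, or through $\omega$, which is fixed) so that the ratio $(x^k)^Ts^k/\nu^k$ is bounded below by an absolute constant times $1/n$ uniformly in $k$. Once that is pinned down, the geometric-decrease argument is routine. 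A secondary point to handle cleanly is that termination in line~\ref{alg-step:check infeasibility} is excluded by hypothesis, so the loop runs until $(x^k,y^k,s^k) \in \Pcal\Dcal(\zeta)$, and the geometric bound on $\mu^k$ forces this to happen within the claimed number of iterations.
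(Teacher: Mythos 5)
Your proposal is correct and follows essentially the same route as the paper's proof: a uniform step-length lower bound $\tilde{\alpha}^k=\Omega(1/n^2)$ from Lemma~\ref{lemma: bound of alpha} and Lemma~\ref{lemma: bound on complementarity}, combined with the Armijo-type geometric decrease of $(x^k)^Ts^k$ and the fact that membership in $\Ncal(\gamma_1,\gamma_2)$ makes the feasibility residuals shrink proportionally to $\mu^k$. If anything, your version is slightly more careful than the paper's, which iterates the bound $\big(1-\tilde{\alpha}^k(1-\beta_2)\big)^k$ as if $\tilde{\alpha}^k$ were constant in $k$ — the uniform lower bound you make explicit is exactly what justifies that step.
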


\begin{proof}
Based on Lemma~\ref{lemma: bound of alpha}, we have
\begin{equation*}
\hat{\alpha}^k \geq \alpha^k \geq \tilde{\alpha}^k \coloneqq 
\min \left\{1,\min\left\{\delta_1,\delta_2,\delta_3\right\}\frac{n\mu^k}{\nu^k}\right\}\in (0,1].
\end{equation*}
Hence, by the definition of the neighborhood, we have
\begin{equation*}
(x^k)^Ts^k\leq \big(1-\tilde{\alpha}^k(1-\beta_2)\big)^k(x^0)^Ts^0.
\end{equation*}
This implies that $\lim_{k\to \infty}(x^k)^Ts^k=0$. 
Thus, the algorithm terminates in finite number of steps.
Since the iterates are in the $\Ncal(\gamma_1,\gamma_2)$ neighborhood of the central path, a $\zeta$-optimal solution is obtained when $\mu^k\leq \frac{\zeta}{\gamma_1}$.
The algorithm stops when 
\begin{equation*} 
\mu^k\leq \big(1-\tilde{\alpha}^k(1-\beta_2)\big)^k\mu^0 \leq \frac{\zeta\mu^0}{\|R_p^0, R_D^0\|}.
\end{equation*}
By the definition of $\tilde{\alpha}^k$ and Lemma \ref{lemma: bound on complementarity}, we have $\frac{1}{\tilde{\alpha}^k}=\Ocal(n^2)$. We can conclude that $k=\Ocal(n^2\log \frac{\omega}{\zeta})$. The proof is complete.
\end{proof}
\begin{remark}\label{cor: converge}
 The sequences of  $\left\{\mu^{k}\right\}$, primal infeasibility $\left\{\|A x^k - b \|\right\}$, and dual infeasibility  $\left\{\|A^T y^k + s^k -c \|\right\}$, generated by II-QIPM,   converge linearly to zero. 
\end{remark}
\begin{remark}\label{cor: exact}
 Based on Theorem \ref{cor: exact sol}, we can calculate an exact solution by a rounding procedure if $ \zeta \leq 2^{-4L}. $ Consequently, the iteration complexity of II-QIPM for finding an exact solution is $\Ocal(n^2L)$.
\end{remark}
In the next section, we provide the total time complexity of II-QIPM. 
%
%------------------------------------------
\subsection{Total time complexity of the II-QIPM}\label{sec: Total time complexity of II-QIPM}
%------------------------------------------
%
As discussed in Lemma  \ref{theo: QLSA solution}, we need to solve the NES at each iteration of the II-QIPM by subsequent application of QLSA and QTA, which requires  $\tilde{\Ocal}_{n,\frac{1}{\mu^k},\|\hat{\sigma}^k\|}(mn+m\sqrt{n}\frac{\kappa_E^k\|\hat{\sigma}^k\|}{\sqrt{\mu^k}})$ computational cost.
We can calculate the total time complexity of the II-QIPM as the product of the complexity of the QLSA and the number of iterations of the II-QIPM.  
The computational cost of the QLSA depends on $\kappa_E^k$, $\|\hat{\sigma}^k\|$, $\|E^k\|_F$, and  $\mu^k$ which change through the algorithm.
In the following theorem, we bound them properly and obtain the detailed total time complexity of the proposed II-QIPM algorithm.
\begin{theorem}
The total time complexity of the proposed II-QIPM with QLSA by \textcite{Chakraborty2018_power} and QTA by \textcite{van2022quantum} is
\begin{equation*}
    \tilde{\Ocal}_{n,\frac{1}{\zeta},\omega,\|\Ahat\|,\|\bhat\|}\bigg(n^2\Big[mn+\frac{m\sqrt{n}\kappa_{\Ahat}(\|\Ahat\|+\|\bhat\|)}{\zeta^3}\Big]\bigg).
\end{equation*}
\end{theorem}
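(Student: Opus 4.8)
The plan is to combine the iteration complexity bound from Theorem~\ref{theo: convergence} (namely $\Ocal(n^2\log\frac{1}{\zeta})$ iterations, or $\Ocal(n^2 L)$ for an exact solution via Remark~\ref{cor: exact}) with the per-iteration cost from Lemma~\ref{theo: QLSA solution} (namely $\tilde{\Ocal}_{n,\frac{1}{\mu^k},\|\hat{\sigma}^k\|}(mn+m\sqrt{n}\frac{\kappa_E^k\|\hat{\sigma}^k\|}{\sqrt{\mu^k}})$), and then to replace the iteration-dependent quantities $\mu^k$, $\|\hat{\sigma}^k\|$, and $\kappa_E^k$ by bounds that are uniform over all iterations $k$. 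Since the total time is the product of the iteration count and the worst per-iteration cost, the whole job reduces to bounding each of these three quantities.

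First I would bound $\mu^k$ from below. The algorithm stops once a $\zeta$-optimal solution is reached, i.e. once $\mu^k \leq \frac{\zeta}{\gamma_1}$ (cf.\ the proof of Theorem~\ref{theo: convergence}); so for all iterations actually performed we have $\mu^k = \Omega(\zeta)$, hence $\frac{1}{\sqrt{\mu^k}} = \Ocal(\frac{1}{\sqrt{\zeta}})$. Next I would bound $\|\hat{\sigma}^k\|$: unwinding the definition $\hat{\sigma}^k=(D_{\Bhat}^{k})^{-1}\bhat-\beta_1 \mu^k (D_{\Bhat}^{k})^{-1}\Ahat(S^k)^{-1}e+(D_{\Bhat}^{k})^{-1}\Ahat(D^k)^2(c-A^Ty^k-s^k)$, I would use the neighborhood conditions $x_i^k s_i^k \geq \gamma_1 \mu^k$ together with the infeasibility-check bound $\|(x^k,s^k)\|_\infty \leq \omega$ to control the diagonal scalings $(D^k_{\Bhat})^{-1}$, $(D^k)^2$, $(S^k)^{-1}$ (each entry is a ratio of coordinates of $x^k$ and $s^k$, all sandwiched between $\Omega(\gamma_1\mu^k/\omega)$ and $\Ocal(\omega)$), and the linear-convergence of the residuals from Remark~\ref{cor: converge} to control $c-A^Ty^k-s^k = R_D^k$. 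This should give $\|\hat{\sigma}^k\| = \tilde{\Ocal}_{\omega,\|\Ahat\|,\|\bhat\|}(\mathrm{poly}(n)/\mu^k \cdot(\|\Ahat\|+\|\bhat\|))$ or similar; plugging in $\mu^k=\Omega(\zeta)$ and $\frac{1}{\sqrt{\mu^k}}=\Ocal(1/\sqrt{\zeta})$ gives the overall $\frac{1}{\zeta^3}$ factor in the final bound (one factor $\frac{1}{\zeta}$ from $\|\hat\sigma^k\|$ appearing once, and combined with $\frac{1}{\sqrt{\mu^k}}\cdot\|\hat\sigma^k\|$ scaling — I would track the exact powers carefully, but the structure is clear). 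Finally I would bound $\kappa_E^k$, the condition number of $E^k=(D^k_{\Bhat})^{-1}\Ahat D^k$, by $\kappa_{\Ahat}$ times the condition numbers of the diagonal scalings; using the same sandwich bounds on the entries of $D^k$ and $(D^k_{\Bhat})^{-1}$, the extra factors are polynomial in $\omega$, $n$, and $1/\mu^k$, hence polylogarithmic after being absorbed into $\tilde{\Ocal}$ — leaving $\kappa_{\Ahat}$ as the only genuinely $\kappa$-like factor surviving in the stated complexity.

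Putting the pieces together: each iteration costs $\tilde{\Ocal}_{n,\frac{1}{\zeta},\omega,\|\Ahat\|,\|\bhat\|}(mn + \frac{m\sqrt{n}\,\kappa_{\Ahat}(\|\Ahat\|+\|\bhat\|)}{\zeta^3})$ after the substitutions, and multiplying by the $\Ocal(n^2\log\frac1\zeta) = \tilde{\Ocal}_{1/\zeta}(n^2)$ iteration bound yields exactly
\[
\tilde{\Ocal}_{n,\frac{1}{\zeta},\omega,\|\Ahat\|,\|\bhat\|}\bigg(n^2\Big[mn+\frac{m\sqrt{n}\kappa_{\Ahat}(\|\Ahat\|+\|\bhat\|)}{\zeta^3}\Big]\bigg).
\]
One should also note that the one-time preprocessing cost $\Ocal(m^2 n)$ (computing $A_{\Bhat}^{-1}$, $\Ahat$, $\bhat$) and the one-time QRAM-loading cost are dominated by this product, so they do not affect the bound.

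I expect the main obstacle to be the bounding of $\|\hat{\sigma}^k\|$ and $\kappa_E^k$ in terms of the global parameters while keeping the dependence on $1/\mu^k$ (equivalently $1/\zeta$) exactly as tight as claimed — in particular making sure the scalings $(D^k)^2$ and $(D^k_{\Bhat})^{-1}$, whose entries can both blow up and shrink as the iterates approach the boundary, are controlled uniformly using only the neighborhood inequalities $x_i^ks_i^k\geq\gamma_1\mu^k$, the complementarity gap $\mu^k=\Ocal(\omega^2 n)$ from $\|(x^k,s^k)\|_\infty\leq\omega$, and the residual decay, without introducing any hidden extra powers of $n$ or $1/\zeta$ that would break the stated exponent. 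The powers of $n$ that do appear (the $n^2$ from iterations, the $m\sqrt n$ from the QTA, and the polynomial-in-$n$ norm bounds absorbed into $\tilde\Ocal$) need to be tracked with care to confirm the final expression is exactly as written.
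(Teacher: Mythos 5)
Your proposal follows essentially the same route as the paper: multiply the $\Ocal(n^2\log\frac{1}{\zeta})$ iteration bound of Theorem~\ref{theo: convergence} by the per-iteration cost of Lemma~\ref{theo: QLSA solution}, and replace $\mu^k$, $\|\hat{\sigma}^k\|$, and $\kappa_E^k$ by uniform bounds obtained from the stopping criterion, the neighborhood conditions, and the diagonal-scaling behavior $x_i^k/s_i^k=\Ocal(1/\mu^k)$ or $\Ocal(\mu^k)$ on the optimal partition; the treatment of the one-time preprocessing cost also matches.

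One piece of your bookkeeping is wrong and, taken literally, would not reproduce the stated $\zeta^{-3}$: you claim the $1/\mu^k$-polynomial factors in $\kappa_E^k$ are ``polylogarithmic after being absorbed into $\tilde{\Ocal}$.'' Polynomial factors in $1/\zeta$ cannot be hidden in $\tilde{\Ocal}_{1/\zeta}$, which suppresses only polylogarithmic ones. In the paper's accounting, $\|D^k\|=\Ocal(1/\sqrt{\zeta})$ and $\|(D^k)^{-1}\|=\Ocal(1/\sqrt{\zeta})$ give $\|E^k\|=\Ocal(\|\Ahat\|/\zeta)$ and hence $\kappa_E^k=\Ocal(\kappa_{\Ahat}\zeta^{-2})$, which supplies \emph{two} of the three powers of $1/\zeta$; the remaining power comes from $\frac{\|\hat{\sigma}^k\|}{\sqrt{\mu^k}}=\Ocal\big(\frac{\|\Ahat\|+\|\bhat\|}{\zeta}\big)$ (note the paper bounds the ratio $\|\hat{\sigma}^k\|/\sqrt{\mu^k}$ directly, not $\|\hat{\sigma}^k\|$ alone as $\Ocal(1/\mu^k)$, which is what your sketch suggests and which would give a different exponent). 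With that correction the powers add up to exactly $\kappa_E^k\cdot\frac{\|\hat{\sigma}^k\|}{\sqrt{\mu^k}}=\Ocal\big(\kappa_{\Ahat}(\|\Ahat\|+\|\bhat\|)\zeta^{-3}\big)$ and the rest of your argument goes through as planned.
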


\begin{proof}
To establish the total time complexity of II-QIPM, we need to analyze how the matrices $M^k$ and $E^k$ evolve through the iterations. 
As in \citep{Roos2005_Interior}, considering the optimal partition $B$ and $N$, we have
\begin{equation}\label{eq:diagonal scaling}
    \frac{x_i^k}{s_i^k}=\Ocal(\frac{1}{\mu^{k}})\to\infty\text{  for }i\in B\qquad \text{and} \qquad\frac{x_i^k}{s_i^k}=\Ocal(\mu^{k})\to0\text{  for }i\in N.
\end{equation}
Appropriate bounds are provided in the following results.
\begin{enumerate}[label=(\roman*)]
\item Based on Theorem \ref{theo: convergence}, we have at the termination $\frac{1}{\mu^k}=\Ocal(\frac{1}{\zeta})$ and $\mu^k\leq\Ocal(\mu^0)=\Ocal(\omega^2)$.
\item Since $\|E^k\|\leq \|(D^{k})^{-1}\|\|\Ahat\| \|D^{k}\|$, and $\|D^{k}\|=\Ocal(\frac{1}{\sqrt{\mu^k}}) =\Ocal(\frac{1}{\sqrt{\zeta}})$ by~\eqref{eq:diagonal scaling}. Similarly, $\|(D^{k})^{-1}\|_F=\Ocal(\frac{1}{\sqrt{\zeta}})$, and  we have $\|E^k\|=\Ocal\big(\frac{\|\Ahat\|}{\zeta}\big)$.
Let $\kappa_{\Ahat}$ be the condition number of $\Ahat$. 
Thus, we have $$\kappa^k_E=\Ocal\big(\kappa_{\Ahat}\zeta^{-2}\big).$$
\item In the time complexity of II-QIPM, we also have $\frac{\|\sigma^k\|}{\sqrt{\mu^k}}$ coming from the  precision of QLSA and QTA. We can easily verify that 
\begin{align*}
    \frac{\|\hat{\sigma}^k\|}{\sqrt{\mu^k}}&\leq\frac{\|(D^{k})^{-1}\|}{\sqrt{\mu^k}}(\|\bhat\|+\|\Ahat X^k(S^{k})^{-1}\|\|c-A^Ty^k-s^k\|+\beta_1\|\mu^k \Ahat(S^{k})^{-1}e\|)\\
    &\leq\frac{1}{\mu^k}(\|\bhat\|+\|\Ahat X^k(S^{k})^{-1}\|\|R_D^k\|+\beta_1\|\Ahat(S^k)^{-1}e\|\mu^k).
\end{align*}
Based on the definition of the neighborhood of the central path, $\|R_D^k\|\leq \gamma_2\mu^k$. Thus, we can get bounds $\Ocal(\frac{\|\sigma^k\|}{\sqrt{\mu^k}})=\Ocal(\frac{\|\Ahat\|+\|\bhat\|}{\zeta})$.
\item Based on Lemma~\ref{theo: QLSA solution}, the complexity of QLSA by \textcite{Chakraborty2018_power} and QTA by \textcite{van2022quantum} for building and solving the MNES is 
\begin{equation*}
    \tilde{\Ocal}_{n,\frac{1}{\zeta},\|\Ahat\|,\|\bhat\|}\big(mn+m\sqrt{n}\frac{\kappa_{\Ahat}(\|\Ahat\|+\|\bhat\|)}{\zeta^3}\big).
\end{equation*}
\end{enumerate}
Thus, the detailed time complexity of the proposed II-QIPM with QLSA by \textcite{Chakraborty2018_power} and QTA by \textcite{van2022quantum} is
\begin{equation}\label{eq: detailed complexity of II-QIPM}
    \tilde{\Ocal}_{n,\frac{1}{\zeta},\omega,\|\Ahat\|,\|\bhat\|}\bigg(n^2 \Big[mn+\frac{m\sqrt{n}\kappa_{\Ahat}(\|\Ahat\|+\|\bhat\|)}{\zeta^3}\Big]\bigg).
\end{equation}
The time complexity is achieved by multiplying the number of iterations of II-QIPM and the total cost of each iteration, including building and solving it by QLSA+QTA. Thus, the proof is complete.
\end{proof}
Let $\phi:=\|\Ahat\|+\|\bhat\|$, the complexity of II-QIPM can be simplified as
\begin{equation}\label{eq:final complexity of II-QIPM}
    \tilde{\Ocal}_{n,\frac{1}{\zeta},\omega,\phi}\Big(n^{4}\frac{\phi\kappa_{\Ahat}}{\zeta^{3}}\Big).
\end{equation}
In the complexity of II-QIPM, the $\frac{1}{\zeta}$ factors come from bounding the condition number and from QTA. Based on Theorem~\ref{cor: exact sol}, $\frac{1}{\zeta}=2^{\Ocal(L)}$ leading to exponential complexity. We discuss how we can solve this and improve the complexity of the algorithm by stopping II-QIPM early, e.g., $\zeta=10^{-2}$, and using the Iterative Refinement scheme  discussed in the Section~\ref{sec: iterative refinement} to improve the precision. 
% In Section \ref{sec: systems}, we investigate pros and cons of different systems.
\begin{comment}

\begin{table}[ht]
    \centering
    \begin{tabular}{ |c|c| } 
 \hline
 Algorithm & 

 Simplified Complexity \\ 
 \hline
II-IPM with Cholesky  & 
$\Ocal\Big(n^5\log(\frac{\omega}{\zeta})\Big)$\\
II-IPM with CG  & 

$\Ocal\Big(n^{5} \operatorname{polylog}(\frac{\omega}{\zeta})\zeta^{-2}\kappa_A^2\Big)$\\
%

%
QIPM of \citep{kerenidisParkas2020_quantum} & 
$\Ocal\Big(n^{2}\log(1/\zeta)\kappa^3 \epsilon^{-2}\Big)$\\
%
QIPM of \citep{Casares2020_quantum}& 

$\Ocal\Big(n^2L \kappa^2 \epsilon^{-2}\Big)\Ocal\Big(\frac{n}{\epsilon^2}\Big)$\\
%
Proposed II-QIPM &
$\Ocal\Big(n^5\operatorname{polylog}(\frac{\omega}{\zeta})\zeta^{-4}\|A\|^2\kappa_A^2\Big)$\\

\hline
\end{tabular}
    \caption{Time complexity of finding $\zeta$-optimal solution}
    \label{table: complexity of zeta}
\end{table}

With respect to dimension $n$, the proposed II-QIPM is similar to II-IPM with Cholesky and CG method. 
\end{comment}

%
\begin{remark}
Some QLSAs, such as \textcite{Harrow2009_Quantum} and \textcite{Childs2017_Quantum}, take advantage of the sparsity of the NES. 
Consider \eqref{eq: modified normal equation}, the sparsity of  $\Mhat^k=\Ahat X^k(S^{k})^{-1}\Ahat^T$ is independent of $X^k$ and $S^k$. So, let $d$ be the maximum number of nonzero elements in any row or column of the matrix $\Ahat \Ahat^T$. 
Since matrix $\Ahat$ has two blocks $\begin{bmatrix}I& A_{\Bhat}^{-1}A_{\Nhat}\end{bmatrix}$, we have $d\leq \min\{m,n-m+1\}$. 
As matrix $\Ahat$ determines the sparsity of matrix $M^k$, so we can take advantage of the sparsity structure of matrix $\Ahat$. In case $\Ahat$ is mostly sparse, e.g., $n-m\ll m$, but has a few dense columns, then this structure can be exploited. As described in \cite{Andersen2004_use}, the sparse part can be separated to solve a sparse linear system by QLSA, and then the use of the Sherman-Morrison-Woodbury \cite{sherman1950adjustment} formula allows calculating the solution of the original linear system efficiently. 
\end{remark}
\begin{comment}

Let $\mathcal{D}~=~\big\{i\in \{1,\dots,n\}|A_i \text{ is dense}\big\}$ and $\mathcal{S}=\{1,2,\dots,n\} \setminus \mathcal{D}$.
%
So we have 
%
\begin{equation*}
M^k=\sum_{i\in\mathcal{D} }\frac{x^k_i}{s^k_i}A_iA^T_i+\sum_{i\in\mathcal{S} }\frac{x^k_i}{s^k_i}A_iA^T_i.
\end{equation*} 
%
As we can see $M^k$ has a sparse part $M_S^{k}=\sum_{i\in\mathcal{S} }\frac{x^k_i}{s^k_i}A_iA^T_i$ and we can find the solution of $M_S^k w= \sigma^k$ by QLSA by \textcite{Childs2017_Quantum} with complexity $\Ocal\big(\operatorname{polylog}(\frac{m\kappa^k}{\epsilon^k})d'\kappa^k\big)$ where the sparsity of matrix $M^k_S$, $d'\ll m$.
%
Then, we can find the solution $\Delta y^k$ by the Sherman–Morrison–Woodbury formula. 
%
Here, it is essential to know that whether $M_S^k$ is singular or not. 
%
We can apply the mentioned approach if $M_S^k$ is nonsingular.
%
However, in the case of singularity, we need to build a nonsingular matrix \citep[for more details see,][]{Andersen2004_use}. 
%
%
Solving the sparse system determines the total time complexity because extracting $\Delta y^k$ by the Sherman–Morrison–Woodbury formula only needs some matrix and vector multiplications. 
%
This method will decrease $d$ the sparsity of matrix $A$ to $d'$ the sparsity of sparse columns of matrix $A$. 
%
The structure of matrix A determines how much this approach is practical.
%
\end{comment}
%
%=================================================
\section{Iterative Refinement Method} \label{sec: iterative refinement}
%=================================================
%
Based on Remark~\ref{cor: exact}, we need $\zeta\leq 2^{-4L}$ to have an exact solution for the LO problem with integer data.
The proposed II-QIPM has exponential complexity in finding an exact solution.
An Iterative Refinement (IR) scheme can be employed to achieve polynomial complexity.
By this scheme,  a series of  LO problems are solved by the II-QIPM with low precision $\hat{\zeta}$, e.g., $\hat{\zeta}=10^{-2}$, and an IR method improves the precision to reach an exact solution.
In the classical IPM literature, the IR method by \textcite{Gleixner2016_Iterative} and the Rational Reconstruction method by \textcite{Gleixner2020_Linear} employed low-precision methods to generate a high-precision solution.
Here, we adopt the IR method to generate a high precision solution that allows the identification of an exact optimal solution. 
While using only low-precision II-QIPM, Theorem \ref{theorem:iterative refinement idea} is the foundation of the IR method.
\begin{theorem}[\textcite{Gleixner2016_Iterative}] \label{theorem:iterative refinement idea}
Let the primal problem $(P)$ be given as \eqref{eq:primal and dual problem}.
For $\tilde{x}\in \mathbb{R}^n$ and $\tilde{y}\in \mathbb{R}^m$ and scaling factor $\nabla>1$ consider the refining problem $(\bar{P})$
\begin{equation*}
\min \left\{ \nabla \bar{c}^Tx| Ax= \nabla \bar{b} \text{ and } x\geq -\nabla\tilde{x}\right\},
\end{equation*} 
where $\bar{c}= c -A^T \tilde{y}$ and $\bar{b}=b-A\tilde{x}$. 
Then $\bar{x}$ and $\bar{y}$ are $\hat{\zeta}$-optimal solution for problem $(\bar{P})$ if and only if $\tilde{x}+\frac{1}{\nabla}\bar{x}$  and $\tilde{y}+\frac{1}{\nabla}\bar{y}$ are the $\frac{\hat{\zeta}}{\nabla}$-optimal solution for problem $(P)$.
\end{theorem}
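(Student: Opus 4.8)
The plan is to verify the claimed equivalence directly from the definition of a $\hat\zeta$-optimal solution, translating each of the three defining conditions (approximate primal feasibility, approximate dual feasibility, small duality gap) back and forth between the refining problem $(\bar P)$ and the original problem $(P)$ under the affine change of variables $x \mapsto \tilde x + \tfrac1\nabla \bar x$, $y \mapsto \tilde y + \tfrac1\nabla \bar y$. First I would write down the dual of the refining problem $(\bar P)$: since the constraints of $(\bar P)$ are $Ax = \nabla\bar b$ with the shifted sign constraint $x \geq -\nabla\tilde x$, substituting $x' = x + \nabla\tilde x \geq 0$ turns $(\bar P)$ into a standard-form LO with constraint matrix $A$, right-hand side $\nabla\bar b + \nabla A\tilde x = \nabla b$, and cost vector $\nabla\bar c$; its dual has variable $\bar y$, slack $\bar s = \nabla\bar c - A^T\bar y$, and objective involving $\nabla b$. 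This makes the correspondence between the feasibility/optimality systems of $(\bar P)$ and $(P)$ transparent.

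Next I would check the three conditions one at a time. For primal feasibility: $A\bar x = \nabla\bar b$ together with $\bar b = b - A\tilde x$ gives $A(\tilde x + \tfrac1\nabla\bar x) = A\tilde x + \bar b = b$, and more generally the primal residual of $(P)$ at the refined point equals $\tfrac1\nabla$ times the primal residual of $(\bar P)$ at $\bar x$; the shifted nonnegativity $\bar x \geq -\nabla\tilde x$ is exactly equivalent to $\tilde x + \tfrac1\nabla\bar x \geq 0$. For dual feasibility: with $\bar s$ the slack of the refined dual, $A^T\bar y + \bar s = \nabla\bar c = \nabla(c - A^T\tilde y)$, so $A^T(\tilde y + \tfrac1\nabla\bar y) + \tfrac1\nabla\bar s = c$, i.e. the dual slack of $(P)$ at the refined point is $\tfrac1\nabla\bar s$, which is nonnegative iff $\bar s \geq 0$, and the dual residual scales by $\tfrac1\nabla$. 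For the complementarity/gap condition: the bilinear gap $x^Ts$ at the refined point is $(\tilde x + \tfrac1\nabla\bar x)^T(\tfrac1\nabla\bar s) = \tfrac1\nabla(\tilde x + \tfrac1\nabla\bar x)^T\bar s = \tfrac1{\nabla}\cdot\tfrac1\nabla(\nabla\tilde x + \bar x)^T\bar s = \tfrac1\nabla\,(\bar x')^T\bar s$, which is exactly $\tfrac1\nabla$ times the gap of the shifted refining problem. Dividing the averaged gap by $n$ preserves the factor $\tfrac1\nabla$. Hence all three quantities that must be bounded by $\hat\zeta$ for $(\bar P)$ are bounded by $\tfrac{\hat\zeta}{\nabla}$ for $(P)$, and conversely, giving the stated "if and only if."

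The main obstacle is bookkeeping rather than depth: I must be careful that the membership in the $\hat\zeta$-optimal set $\Pcal\Dcal(\hat\zeta)$ is stated for a primal-dual triple $(x,y,s)$, so I need to consistently track the slack variable through the shift $x' = x + \nabla\tilde x$ and the scaling by $\nabla$, and confirm that each of $\|(R_P,R_D)\|\le\hat\zeta$ and $\tfrac{x^Ts}{n}\le\hat\zeta$ transforms with a clean factor of exactly $\nabla$ (not $\nabla^2$ or $\sqrt n\,\nabla$, which is why the cross term $(\tilde x)^T\bar s$ and the $\nabla$ multiplying the objective/RHS in $(\bar P)$ must be handled together). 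Once the dual of $(\bar P)$ is written correctly, each implication is a one-line substitution, so I would organize the write-up as: (1) reformulate $(\bar P)$ in standard form and state its dual; (2) a short lemma recording the three scaling identities $R_P^{(P)} = \tfrac1\nabla R_P^{(\bar P)}$, $R_D^{(P)} = \tfrac1\nabla R_D^{(\bar P)}$, and gap${}^{(P)} = \tfrac1\nabla$gap${}^{(\bar P)}$, together with the equivalence of the sign constraints; (3) conclude by comparing against the definition of $\Pcal\Dcal(\cdot)$ in both directions. I would also note that $\nabla > 1$ is used only to ensure the refinement actually tightens precision, not in the equivalence itself, which holds for any $\nabla > 0$.
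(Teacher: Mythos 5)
The paper does not actually prove this theorem: it is imported verbatim from \textcite{Gleixner2016_Iterative} with only a citation, so there is no in-paper proof to compare yours against. Your reconstruction --- shift $(\bar{P})$ to standard form via $x' = x + \nabla\tilde{x}$, write its dual, and track how the primal residual, dual residual, and gap transform under the affine correction --- is exactly the argument used in the cited source, and the feasibility parts of your computation are correct: both residuals and both sign constraints transfer with a clean factor of $\tfrac{1}{\nabla}$.

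There is, however, one genuine slip, and it sits precisely at the point you flagged as the thing to be careful about. Your chain
$\bigl(\tilde{x}+\tfrac{1}{\nabla}\bar{x}\bigr)^T\bigl(\tfrac{1}{\nabla}\bar{s}\bigr)
=\tfrac{1}{\nabla}\cdot\tfrac{1}{\nabla}\,(\nabla\tilde{x}+\bar{x})^T\bar{s}$
equals $\tfrac{1}{\nabla^2}(\bar{x}')^T\bar{s}$, not $\tfrac{1}{\nabla}(\bar{x}')^T\bar{s}$ as you conclude: the complementarity term is a product of a primal quantity and a dual quantity, each of which picks up its own factor of $\tfrac{1}{\nabla}$, so the gap scales by $\nabla^{2}$ while the residuals scale by $\nabla$. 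This is harmless for the ``if'' direction, since $\nabla>1$ gives $\tfrac{\hat{\zeta}}{\nabla^{2}}\le\tfrac{\hat{\zeta}}{\nabla}$ and membership in $\Pcal\Dcal(\hat{\zeta}/\nabla)$ still follows. But it breaks the ``only if'' direction as you have set it up: from $\tfrac{\hat{\zeta}}{\nabla}$-optimality of the corrected point for $(P)$, the gap of $(\bar{P})$ is only bounded by $\nabla^{2}\cdot\tfrac{\hat{\zeta}}{\nabla}=\nabla\hat{\zeta}$, not $\hat{\zeta}$. In \textcite{Gleixner2016_Iterative} this is handled by carrying separate primal and dual scaling factors (so the complementarity tolerance scales by their product); to salvage the biconditional in the symmetric single-$\nabla$ form stated here you would need either to adopt that asymmetric bookkeeping or to weaken the statement to the one-sided implication, which is in fact all that Theorem~\ref{theorem: IR complexity} and Algorithm~\ref{alg: iterative refinement II-QIPM} use. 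Also note that your final remark that the equivalence holds for any $\nabla>0$ is true only for the residual parts, not for the gap, for the same reason.
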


As presented in Algorithm \ref{alg: iterative refinement II-QIPM}, we use the II-QIPM of Algorithm~\ref{alg: II-QIPM} to solve the refining model and update the solution with an intelligent scaling procedure. 
Theorem \ref{theorem: IR complexity} shows that a polynomial number of iterations are sufficient to reach an exact optimal solution.
\begin{algorithm}[ht]
	\caption{IR-II-QIPM} \label{alg: iterative refinement II-QIPM}
	\begin{algorithmic}[1]
		\Require $\big(A\in \mathbb{Z}^{m\times n}, b\in \mathbb{Z}^{m}, c\in \mathbb{Z}^{n}\big)$ 
		\State Choose scaling multiplier $\rho \in \mathbb{N}$ such that $\rho > 1$
		\State $\hat{\zeta} \gets  10^{-2},\zeta \gets 2^{-4L}$
		\State  $k \gets 0$
		\State $(x^{*0}, y^{*0}, s^{*0}) \gets$ Algorithm~\ref{alg: II-QIPM} $(A,b,c)$ with $\hat{\zeta}$ precision
		\While{$r^k>\zeta$}
		\State $\pi^k\gets \max \left\{r^{k},\frac{1}{\rho\nabla^{k}}\right\}$
		\State $\nabla^k \gets 2^{\left\lceil\log(\sfrac{1}{\pi_{k}})\right\rceil}$
		\State $\bar{b}^{k} \gets b-Ax^{*k}$ and $\bar{c}^{k} \gets c-A^Ty^{*k}$
		\State $(\hat{x}^{k}, \hat{y}^{k}, \hat{s}^k) \gets$ Algorithm~\ref{alg: II-QIPM} $(A,\nabla^k\bar{b}^{k},\nabla^k\bar{c}^{k})$ with $\hat{\zeta}$ precision
		\State $x^{*k+1}\gets x^{*k}+\frac{1}{\nabla^k} \hat{x}^k$ and $y^{*k+1}\gets y^{*k}+\frac{1}{\nabla^k}\hat{y}^k$
		\State $r^{k+1} \gets \max \left\{\max_j |\bar{b}_{\hspace{1pt}j}^{k+1}|, \max_i (-\bar{c}_{\hspace{1pt}i}^{k+1} ), \sum_{i} |\bar{c}_{\hspace{1pt}i}^{k+1} x^{*k+1}_i| \right\}$
		\State $k \gets k+1$
		\EndWhile
	\end{algorithmic}
\end{algorithm}
\begin{theorem} \label{theorem: IR complexity}
Let $(A,b,c)$ be integer. The number of iterations of Algorithm \ref{alg: iterative refinement II-QIPM} to get $2^{-4L}$-precise solution is at most $\Ocal(L)$.
\begin{proof}
Based on Corollary 3.6 in \cite{ Gleixner2016_Iterative}, we have
$$
 \left\lceil \frac{\log (\zeta)}{\log(\hat{\zeta})} \right\rceil=\left\lceil \frac{-4L\log(2)}{-2} \right\rceil=\Ocal(L).
$$
\end{proof}

\end{theorem}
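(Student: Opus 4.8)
The plan is to derive the $\mathcal{O}(L)$ bound on the number of iterations from the geometric contraction of the residual $r^k$, combined with the precision bound $2^{-4L}$ needed for exactness (via Theorem~\ref{cor: exact sol} / Remark~\ref{cor: exact}). The key structural fact is Theorem~\ref{theorem:iterative refinement idea}: solving the refining problem $(\bar P)$ to $\hat\zeta$-optimality produces an update that is $\hat\zeta/\nabla^k$-optimal for the original problem. Since in Algorithm~\ref{alg: iterative refinement II-QIPM} the scaling factor is chosen as $\nabla^k = 2^{\lceil\log(1/\pi^k)\rceil} \geq 1/\pi^k \geq 1/r^k$, each iteration reduces the residual roughly by a factor $\hat\zeta$, i.e. $r^{k+1} \lesssim \hat\zeta\, r^k$ (up to the $\pi^k$ safeguard, which only helps).

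First I would make precise the one-step contraction: starting from a solution with residual $r^k$, Algorithm~\ref{alg: II-QIPM} returns $(\hat x^k,\hat y^k,\hat s^k)$ that is $\hat\zeta$-optimal for the scaled refining problem $(A,\nabla^k\bar b^k,\nabla^k\bar c^k)$; applying Theorem~\ref{theorem:iterative refinement idea} with scaling $\nabla^k$ shows the updated iterate $(x^{*k+1},y^{*k+1})$ is $\hat\zeta/\nabla^k$-optimal for $(P)$, hence $r^{k+1}\leq \hat\zeta/\nabla^k \leq \hat\zeta\, r^k$ by the choice of $\nabla^k$. Iterating, $r^k \leq \hat\zeta^{\,k}\, r^0$, and since $r^0\leq 1$ (indeed $\hat\zeta$, after the initial call), one needs $\hat\zeta^{\,k}\leq \zeta = 2^{-4L}$, i.e.
\begin{equation*}
k \geq \left\lceil \frac{\log(\zeta)}{\log(\hat\zeta)} \right\rceil = \left\lceil \frac{-4L\log 2}{-2} \right\rceil = \mathcal{O}(L),
\end{equation*}
which is exactly the bound cited from Corollary~3.6 of \cite{Gleixner2016_Iterative}. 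I would then invoke Remark~\ref{cor: exact} to note that a $2^{-4L}$-precise solution suffices to round to an exact optimal solution, completing the argument.

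The main obstacle — and the thing the one-line proof in the excerpt glosses over — is justifying that each call to Algorithm~\ref{alg: II-QIPM} on the refining problem actually terminates and returns an $\hat\zeta$-optimal point, i.e. that the refining problems remain well-posed (satisfy the IPC, have bounded optimal sets, and do not trigger the infeasibility return in line~\ref{alg-step:check infeasibility}). One must check that the shifted data $(\nabla^k\bar b^k, \nabla^k\bar c^k)$ yields a problem whose relevant parameters ($\omega$, $\kappa_{\hat A}$, $\phi$) stay controlled, so that the per-iteration and per-subproblem complexity of II-QIPM does not blow up across the $\mathcal{O}(L)$ refinements. Since the theorem as stated only claims a bound on the \emph{number of IR iterations} (not total time), this reduces to the cleaner claim handled above; the delicate complexity bookkeeping is deferred, and the proof can legitimately lean on the analogous analysis in \cite{Gleixner2016_Iterative}. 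A secondary minor point is handling the safeguard $\pi^k = \max\{r^k, 1/(\rho\nabla^k)\}$ correctly — it only ever makes $\nabla^k$ larger, so the contraction estimate $r^{k+1}\leq\hat\zeta r^k$ is preserved, and the $\mathcal{O}(L)$ count is unaffected.
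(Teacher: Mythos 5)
Your proposal follows essentially the same route as the paper: the paper's proof is just the one-line computation $\lceil\log\zeta/\log\hat\zeta\rceil=\lceil 4L\log 2/2\rceil=\Ocal(L)$, with the per-iteration geometric contraction of the residual outsourced to Corollary~3.6 of Gleixner et al., which is exactly the argument you reconstruct. One small correction: the safeguard $\pi^k=\max\{r^k,1/(\rho\nabla^{k-1})\}$ makes $\pi^k$ \emph{larger} and hence $\nabla^k$ \emph{smaller} (it caps the growth of the scaling factor at rate $\rho$), so your chain $\nabla^k\geq 1/\pi^k\geq 1/r^k$ has the second inequality reversed; the $\Ocal(L)$ count nevertheless survives because $\nabla^k$ still grows geometrically at rate $\min\{1/\hat\zeta,\rho\}>1$, which is precisely how the cited corollary handles this case.
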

One can observe that, except $\omega$, all parameters in the complexity of II-QIPM as specified in \eqref{eq:final complexity of II-QIPM}, are constant in all iterations of the IR method. 
Lemma~\ref{lemma: bound on omega} provides an upper bound for $\omega^k$ at iteration $k$ of the IR  method. 
%
%-------------------------------------------
\begin{lemma} \label{lemma: bound on omega}
At the  $k$\textsuperscript{th} iteration of the IR method, let $(\hat{x}^{*k},\hat{y}^{*k},\hat{s}^{*k})$ be an exact optimal solution of the refining problem $(A,\nabla^k\bar{b}_{k-1},\nabla^k\bar{c}_{k-1})$, and $\omega^k\geq\max\{1,\|\hat{x}^{*k},\hat{s}^{*k}\|_{\infty}\}$. Then, $\omega^k=\Ocal\big((2\rho)^L\big)$.
\end{lemma}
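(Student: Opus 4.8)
The plan is to track how the refining problem's data $(A, \nabla^k \bar b^{k}, \nabla^k \bar c^{k})$ grows with $k$, and then invoke Lemma~\ref{lemma: upperbound} (the $2^L$-type bound) applied to the refining problem rather than to $(P)$. The key observation is that $A$ is unchanged in every IR iteration, so only the scaled right-hand side $\nabla^k \bar b^{k}$ and scaled objective $\nabla^k \bar c^{k}$ vary. First I would recall from Algorithm~\ref{alg: iterative refinement II-QIPM} that $\nabla^k = 2^{\lceil \log(1/\pi^k)\rceil}$ with $\pi^k \ge \tfrac{1}{\rho \nabla^{k-1}}$, which gives the recursive bound $\nabla^k \le 2 \rho\, \nabla^{k-1}$, hence $\nabla^k \le (2\rho)^{k}\nabla^0 \le (2\rho)^{k}$ (taking $\nabla^0$ a constant). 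Since by Theorem~\ref{theorem: IR complexity} the IR method runs for $k = \Ocal(L)$ iterations, we get $\nabla^k = \Ocal\big((2\rho)^{L}\big)$ throughout.

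Next I would bound the residual data $\bar b^{k} = b - Ax^{*k}$ and $\bar c^{k} = c - A^T y^{*k}$ in terms of $L$. Because the accumulated iterate $(x^{*k}, y^{*k})$ stays within $\Ocal(1)$ of a true optimal solution of $(P)$ — each correction $\tfrac{1}{\nabla^{j}}\hat x^{j}$ is small and the series telescopes to something bounded by the $2^L$ bound on the optimal set from Lemma~\ref{lemma: upperbound} — the residuals $\|\bar b^{k}\|_\infty$ and $\|\bar c^{k}\|_\infty$ are themselves $2^{\Ocal(L)}$. Then the refining problem's data $(A, \nabla^k \bar b^{k}, \nabla^k \bar c^{k})$ has entries bounded by $(2\rho)^{\Ocal(L)} \cdot 2^{\Ocal(L)} = (2\rho)^{\Ocal(L)}$, so its binary encoding length $L'$ satisfies $L' = \Ocal(L \log(2\rho)) = \Ocal(L)$ for fixed $\rho$. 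Applying Lemma~\ref{lemma: upperbound} to the refining LO problem yields that any optimal $(\hat x^{*k}, \hat s^{*k})$ of it is bounded in sup-norm by $2^{L'} = (2\rho)^{\Ocal(L)}$, i.e. $\omega^k = \Ocal\big((2\rho)^{L}\big)$.

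The main obstacle I anticipate is making the bound on the residuals $\bar b^{k}, \bar c^{k}$ rigorous: one must argue that the scaled-and-shifted refining problems do not cause the accumulated primal/dual iterates $x^{*k}, y^{*k}$ to drift, and in particular that $\sum_j \tfrac{1}{\nabla^{j}}\|\hat x^{j}\|_\infty$ stays bounded. This requires combining Theorem~\ref{theorem:iterative refinement idea} (which relates $\hat\zeta$-optimal solutions of $(\bar P)$ to $\tfrac{\hat\zeta}{\nabla}$-optimal solutions of $(P)$) with the fact that $\hat x^{j}$, being an $\hat\zeta$-optimal point of a refining problem whose own $\omega^{j}$ we are in the middle of bounding, is controlled — so there is a mild circularity to unwind, most cleanly by an induction on $k$ where the inductive hypothesis carries both the bound on $\omega^{j}$ and the bound on the residuals for all $j < k$. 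Everything else is routine: propagating the $2^{\Ocal(L)}$ bounds through the definition of binary encoding length and invoking Lemma~\ref{lemma: upperbound} verbatim.
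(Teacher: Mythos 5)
Your bound on the scaling factor is fine and matches the paper: from $\pi^k \geq \frac{1}{\rho\nabla^{k-1}}$ and $\nabla^k = 2^{\lceil \log(1/\pi^k)\rceil}$ you get $\nabla^k = \Ocal(\rho^k)$, and with $k=\Ocal(L)$ iterations this contributes the $\rho^L$ factor. The gap is in the other half of your argument: you propose to bound $\|\hat{x}^{*k},\hat{s}^{*k}\|_\infty$ by computing a binary encoding length $L'$ for the refining problem $(A,\nabla^k\bar{b}^{k},\nabla^k\bar{c}^{k})$ and then invoking Lemma~\ref{lemma: upperbound} for that problem. This step does not go through. Lemma~\ref{lemma: upperbound} (and the definition of $L$ it rests on) presupposes integral input data, whereas $\bar{b}^{k}=b-Ax^{*k}$ and $\bar{c}^{k}=c-A^Ty^{*k}$ are built from the inexact iterates returned by the II-QIPM; they are arbitrary reals, so ``entries bounded by $(2\rho)^{\Ocal(L)}$'' does not translate into an encoding length $L'=\Ocal(L)$ --- encoding length also depends on denominators, which you do not control. (The refining problem is also not in standard form, since its constraint is $x\geq-\nabla\tilde{x}$ rather than $x\geq 0$.) On top of that you correctly flag a circularity --- bounding the residuals requires bounding the very iterates $\hat{x}^{j}$ you are trying to control --- and the induction you sketch to break it is never carried out.

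The paper avoids both problems with a much shorter argument that you should adopt. By Theorem~\ref{theorem:iterative refinement idea} (applied with exact optimality), an exact optimal solution of the $k$\textsuperscript{th} refining problem satisfies $\hat{x}^{*k}=\nabla^k(x^*-x^{*k-1})$ for some exact optimal $x^*$ of $(P)$; equivalently, the updates telescope to $x^*=x^{*0}+\sum_k \frac{1}{\nabla^k}\hat{x}^{k}$. Lemma~\ref{lemma: upperbound} is then applied only to the \emph{original} problem, whose data is integral, giving $\|x^*,s^*\|_\infty\leq 2^L$ and hence $\|\hat{x}^{*k},\hat{s}^{*k}\|_\infty\leq \nabla^k 2^L$. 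Combining with $\nabla^k=\Ocal(\rho^k)$ and $k=\Ocal(L)$ yields $\omega^k=\Ocal\big((2\rho)^L\big)$ with no need to define an encoding length for the refining problems and no circular dependence.
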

%-------------------------------------------
%
%-------------------------------------------
\begin{proof}
From Theorem \ref{theorem: IR complexity} and last line of Algorithm \ref{alg: iterative refinement II-QIPM}, we have
\begin{align*}
    x^*&= x^*_{0}+\sum_{k=1}^{\Ocal(L)}\frac{1}{\nabla^k} \hat{x}^k,\text{ and }s^*= s^*_{0}+\sum_{k=1}^{\Ocal(L)}\frac{1}{\nabla^k} \hat{s}^k.
\end{align*}
Based on Lemma~\ref{lemma: L bound}, we know that $\|x^*,s^*\|_{\infty}\leq 2^L$, then we have
$\|\hat{x}^{*k},\hat{s}^{*k}\|_{\infty}\leq \nabla^k2^L.$

Based on the procedure of updating the scaling factor in Algorithm \ref{alg: iterative refinement II-QIPM}, we can drive $\nabla^k=\Ocal(\rho^k)$. We can conclude that $\omega^k=\Ocal\big((2\rho)^L\big)$.
\end{proof}
%-------------------------------------------
%
In Theorem \ref{theorem: final complexity}, we have the total time complexity of the IR method using the proposed II-QIPM to find an exact optimal solution for LO problems.
%
%-------------------------------------------
\begin{theorem}\label{theorem: final complexity}
Let $\hat{\zeta} =10^{-2}$, then the total time complexity of finding an exact optimal solution using the IR-II-QIPM Algorithm \ref{alg: II-QIPM} for solving the LO problem \eqref{eq:primal and dual problem} is polynomial with 

\begin{equation*}
\tilde{\Ocal}_{n,\|\Ahat\|,\|\bhat\|}\bigg(n^2L\Big[mn+m\sqrt{n}\kappa_{\Ahat}(\|\Ahat\|+\|\bhat\|)\Big]\bigg),
\end{equation*}
the arithmetic operations, where $\hat{A}$ and $\hat{b}$ are preprocessed $A$ and $b$.
\end{theorem}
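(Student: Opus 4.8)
The plan is to read off the total cost as (number of Iterative Refinement rounds) $\times$ (cost of one \emph{constant}-precision II-QIPM solve), using Theorem~\ref{theorem: IR complexity} for the first factor and the bound \eqref{eq: detailed complexity of II-QIPM}--\eqref{eq:final complexity of II-QIPM} specialized to $\zeta=\hat{\zeta}$ for the second. First I would note that, by Theorem~\ref{theorem: IR complexity}, Algorithm~\ref{alg: iterative refinement II-QIPM} invokes Algorithm~\ref{alg: II-QIPM} at most $\Ocal(L)$ times, each call on a refining instance $(A,\nabla^k\bar b^k,\nabla^k\bar c^k)$ and always with the \emph{fixed} target precision $\hat{\zeta}=10^{-2}$. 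Everything else done inside a refinement round---forming $\bar b^k=b-Ax^{*k}$ and $\bar c^k=c-A^Ty^{*k}$, the scalar multiplications by $\nabla^k$, the update $x^{*k+1}\gets x^{*k}+\frac{1}{\nabla^k}\hat x^k$, $y^{*k+1}\gets y^{*k}+\frac{1}{\nabla^k}\hat y^k$, and recomputing $r^{k+1}$---costs only $\Ocal(mn)$ arithmetic operations, which is dominated by a single II-QIPM solve. So it suffices to bound the cost of one call to Algorithm~\ref{alg: II-QIPM} on a refining instance at precision $\hat{\zeta}$ and multiply by $\Ocal(L)$.

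Next I would specialize the II-QIPM complexity to $\zeta=\hat{\zeta}=\Theta(1)$: the dangerous factor $\zeta^{-3}$---the very source of the exponential blow-up when one tries to reach $\zeta=2^{-4L}$ in one shot---collapses to a constant, so one solve of the $k$-th refining instance costs $\tilde{\Ocal}_{n,\omega^k,\|\Ahat\|,\|\bhat\|}\big(n^2\,[mn+m\sqrt{n}\,\kappa_{\Ahat}(\|\Ahat\|+\|\bhat\|)]\big)$, where $\omega^k$ is the solution-norm bound of that instance. The remaining task is to check that the problem-dependent quantities are uniform over the $\Ocal(L)$ rounds. Since the refinement loop never alters the constraint matrix, the preprocessed $\Ahat=A_{\Bhat}^{-1}A$, and hence $\kappa_{\Ahat}$ and $\|\Ahat\|$, are identical in every round. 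For the right-hand side and cost, the scaling rule (lines~6--7 of Algorithm~\ref{alg: iterative refinement II-QIPM}) picks $\nabla^k=2^{\lceil\log(1/\pi^k)\rceil}$ with $\pi^k\ge r^k\ge\|\bar b^k\|_\infty$, which keeps $\|\nabla^k\bar b^k\|_\infty$ and (the relevant part of) $\|\nabla^k\bar c^k\|_\infty$ at $\Ocal(1)$; after the fixed-basis preprocessing the corresponding ``$\|\bhat\|$'' quantity of each refining instance is $\Ocal(\sqrt{m}\,\|A_{\Bhat}^{-1}\|)$, of the same order as---or absorbed by---the $\|\Ahat\|+\|\bhat\|$ of the original data, so a single such factor dominates all rounds.

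Finally I would control the one quantity that genuinely grows, $\omega^k$, via Lemma~\ref{lemma: bound on omega}: since $\rho\in\mathbb{N}$ is a fixed constant chosen at line~1, $\omega^k=\Ocal\big((2\rho)^L\big)$, hence $\log\omega^k=\Ocal(L)$ uniformly in $k$, and every $\text{polylog}(\omega^k)$ hidden inside $\tilde{\Ocal}$ equals $\text{polylog}\big((2\rho)^L\big)=\text{poly}(L)$ and is suppressed once the factor $L$ is displayed. A single II-QIPM call starts from $(\omega^k e,0,\omega^k e)$, i.e.\ from $\mu^0=(\omega^k)^2$, and need only reduce $\mu$ to $\Theta(\hat{\zeta})$, so by Theorem~\ref{theo: convergence} it runs for $\Ocal\big(n^2\log(\omega^k/\hat{\zeta})\big)$ interior-point iterations, each of cost $\tilde{\Ocal}\big(mn+m\sqrt{n}\,\kappa_{\Ahat}(\|\Ahat\|+\|\bhat\|)\big)$. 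Assembling the $\Ocal(L)$ refinement rounds, this per-call iteration count, and the per-iteration cost---and collecting the $L$-factors coming from $\log\omega^k$ and from the round count into the single displayed factor $L$, while absorbing all polylogarithmic terms---gives the stated polynomial bound $\tilde{\Ocal}_{n,\|\Ahat\|,\|\bhat\|}\big(n^2L\,[mn+m\sqrt{n}\,\kappa_{\Ahat}(\|\Ahat\|+\|\bhat\|)]\big)$, which is polynomial in the input size because $\kappa_{\Ahat},\|\Ahat\|,\|\bhat\|$ are all $2^{\Ocal(L)}$.

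The step I expect to be the main obstacle is exactly this last piece of bookkeeping: ensuring that the exponentially large $\omega^k$ enters only through a single polynomial factor of $L$ (via $\log\omega^k$), that this does not compound with the $\Ocal(L)$ refinement rounds into a higher power of $L$, and that the per-instance ``$\|\bhat\|$'' really stays controlled. This is precisely where the \emph{geometric} update $\nabla^k=2^{\lceil\log(1/\pi^k)\rceil}$ with $\pi^k\ge 1/(\rho\nabla^{k-1})$ is essential---it forces $\nabla^k\le(2\rho)^k$, hence $\omega^k\le 2^{\Ocal(L)}$ and $\log\omega^k=\Ocal(L)$ throughout---whereas a careless refinement schedule would let $\nabla^k$, and with it $\omega^k$, grow doubly exponentially and destroy polynomiality.
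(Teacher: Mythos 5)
Your proposal is correct and follows essentially the same route as the paper: the paper's own (one-line) proof likewise combines Theorem~\ref{theorem: IR complexity} for the $\Ocal(L)$ refinement rounds, the per-call bound \eqref{eq: detailed complexity of II-QIPM} specialized to the constant precision $\hat{\zeta}$, and Lemma~\ref{lemma: bound on omega} to control $\omega^k$. You simply make explicit the bookkeeping the paper leaves implicit — in particular that $\log\omega^k=\Ocal(L)$ enters only through the suppressed polylogarithmic factors of $\tilde{\Ocal}_{\omega}$, so it does not compound with the round count into an $L^2$ term.
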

%-------------------------------------------

%-------------------------------------------
\begin{proof}
The proof follows from combining the result of Theorem \ref{theorem: IR complexity}, the total time complexity of the proposed II-QIPM in~\eqref{eq: detailed complexity of II-QIPM}, and Lemma~\ref{lemma: bound on omega}.
\end{proof}
%-------------------------------------------
%
\begin{corollary}
 The simplified complexity of the proposed IR-II-QIPM using the \eqref{eq: modified normal equation} is $$\tilde{\Ocal}_{n,\phi}(n^{4}L\phi\kappa_{\Ahat}).$$ 
 For finding $\zeta$-optimal solution the complexity of IR-II-QIPM is
 $$\tilde{\Ocal}_{n,\phi,\omega,\frac{1}{\zeta}}(n^{4}\phi\kappa_{\Ahat}).$$
\end{corollary}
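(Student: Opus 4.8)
The plan is to derive both statements as straightforward consequences of Theorem~\ref{theorem: final complexity} (for the exact solution) and of the $\zeta$-truncated version of Algorithm~\ref{alg: iterative refinement II-QIPM} (for the $\zeta$-optimal solution), using only the structural facts already in hand: $A$ has full row rank, the preprocessed matrix $\Ahat=[\,I\mid A_{\Bhat}^{-1}A_{\Nhat}\,]$ contains an $m\times m$ identity block, and $\tilde{\Ocal}$ suppresses polylogarithmic factors in its subscripts.

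For the first bound, I would start from the estimate $\tilde{\Ocal}_{n,\|\Ahat\|,\|\bhat\|}\!\big(n^2L[\,mn+m\sqrt{n}\,\kappa_{\Ahat}(\|\Ahat\|+\|\bhat\|)\,]\big)$ of Theorem~\ref{theorem: final complexity} and bound the two bracket terms separately. Full row rank gives $m\le n$, hence $mn\le n^2$ and $m\sqrt{n}\le n^{3/2}$; the identity block in $\Ahat$ gives $\|\Ahat\|\ge 1$, so $\phi=\|\Ahat\|+\|\bhat\|\ge 1$, and $\kappa_{\Ahat}\ge 1$ always. Therefore $n^2L\cdot mn\le n^4L\le n^4L\,\phi\,\kappa_{\Ahat}$ and $n^2L\cdot m\sqrt{n}\,\kappa_{\Ahat}\phi\le n^{7/2}L\,\kappa_{\Ahat}\phi\le n^4L\,\phi\,\kappa_{\Ahat}$, so the sum is $\Ocal(n^4L\,\phi\,\kappa_{\Ahat})$; folding the remaining polylogs into the subscript gives $\tilde{\Ocal}_{n,\phi}(n^4L\,\phi\,\kappa_{\Ahat})$.

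For the $\zeta$-optimal bound, I would not iterate the refinement down to exact precision. By the same argument as in Theorem~\ref{theorem: IR complexity}, reaching accuracy $\zeta$ from the fixed working precision $\hat{\zeta}=10^{-2}$ needs only $\lceil\log\zeta/\log\hat{\zeta}\rceil=\Ocal(\log(1/\zeta))$ refinement rounds rather than $\Ocal(L)$. Each round invokes Algorithm~\ref{alg: II-QIPM} with the constant precision $\hat{\zeta}$, so the $\zeta^{-3}$ factor of \eqref{eq:final complexity of II-QIPM} collapses to the constant $\hat{\zeta}^{-3}$, the condition-number and right-hand-side bounds used in the derivation of \eqref{eq: detailed complexity of II-QIPM} are evaluated at $\mu^k\ge\hat{\zeta}/\gamma_1=\Ocal(1)$, and the growth of $\omega^k$ across rounds is controlled as in Lemma~\ref{lemma: bound on omega} and enters only through a polylogarithmic factor. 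Multiplying the per-round cost $\tilde{\Ocal}_{n,\phi,\omega}(n^4\phi\kappa_{\Ahat})$ by the $\Ocal(\log(1/\zeta))$ rounds and absorbing $\log(1/\zeta)$, $\omega$, and the polylogs into the $\tilde{\Ocal}$ subscripts yields $\tilde{\Ocal}_{n,\phi,\omega,1/\zeta}(n^4\phi\kappa_{\Ahat})$.

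The one point that needs care, rather than a genuine difficulty, is the bookkeeping of which quantities are permitted inside the $\tilde{\Ocal}$ polylog subscripts: $\kappa_{\Ahat}$ and $\phi$ can be polynomially large in the input and must stay explicit, whereas $\omega$ and $1/\zeta$ may be suppressed, so one has to check that no hidden $\kappa_{\Ahat}$- or $\phi$-dependence has been dropped and that the $m\le n$ substitutions have not silently altered the exponent on $n$. Everything else is the routine simplification described above.
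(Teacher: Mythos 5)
Your proposal is correct and follows essentially the same route the paper intends: the corollary is stated without proof as a direct simplification of Theorem~\ref{theorem: final complexity} and of \eqref{eq:final complexity of II-QIPM}, obtained exactly as you do via $m\le n$, $\phi\ge 1$, $\kappa_{\Ahat}\ge 1$, and the observation that a $\zeta$-truncated iterative refinement needs only $\Ocal(\log(1/\zeta))$ rounds of constant-precision ($\hat{\zeta}=10^{-2}$) calls to Algorithm~\ref{alg: II-QIPM}, with $\log(1/\zeta)$ and $\omega$ absorbed into the $\tilde{\Ocal}$ subscripts. Your explicit bookkeeping of which quantities may be suppressed in the polylogarithmic subscripts is, if anything, more careful than the paper's own presentation.
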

Thus, the iterative refinement procedure speeds QIPM up exponentially with respect to precision. It also addresses the condition number since the growing  condition number of the Newton system is replaced by the condition number of constant matrix $\Ahat$. 

The following section presents the results of our numerical results.

%=================================================
\section{Numerical Experiments} \label{sec: Numerical Experiments}
%=================================================

This section provides numerical results for the proposed II-QIPM using the QISKIT AQUA quantum simulator. 
Due to the limited number of qubits available in quantum computers and simulators, we use the NES, which has a smaller dimension.
The numerical results are run on a workstation with Dual Intel Xeon{\textregistered} CPU E5-2630~@ 2.20 GHz (20 cores) and 64 GB of RAM. 
For the computational experiments, we have developed a Python \texttt{qipm} package available for public use\footnote{\url{https://github.com/qcol-lu/qipm}}.

IBM has implemented a QLSA, which is similar to the HHL method, without block-encoding and QRAM.
With the current technology, the number of available qubits in gate-based quantum computers is limited to about one hundred.
One of the main issues with quantum computers is that they are not scalable compared to classical computers. 
Currently, larger NISQ\footnote{NISQ: Noisy Intermediate Scale Quantum} devices suffer more from the lack of precision. 
On the other hand, quantum simulator algorithms are computationally expensive. 
The maximum number of qubits in a quantum simulator is roughly similar to that in an actual quantum computer.
The main advantage of using a quantum simulator is that we do not need to handle the noise of NISQ devices. 
Despite this, we still need to deal with a high error level due to insufficient qubits and the high cost of QLSA+QTA in order to find high-quality solutions.

We used two post-processing procedures to improve the performance of the QLSA.
\begin{enumerate}[label=(\roman*)]
    \item We first scale the linear system solution such that it satisfies the  equality~$ \|Mz\|=\|\sigma\|$. 
    \item We also check the sign of the linear system solution by comparing it with its negate.
\end{enumerate}
The condition number increases the solution time.
Notably, the dimension of the linear system increases the solution time as a step function for building the quantum circuit.
The dimension of the system must be a power of two.
The simulator expands the system size to the smallest possible power of two.
Table~\ref{tab: condition number and size of quantum circuit} presents the number of qubits in quantum circuits for achieving the same precision.
The QISKIT simulator error oscillates between zero and the norm of an actual solution.  
In some cases, the IBM QISKIT simulator fails, and it reports a zero vector as a solution. 
While the simulator has a parameter for tuning precision, it violates the predefined precision.
We could not find any meaningful relationship between the precision of the simulator and the condition number of the coefficient matrix, and the dimension of the system.
In the following sections, we discuss the implementation of the proposed II-QIPM and IR-II-QIPM using the QISKIT simulator of QLSA and evaluate their performance.
\begin{table}[h]
\centering
\begin{tabular}{r|c c c c c c c c c c c }
\hline
$\kappa$ &  $2^1$ & $2^2$ & $2^3$ & $2^4$ &$2^5$ &$2^6$ &$2^7$ & $2^8$& $2^9$ & $2^{10}$ & $2^{11}$\\ \hline
   \#qubits &   6  & 6 & 7 & 8 & 9  & 10 & 11 & 12 & 12 & 13 & 15\\\hline
\end{tabular}
\caption{Size of the circuit for linear systems with different condition numbers.}
\label{tab: condition number and size of quantum circuit}
\end{table}

\subsection{Evaluation of the II-QIPM} \label{sec: num II-QIPM}
%------------------------------------------
%
As discussed in Section~\ref{sec: Quantum Linear Algebra}, QLSAs have better dependence on the size of the linear system than classical algorithms.
However, unreliable qubits cause a significant error in the solution of a linear system.
Here, we use the IBM QISKIT simulator to solve linear systems arising in the proposed II-QIPM.
For a fair comparison, we only consider those experiments that reach the desired precision.  
The running time of the quantum computers and quantum simulators is not comparable.
Thus, instead of running time, we use the number of iterations as a performance measure.

We use the random instance generator of \cite{generator}, where the norm of primal and dual solutions is set to two. 
The norms of the coefficient matrix and the RHS vector are set to one and two, respectively.
The condition number of the coefficient matrix is set to two.
The desired precision of the II-QIPM is equal to $0.1$.
It is worth noting that we set $\omega$ to 10 and all the instances are feasible.

 Figure~\ref{fig: effect of number of variables} illustrates II-QIPM performance on instances solved to the desired precision.
As illustrated in Figure~\ref{fig: effect of number of variables}, the number of iterations fluctuates with increasing the number of variables.
The noisy behavior of the QLSA and the relatively small-scaled instances justifies this observation.
Figure~\ref{fig: effect of precision} shows the number of iterations increases by increasing the desired precision.
The number of iterations also increases by the norm of the RHS vector, while $\omega$ has less impact on the number of iterations (see Figures~\ref{fig: effect of norm of rhs} and \ref{fig: effect of omega}).
As discussed in Section~\ref{sec: Total time complexity of II-QIPM}, the number of iterations is affected by the error of QLSAs.
Scaling the RHS vector norm increases the error of QLSA and, as a result, increases the number of iterations. 
Furthermore, the larger $\omega$ is, the further away the starting point is from the central optimal solution, i.e., the limit point of the central path.
It implies that increasing $\omega$ should increase the number of iterations.
However, the fast convergence of IPMs and the noisy behavior of the QLSA make it difficult to observe a direct relationship between $\omega$ and the number of iterations.

\begin{figure*}[ht]
\centering
    \begin{subfigure}[t]{0.45\textwidth}            
	\includegraphics[width =1.0 
	\linewidth,trim = {.1cm .1cm .1cm .1cm}, clip] {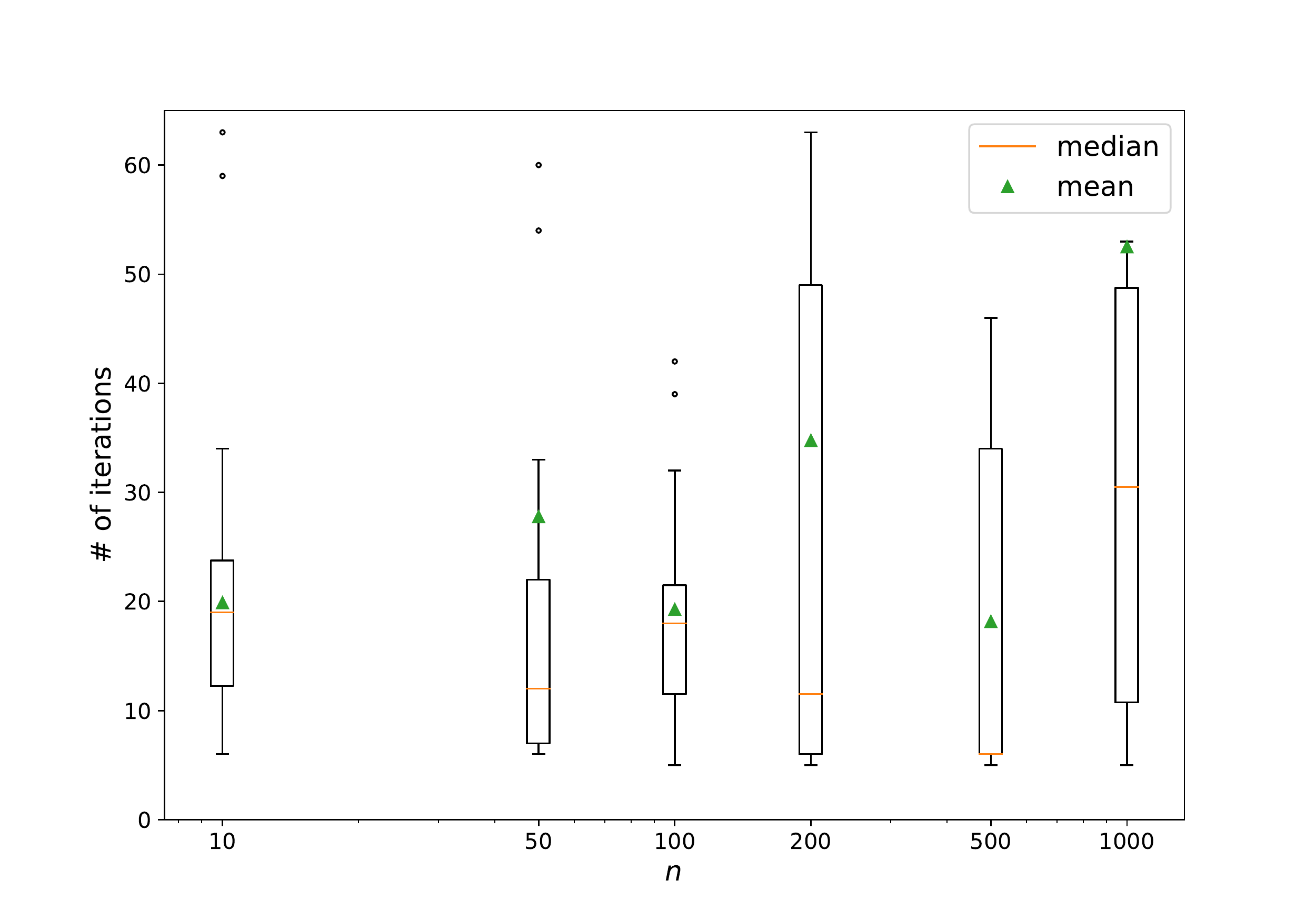}
	\caption{ Effect of number of variables.}
	\label{fig: effect of number of variables}
\end{subfigure}\hspace{3mm}
\begin{subfigure}[t]{0.45\textwidth}
		\centering
		\includegraphics[width =1.0 \linewidth, trim = {.1cm .1cm .1cm .1cm}, clip] {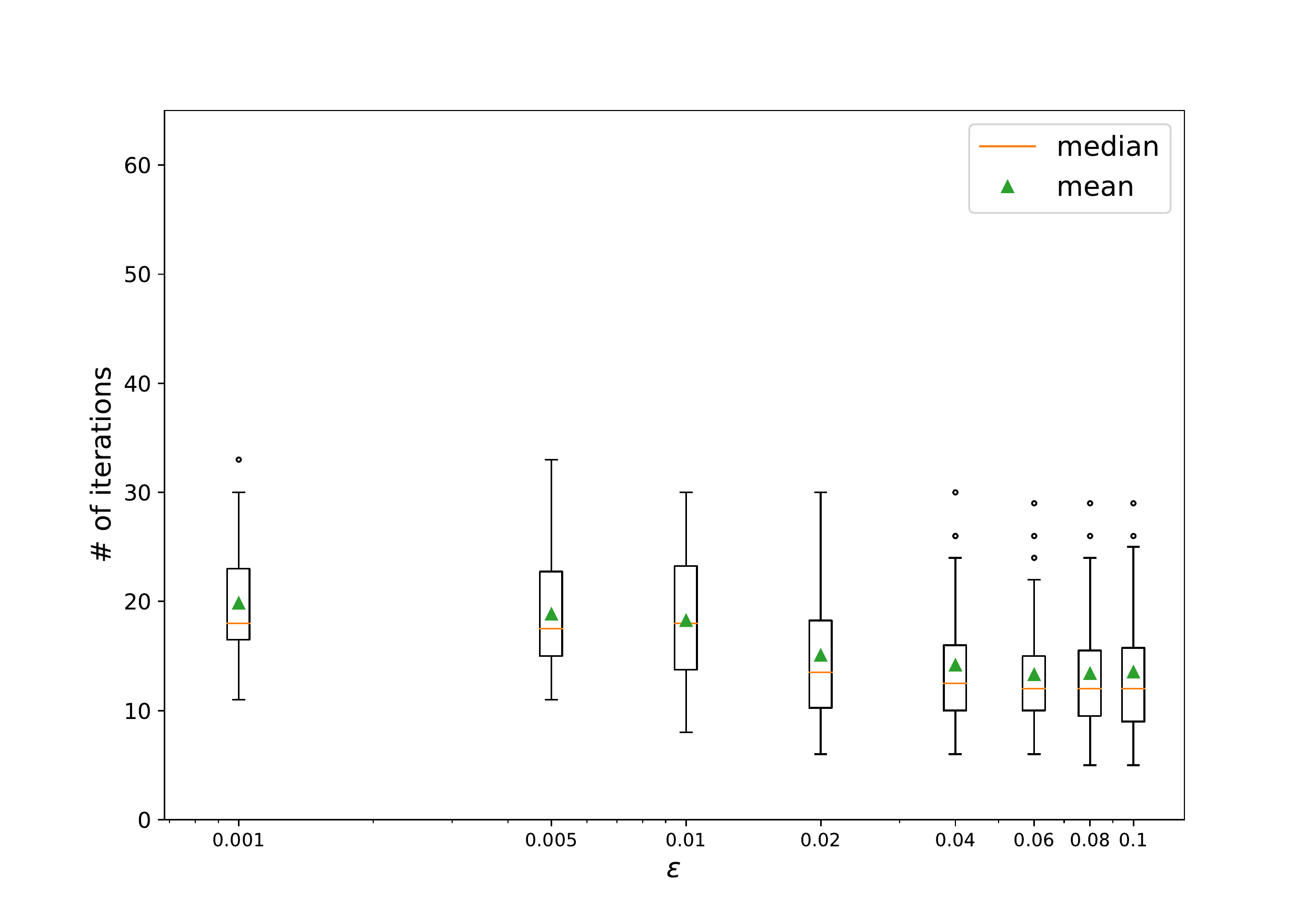}
		\caption{ Effect of the desired precision.}
		\label{fig: effect of precision}
	\end{subfigure}\hspace{3mm}

\begin{subfigure}[t]{0.45\textwidth}
	\centering
\includegraphics[width =1.0 \linewidth,trim = {.1cm .1cm .1cm .1cm}, clip]{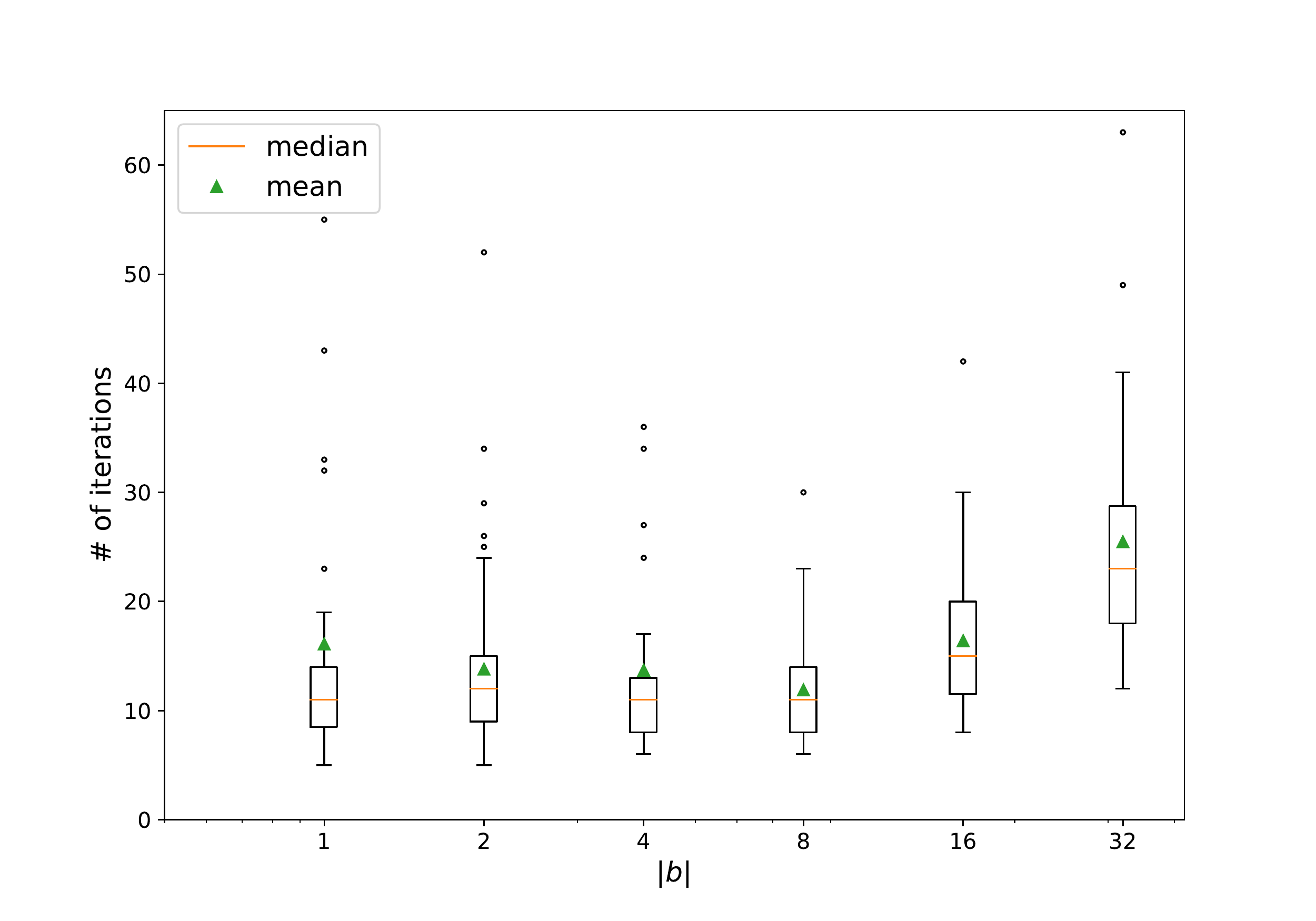}
	\caption{ Effect of norm of RHS vector.}
	\label{fig: effect of norm of rhs}
\end{subfigure}\hspace{3mm}
\begin{subfigure}[t]{0.45\textwidth}
	\centering
\includegraphics[width =1.0 \linewidth,trim = {.1cm .1cm .1cm .1cm}, clip]{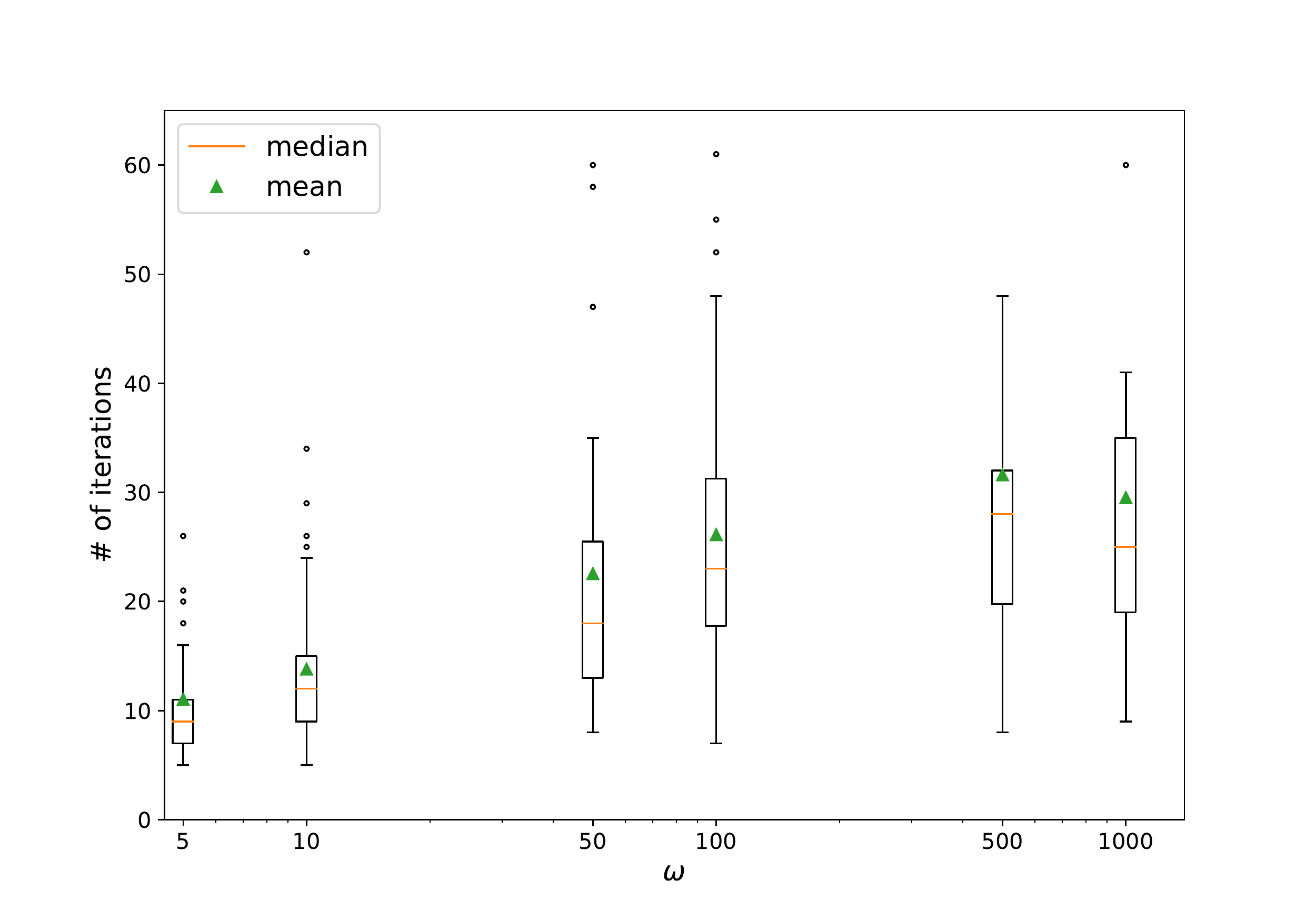}
	\caption{ Effect of $\omega$.}
	\label{fig: effect of omega}
\end{subfigure}\hspace{3mm}
\caption{Effect of different characteristics of a LO problem on the number of iterations.}\label{fig: ii-qipm number of iterations}
\end{figure*}

\begin{comment}
\begin{subfigure}[t]{0.45\textwidth}
	\centering
	\includegraphics[width =1.0 \linewidth, trim = {.1cm .1cm .1cm .1cm}, clip] {}
	\caption{Effect of norm of coefficient matrix.}
	\label{fig: effect of norm of coefficient matrix}
\end{subfigure}
\begin{subfigure}[t]{0.45\textwidth}
	\centering
	\includegraphics[width =1.0 \linewidth, trim = {.1cm .1cm .1cm .1cm}, clip] {}
	\caption{Effect of condition number of coefficient matrix.\\}
	\label{fig: effect of cond}
\end{subfigure}
\end{comment}

%------------------------------------------
\subsection{Evaluation of the IR-II-QIPM} \label{sec: num IR-II-QIPM}
%------------------------------------------
Even with smart parameter tuning and perfect implementation, the HHL simulator has limited precision. 
Here, we analyze the performance of the IR method when combined with the II-QIPM. 
One important parameter that highly affects the norm of the RHS vector and, consequently, the HHL simulator's error is $\omega$.
%
%As shown in Figure~\ref{fig: effect of omega}, as $\omega$ increases, the number of iterations of the II-QIPM slightly increases as well.
%
As the IR progresses, the norm of a solution to the LO subproblem increases.
Thus, we need to set $\omega$ based on the desired precision.
Here, we set $\omega$ to $1000$ to avoid infeasibility reported by the II-QIPM. 

Figure~\ref{fig: Effect of the LO precision on the IR-II-QIPM} shows the logarithmic error of IR-II-QIPM for different desired precisions.
We set the LO precision to 0.01 to evaluate the effect of the IR on the proposed II-QIPM.
Figure~\ref{fig: Effect of iterative refinement on the II-QIPM} illustrates that by using IR, we can reach higher precision.
The final precision in both the pure II-QIPM and the IR-II-QIPM is set to $10^{-4}$.
However, we can still not reach the desired precision in half of the instances because of the QLSA's error. 
\begin{figure}[ht]
	\centering
	\begin{subfigure}[t]{0.48\textwidth}
		\centering
		\includegraphics[width =1.0 \linewidth,trim = {.1cm .1cm .1cm .1cm}, clip] {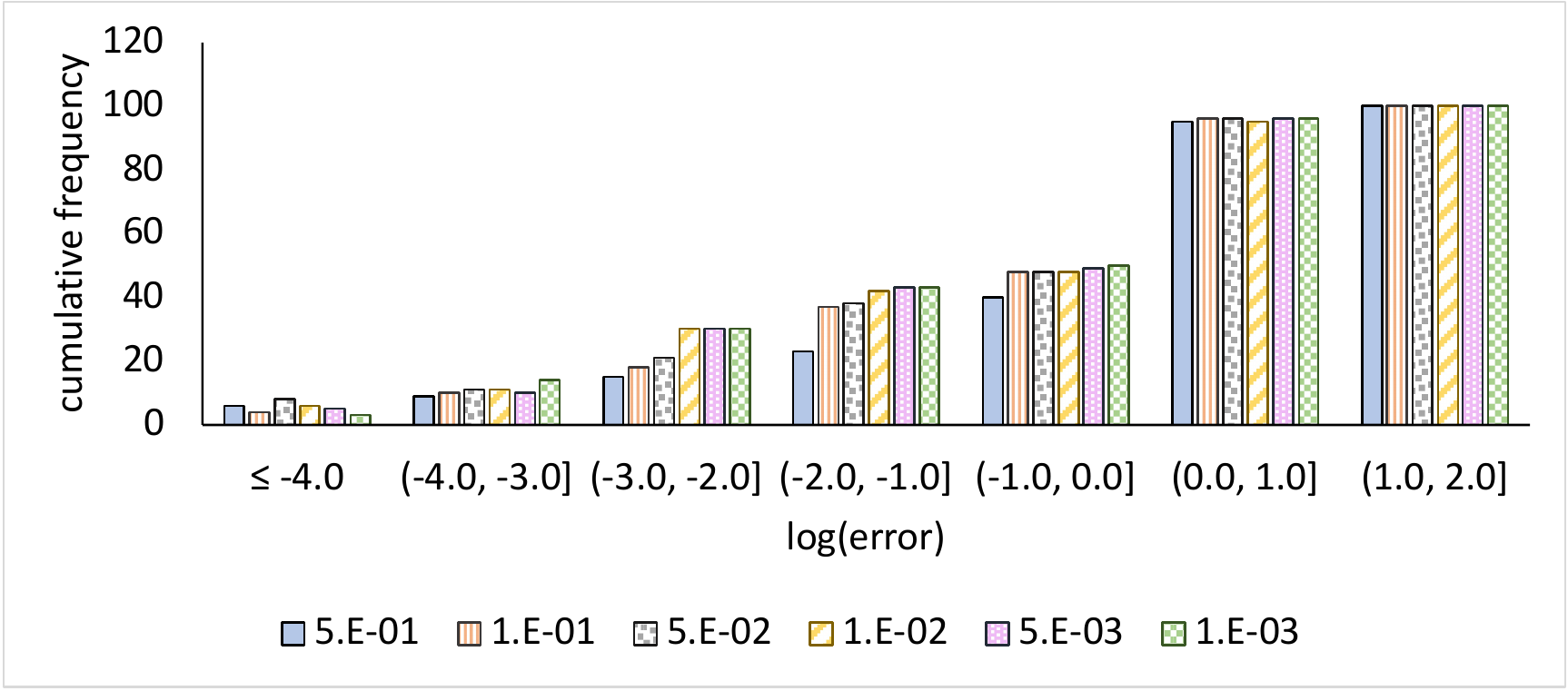}
		\caption{Effect of the LO precision on the IR-II-QIPM.}
		\label{fig: Effect of the LO precision on the IR-II-QIPM}
	\end{subfigure}\hspace{3mm}
	\begin{subfigure}[t]{0.48\textwidth}
		\centering
		\includegraphics[width =1.0 \linewidth, trim = {.1cm .1cm .1cm .1cm}, clip] {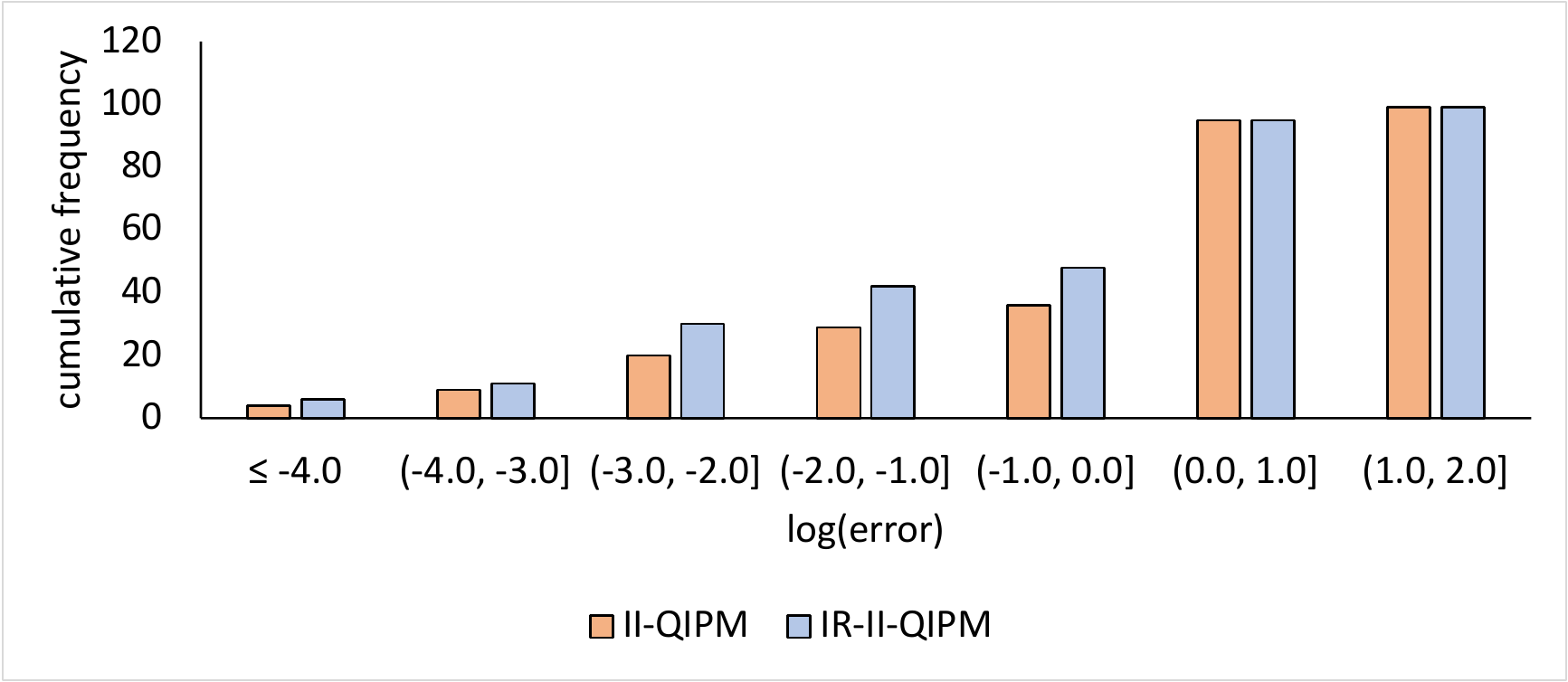}
		\caption{Effect of iterative refinement on the II-QIPM.}
		\label{fig: Effect of iterative refinement on the II-QIPM}
	\end{subfigure}
	\caption{Obtained precision versus the desired precision on the IR-II-QIPM.}
\end{figure}
By reducing the condition number, the iterative refinement method speeds up the solution time of Newton systems.
The condition number of the Newton system is bounded by $\Ocal(\kappa_{\Ahat})$. 
Figure~\ref{fig: conIR} shows how the condition number of the solved linear systems in IR-QIPM is bounded, while the condition number of the solved linear systems in QIPM without IR goes to infinity.
\begin{figure}[H]
    \centering
    \includegraphics[scale=0.55]{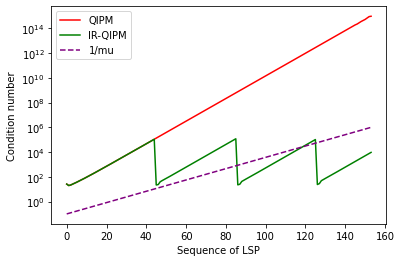}
    \caption{The condition number of linear systems in QIPM and IR-QIPM to get $10^{-6}$-precision solution for a primal-degenerate LO with 10 variables and 5 constraints}
    \label{fig: conIR}
\end{figure}

%
%=================================================
\section{Conclusion}\label{sec: conclusion}
%=================================================
This paper analyzes in detail the benefits and challenges of novel proposed Quantum Interior Point Methods.
Specifically, we analyze the use of QLSAs within IPMs and present a convergent II-QIPM.
Previous papers overlooked that when one uses QLSA with QTA, the solution of the Newton system is inexact, and the Newton system's condition number goes to infinity as IPMs approach the optimal set. 
Here, we also adopt the IR method to find an exact solution in polynomial time. 
After addressing issues in earlier QIPMs, we proved the correctness and convergence of the proposed II-QIPM and analyzed its performance, both theoretically and empirically.
 \begin{table}[ht]
    \centering
    \begin{tabular}{ |c|c|l| } 
 \hline
 Algorithm & 

 Simplified Complexity& Comment \\ 
 \hline
II-IPM with Cholesky  & 
$\Ocal\big(n^5L\big)$&\\
IR-II-IPM with CG&
$\Ocal\big(n^5L\kappa_{\Ahat}\big)$& Anticipated complexity and needing complete analysis.\\
QIPM of \citep{kerenidisParkas2020_quantum} & 
$\tilde{\Ocal}_{n}\big(n^{2}L\kappa^3 2^{4L}\big)$& Unattainable due to using exact IPM complexity.\\
QIPM of \citep{Casares2020_quantum}& 
$\tilde{\Ocal}_{n}\big(n^2L \kappa^2 2^{2L}\big)$& Unattainable due to using exact IPM complexity.\\
Proposed IR-II-QIPM&
$\tilde{\Ocal}_{n,\phi}\big(n^4L\phi\kappa_{\Ahat}\big)$&\\

\hline
\end{tabular}
    \caption{Time complexity of finding an exact solution}
    \label{tab: final complexity}
\end{table}

Table~\ref{tab: final complexity} compares the complexity result of the proposed IR-II-QIPM with an analogous classical II-IPM and two recent QIPMs.
The proposed IR-II-QIPM has polynomial complexity, while the other QIPMs cannot find an exact optimal solution in polynomial time. 
The exponential complexity of those QIPMs is caused by QLSA's error and the increasing condition number of the Newton system. 
At first glance, one can get the impression that the other two QIPMs have better complexity with respect to dimension. 
Still, these time complexities cannot be attained since they only contain the iteration complexity of exact IPMs, while these QIPMs solve the Newton system inexactly. 
They also need appropriate bounds for condition number $\kappa$ and precision $\epsilon$ based on their setting of QIPMs.
To correct the complexity of the QIPMs, at least $\Ocal(n^{1.5})$ must be added for inexact Newton steps and an appropriate upper bound for QLSAs' error. 

The complexity of the proposed IR-II-QIPM has better dependence on $n$ than both its classical counterparts and the realistic complexity of its quantum counterparts.
Still, the complexity of the proposed method depends on constants $\kappa_{\Ahat}$ and $\|\Ahat\|$.
One may apply scaling and preconditioning techniques for LO problems with large $\kappa_{\Ahat}$ to decrease $\kappa_{\Ahat}$.
\textcite{Monteiro2003_Convergence} used the speculated optimal partition instead of predefined basis $(\Bhat,\Nhat)$ as a preconditioned NES. 
The major problem of this approach is that the cost of calculating the precondition $A_{\Bhat}^{-1}$ in each iteration will destroy the quantum speed up in QIPMs. 
A viable research direction is to explore how to mitigate the effects of condition number and norm of Newton systems with a quantum-friendly approach.
In addition, the proposed iterative refinement approach can also be used to mitigate the impact of the condition number in classical IPMs with CG in solving the NES. 
The complexity analysis of Inexact IPM augmented with an iterative refinement method that uses classical iterative solvers for solving the Newton system is part of ongoing work.
However, the anticipated complexity of such algorithms, as reported in Table~\ref{tab: final complexity}, will have unfavorable dependence on dimension than the proposed IR-II-QIPM.
Because at each iteration, there are some matrix-matrix products to build the Newton system. 
By using QLSA+QTA, we can avoid the costs of classical matrix-matrix products and achieve polynomial quantum speedups.

We analyzed how the state-of-the-art QLSA+QTA can solve a classical linear system problem with quadratic dependence on dimension and linear dependence on the condition number and precision.
Although there are classical polynomial-time algorithms to solve LSPs in general form, QLSA-QTA is more scalable with regard to dimension than classical iterative and direct approaches.
The performance of QIPMs will be improved if faster QTAs and QLSAs are proposed, and it is worth investigating polynomial-time quantum algorithms to solve LSPs.

We can also investigate Inexact-Feasible IPMs that are more adaptable with QLSAs since Feasible IPMs have better complexity than infeasible IPMs.
Another direction can be developing pure QIPMs where all calculations happen in the quantum setting.
However, current NISQ devices have some limitations that prevent having a pure QIPM. 
Such a method will circumvent the use of QTA within and enjoy the fast QLSAs.

Our computational experiments show that the proposed II-QIPM embedded in the IR scheme with the QLSA simulator of QISKIT AQUA can solve problems with hundreds of variables to a user-defined precision. 
However, a classical computer can simulate only a limited number of qubits, limiting the number of constraints. 
Although there are Feasible IPMs with better complexity than the proposed IR-II-QIPM, this paper is a significant step toward using quantum solvers in classical methods correctly and efficiently. 
In addition, the IR-II-QIPM leverage the wide neighborhood of the central path and starts with an infeasible interior solution compared to F-IPMs.
We also demonstrated for the first time that LO problems could practically be solved using quantum solvers.

\begin{acknowledgements}
This work is supported by Defense Advanced Research Projects Agency as part of the project W911NF2010022: {\em The Quantum Computing Revolution and Optimization: Challenges and Opportunities}.
\end{acknowledgements}

\printbibliography

\end{document}